\documentclass[10pt]{article} 

\usepackage[margin=1.1in]{geometry}
\usepackage{amsmath,amssymb,amsfonts,amsthm}
\usepackage{graphicx}
\usepackage{float}
\usepackage{mathtools}
\usepackage{algorithm}
\usepackage[noend]{algpseudocode}
\usepackage{xcolor}
\usepackage{tikz}
\usetikzlibrary{arrows.meta,positioning,calc,fit,backgrounds}
\usepackage[authoryear, round]{natbib}
\usepackage{booktabs}
\usepackage{hyperref}
\usepackage[capitalize,nameinlink]{cleveref}
\usepackage{titlesec}
\usepackage{fancyhdr}
\usepackage{microtype}
\usepackage{parskip}

\newtheorem{theorem}{Theorem}
\newtheorem{proposition}{Proposition}
\newtheorem{corollary}{Corollary}
\newtheorem{lemma}{Lemma}

\newtheorem{remark}{Remark}

\crefname{algorithm}{Algorithm}{Algorithms}

\hypersetup{
  colorlinks=true,
  linkcolor=blue!70!black,
  citecolor=blue!50!black,
  urlcolor=blue!60!black
}

\providecommand{\E}{\mathbb{E}}
\providecommand{\Prob}{\mathbb{P}}
\newcommand{\R}{\mathbb R}

\newcommand{\cF}{\mathcal F}

\newcommand{\cG}{\mathcal G}
\newcommand{\cX}{\mathcal X}
\newcommand{\cY}{\mathcal Y}
\newcommand{\Dtr}{\mathcal D_{\mathrm{tr}}}

\newcommand{\pzeroD}[2]{p_0(#1\mid #2,\Dtr)}
\newcommand{\phD}[2]{p_h(#1\mid #2,\Dtr)}
\newcommand{\ZhD}[1]{Z_h(#1,\Dtr)}
\newcommand{\pzeroDobj}{p_0(\cdot\mid\cdot,\Dtr)}
\newcommand{\phDobj}{p_h(\cdot\mid\cdot,\Dtr)}
\newcommand{\NLPD}{\mathrm{NLPD}}
\newcommand{\KL}{D_{\mathrm{KL}}}
\newcommand{\defeq}{\coloneqq}
\newcommand{\Var}{\mathrm{Var}}

\titleformat{\section}{\large\bfseries\color{blue!60!black}}{{\thesection}}{1em}{}[\titlerule]
\titleformat{\subsection}{\normalsize\bfseries\color{blue!40!black}}{{\thesubsection}}{1em}{}
\titleformat{\subsubsection}{\normalsize\bfseries\color{blue!40!black}}{{\thesubsubsection}}{1em}{}

\begin{document}

\begin{center}
  {\LARGE\bfseries \textsf{Anytime-Valid Evidence for Prespecified\\[5pt]
  Predictive Corrections}}\\[2em]
  {\large Seungjin Choi}\\[1em]
  {\normalsize CROID Research and aSSIST University, Seoul, Korea}
  \end{center}

\vspace{0.5em}
\noindent\rule{\linewidth}{1.5pt}
\vspace{0.5em}

\begin{abstract}
A predictive correction is a prespecified modification of an existing
predictive distribution intended to reflect an anticipated change in future
outcomes given their inputs, motivated, for example, by instrument
recalibration, assay drift, or a known intervention. We study how to
accumulate anytime-valid evidence that such a correction predicts incoming
target outcomes better than the uncorrected source predictive distribution.
A fixed nonnegative tilt transforms the source predictive into a corrected
predictive, and the corrected-to-source predictive likelihood ratio is a
conditional e-value whose running product forms an e-process. This process
remains valid under optional stopping and arbitrary input sequences,
including adaptively selected ones, while its logarithm equals the cumulative
predictive log-score advantage of the correction. A conditional drift
decomposition characterizes evidence growth under an arbitrary target
predictive distribution, and a correction-dependent half-space identifies
misspecified target distributions for which the same false-confirmation bound
continues to hold. When the predictive likelihood ratio is strictly positive,
its reciprocal yields an anytime-valid refutation boundary, while an
overshoot identity explains why the realized null crossing probability may
fall below the nominal level. Label-shift, conditional mean and variance,
subgroup-specific, and exponential-family corrections arise as special cases.
Prespecified mixtures accommodate uncertainty over corrections, predictable
tilts permit adaptive betting, and beyond-tolerance comparisons target changes
large enough to justify action. Cross-family calculations and synthetic
experiments show that a boundary crossing supports the proposed correction
relative to its reference but does not uniquely identify the mechanism
responsible for the shift.
\end{abstract}

\newpage
\tableofcontents
\newpage

\section{Introduction}
\label{sec:intro}

A predictive correction is a prespecified modification of an existing predictive distribution intended to reflect an anticipated change in future outcomes given their inputs. Such a correction may be motivated by scientific knowledge, engineering analysis, or an operational policy before the outcomes used to evaluate it are observed. Distribution shift is more commonly treated as an estimation or adaptation problem: target data are used to identify what has changed and to learn an appropriate modification of the source model. This approach is natural when target data are plentiful and the shift is sufficiently identifiable. In small-batch scientific and operational settings, however, target outcomes may arrive sequentially, and a plausible correction may already be available before monitoring begins. The immediate question is then not how to estimate an unrestricted target distribution, but whether the proposed correction predicts the incoming outcomes better than retaining the original source predictive distribution.

We study this complementary problem of \emph{confirmation}. Let $\Dtr$ denote the source training data, and let $\pzeroD{y}{x}$ be a fixed source predictive distribution for an outcome $y$ at input $x$, conditional on $\Dtr$. Before observing the outcomes used for confirmation, the practitioner specifies a nonnegative tilt
\[
  h:\cX\times\cY\to[0,\infty)
\]
with finite and positive normalizer
\[
  0<\ZhD{x}
  =
  \int h(x,y)\pzeroD{y}{x}\,dy
  <\infty
\]
for every relevant input $x$. The tilt defines the corrected predictive distribution
\begin{equation}
  \phD{y}{x}
  =
  \frac{h(x,y)\pzeroD{y}{x}}{\ZhD{x}},
  \qquad
  \ZhD{x}
  =
  \int h(x,y)\pzeroD{y}{x}\,dy.
  \label{eq:tilted-general}
\end{equation}
In the primary setting, $h$, and hence $\phDobj$, is fixed before the testing outcomes are observed. The incoming pairs $(X_i,Y_i)$ are then used only to evaluate the proposed correction. Confirmation does not estimate the full target distribution, prove that the correction is exactly specified, or identify the mechanism responsible for the shift. It provides sequential evidence that the corrected predictive outpredicts the source predictive on the observed target stream.

Predictive corrections of this form arise naturally in applications. Calibration transfer or knowledge of a changed instrument may suggest a systematic modification of the predicted response \citep{WorkmanJJ2018as}. A new laboratory batch or assay protocol may suggest a change in conditional variability \citep{JohnsonWE2007biostatistics,LeekJT2010nrg}. A discrepancy model may relate simulator output to anticipated physical observations \citep{KennedyMC2001jrsssb}, while an operational policy may distinguish acceptable degradation from a change large enough to require intervention \citep{PodkopaevA2022iclr}. These examples share a crucial feature: the form of the correction is motivated independently of the outcomes subsequently used to confirm it. We later allow tilts that are updated predictably using past observations, but their interpretation is different. They define adaptive betting strategies against the source predictive null rather than confirmation of one fixed prespecified correction.

The main construction follows from a conditional likelihood-ratio argument. Let $\cF_{i-1}$ denote the information available before observing the $i$th outcome, including $\Dtr$ and the previous testing pairs. Under the source predictive null,
\begin{equation}
  H_0^{\mathrm{pred}}:\quad
  Y_i\mid X_i,\cF_{i-1}
  \sim
  \pzeroD{\cdot}{X_i}.
\end{equation}
Let $\mathcal P_0^{\mathrm{pred}}$ denote the class of all data-stream
distributions satisfying this conditional null, with the admissible input
mechanism left unrestricted. Thus, the input $X_i$ may be stochastic or
deterministic, depend on the past, or be selected by an adaptive experimental
design. Define the one-step corrected-to-source likelihood ratio
\begin{equation}
  e_i
  =
  \frac{\phD{Y_i}{X_i}}
       {\pzeroD{Y_i}{X_i}}
  =
  \frac{h(X_i,Y_i)}{\ZhD{X_i}}.
  \label{eq:e-general}
\end{equation}
Conditionally on $(\cF_{i-1},X_i)$, this ratio has expectation one under $H_0^{\mathrm{pred}}$. Thus, $e_i$ is a conditional e-value, and its running product
\[
  M_t=\prod_{i=1}^t e_i
\]
is a nonnegative martingale under the source predictive null. Ville's inequality \citep{VilleJ1939phd} therefore gives, for every $\alpha\in(0,1)$,
\begin{equation}
  \sup_{P\in\mathcal P_0^{\mathrm{pred}}}
  \Prob_P\!\left\{
    \sup_{t\ge0}M_t>\frac1\alpha
  \right\}
  \le\alpha.
  \label{eq:ville-intro}
\end{equation}
Consequently, the stopping rule
\[
  \tau^*
  =
  \inf\left\{
    t\ge1:M_t>\frac1\alpha
  \right\}
\]
controls the probability of falsely confirming the correction under continuous monitoring and optional stopping. This guarantee holds for arbitrary input sequences because validity is established conditionally on each realized input.

With the exact normalizer, $M_t$ is the sequential likelihood ratio between the source and corrected conditional predictive distributions. The procedure is therefore closely related to one-sided likelihood-ratio monitoring in Wald's sequential probability ratio framework \citep{WaldA1945aoms}, but the e-process formulation emphasizes continuously reportable evidence, optional-stopping validity, and compatibility with adaptive input selection. Moreover,
\[
  \log M_t
  =
  \sum_{i=1}^t
  \left\{
    \log\phD{Y_i}{X_i}
    -
    \log\pzeroD{Y_i}{X_i}
  \right\}
\]
is exactly the cumulative predictive log-score advantage of the corrected predictive over the source predictive. When $h(x,y)>0$ $\pzeroD{\cdot}{x}$-almost surely, the reciprocal likelihood ratio yields a corresponding anytime-valid boundary for refuting the corrected predictive in favor of the source predictive.

The construction generalizes anytime-valid confirmation of label-shift corrections \citep{Choi2026testing}. Under label shift, the tilt has the restricted form $h(x,y)=w(y)$. Allowing $h$ to depend jointly on $x$ and $y$ covers conditional mean and variance corrections, subgroup-specific corrections, and general exponential-family predictive tilts. This generality also clarifies the scope of the resulting evidence. Because \eqref{eq:e-general} conditions on the realized input, the process is deliberately insensitive to pure covariate shift when the conditional distribution $Y\mid X$ remains unchanged. Confirmation of a proposed covariate correction or of covariate balance is instead a problem concerning the marginal input distribution and is studied separately in \citet{Choi2026arxiv_avccs}.

The guarantee in \eqref{eq:ville-intro} is a statement about the source predictive null. A separate robustness question arises when the actual target predictive distribution is neither $\pzeroDobj$ nor $\phDobj$. Such a target does not satisfy the original null. Nevertheless, we show that the same false-confirmation bound continues to hold over a correction-dependent half-space of misspecified target predictive distributions characterized by a conditional moment inequality. This result does not enlarge or redefine the null; it identifies target misspecifications under which the original directional confirmation rule remains controlled. Outside this protected set, the correction may acquire positive drift under a structurally different target, or rare large e-values may increase the crossing probability even when the average log-growth is negative. This distinction is central to the interpretation of the method: a crossing supports the proposed correction relative to its reference predictive but does not uniquely identify the form or cause of the underlying distribution shift.

The main contributions are as follows:
\begin{itemize}

\item
\textbf{Anytime-valid relative predictive evidence.}
We construct a conditional e-process for a prespecified predictive correction. Its wealth is exactly the cumulative predictive likelihood ratio, and its log wealth is the cumulative log-score advantage of the corrected predictive over the source predictive. Validity holds under optional stopping and arbitrary input sequences, including adaptively selected ones.

\item
\textbf{Evidence growth and finite-horizon behavior.}
We derive an inputwise conditional drift decomposition and characterize long-run growth through differences in conditional Kullback--Leibler divergences. We also provide finite-horizon bounds on the probability that a positively drifting process has not yet crossed its confirmation boundary under i.i.d.\ sampling.

\item
\textbf{False-confirmation control under target misspecification.}
We identify a correction-dependent half-space of misspecified target predictive distributions under which the original e-process retains the same anytime-valid bound on the confirmation boundary. This geometry explains why generic proximity to the source in Kullback--Leibler, total-variation, or Hellinger distance does not by itself preserve the bound, and it separates automatically protected misspecifications from unsupported robustness claims.

\item
\textbf{Confirmation, refutation, and operational extensions.}
Strictly positive likelihood ratios support both an upper boundary for confirming the correction and a reciprocal boundary for refuting it. An overshoot identity explains why the realized null crossing probability may be below the nominal level. We further develop predictable tilts, prespecified mixtures and correction panels, and beyond-tolerance comparisons that control false confirmation over an entire tolerated region in regular one-parameter exponential families.

\item
\textbf{Structured corrections and cross-family diagnostics.}
We derive label-shift, conditional mean, conditional variance, subgroup-specific, and general exponential-family corrections as special cases. Analytic calculations and synthetic experiments distinguish within-family magnitude mismatch from cross-family structural mismatch, show how an unintended mechanism can generate evidence for a proposed correction, and clarify that relative predictive confirmation is not mechanism identification.

\end{itemize}

\section{Related Work}
\label{sec:related}

\paragraph{E-values, anytime-valid inference, and sequential likelihood ratios.}
E-values are nonnegative evidence measures, and predictable products of conditional e-values form e-processes that remain valid under optional stopping \citep{VovkV2021aos,ShaferG2021jrsssa,RamdasA2023ss}. Their time-uniform guarantee follows from Ville's inequality \citep{VilleJ1939phd}, while testing by betting and game-theoretic probability connect these ideas to martingales and prequential prediction \citep{DawidAP1984jrsssa,VovkV2005book,ShaferG2019book}. For a fixed source predictive and a fixed corrected predictive, the process studied here is a sequential likelihood ratio, and the two-boundary rule of \cref{sec:refutation} is closely related to Wald's sequential probability ratio test \citep{WaldA1945aoms}. Our contribution is therefore not a new likelihood-ratio test for a simple pair. It is to use a practitioner-specified predictive correction as the alternative, retain conditional validity under arbitrary and adaptively selected input sequences, and characterize the resulting evidence growth, robustness under target misspecification, and operational extensions.

\paragraph{Distribution shift, adaptation, and correction confirmation.}
Distribution shift includes covariate shift, label shift, concept shift, and more general joint shift \citep{Quinonero-CandelaJ2009book,SugiyamaM2012book_a}. Most methods estimate the target shift or adapt a source model using labeled or unlabeled target data. For example, label-shift methods estimate target class proportions using source classifiers or calibrated predictors \citep{LiptonZ2018icml,AlexandariAM2020icml,GargS2020neurips}. The task considered here is different: the correction is specified before the confirming outcomes are observed, and those outcomes are used to accumulate evidence for or against that correction rather than to estimate an unrestricted target distribution. \citet{Choi2026testing} developed this confirmation perspective for prespecified label-shift corrections. The present paper extends it from label-only tilts $h(x,y)=w(y)$ to corrections that may depend jointly on inputs and outcomes. Covariate-shift methods instead concern density ratios over the marginal input distribution \citep{SugiyamaM2012book_a}; anytime-valid confirmation of a proposed covariate correction and of covariate balance is treated separately in \citet{Choi2026arxiv_avccs}.

\paragraph{Sequential model monitoring and tolerated change.}
Sequential monitoring of deployed models is often framed as testing whether a risk, loss, or performance functional has crossed an unacceptable level. Anytime-valid procedures for monitoring such scalar functionals have been developed for deployment settings in which acceptable risk levels are specified in advance \citep{PodkopaevA2022iclr}. Our target is different: we compare two full conditional predictive distributions, namely a source predictive and a prespecified corrected predictive. The beyond-tolerance construction in \cref{sec:beyond-tolerance-correction} is operationally related to risk-threshold monitoring, but it compares an actionable predictive directly with a tolerated-boundary predictive and, within a regular one-parameter exponential family, controls false confirmation over the entire tolerated region.

\paragraph{Predictive scoring, calibration, and conformal prediction.}
The logarithm of the likelihood-ratio e-process is a cumulative difference in predictive log scores, linking the procedure to prequential evaluation of probabilistic forecasts \citep{DawidAP1984jrsssa}. The e-process adds an inferential guarantee to that comparison: under the source predictive null, the evidence can be monitored continuously without invalidating the error bound. This objective differs from predictive calibration and coverage. Conformal prediction provides finite-sample marginal coverage under exchangeability and has been adapted to covariate and label shift through weighted calibration \citep{VovkV2005book,TibshiraniR2019neurips,PodkopaevA2021uai,AngelopoulosAN2023ftml}. Conformal Bayes combines Bayesian predictive information with conformal calibration to obtain finite-sample marginal coverage without requiring the Bayesian predictive model to be correctly specified \citep{FongE2021neurips}. Under label shift, \citet{Choi2026eiml,Choi2026arxiv_scbc} use predictive tilting and weighted calibration to adapt conformal Bayes prediction sets. Those methods target prediction-set coverage or calibration, whereas the present paper uses the corrected-to-source predictive ratio to accumulate anytime-valid evidence for a proposed correction.

\section{General Predictive-Correction E-Process}
\label{sec:general}

In this section, we develop the general framework for evaluating a prespecified
predictive correction as target outcomes are observed sequentially.  We first
show that normalization of the correction tilt produces a corrected predictive
distribution whose ratio to the source predictive is a conditional e-value.
The resulting product e-process provides anytime-valid \emph{relative
confirmation}: a boundary crossing favors the corrected predictive over the
source predictive without estimating the full target distribution or
identifying the mechanism responsible for the change.  We then characterize
evidence growth under arbitrary target predictive distributions and study the
robustness of false-confirmation control under target misspecification by
identifying a correction-dependent protected class.  Finally, we develop reciprocal
refutation, overshoot accounting, safe numerical normalization, and predictable
corrections based on past observations.

\subsection{Problem Setup}
\label{sec:setup}

Let $\Dtr$ denote the source training data, and let $\pzeroD{y}{x}$ be a fixed
source predictive distribution for an outcome $y$ at input $x$, conditional on
$\Dtr$.  The conditioning on $\Dtr$ includes all model fitting, posterior
updating, calibration, and other training-stage operations completed before
monitoring begins.  A target stream consists of input--outcome pairs
\[
  (X_1,Y_1),(X_2,Y_2),\ldots.
\]

Let $\sigma(Z_1,\ldots,Z_k)$ denote the $\sigma$-algebra generated by the
random quantities $Z_1,\ldots,Z_k$; it represents all information that can be
determined from their observed values.  Define
\[
  \cF_0=\sigma(\Dtr)
\]
and, for $t\ge1$,
\[
  \cF_t
  =
  \sigma\!\left(
    \Dtr,X_1,Y_1,\ldots,X_t,Y_t
  \right).
\]
Thus, $\cF_t$ contains all information available after the first $t$ target
input--outcome pairs have been observed.  Before observing $Y_i$, define
\[
  \cG_i
  =
  \cF_{i-1}\vee\sigma(X_i),
\]
where $\vee$ denotes the smallest $\sigma$-algebra containing both
$\cF_{i-1}$ and $\sigma(X_i)$.  Hence, $\cG_i$ contains the past and the
current input $X_i$, but not its corresponding outcome $Y_i$.

We condition throughout on the realized training data $\Dtr$, equivalently
treating it as part of the initial $\sigma$-field.  All conditional densities
are defined with respect to a common dominating measure on $\cY$; the same
notation covers discrete outcomes, with integrals replaced by sums.  When a
conditional distribution $r_i(\cdot\mid X_i)$ is determined by the information
in $\cG_i$, the notation
\[
  \E_{Y\sim r_i(\cdot\mid X_i)}
  \bigl[u(X_i,Y)\bigr]
\]
means expectation with respect to that conditional distribution.  Equivalently,
under the specification
\[
  Y_i\mid\cG_i\sim r_i(\cdot\mid X_i),
\]
it denotes a version of
\[
  \E\!\left[u(X_i,Y_i)\mid\cG_i\right].
\]

The source predictive null is
\begin{equation}
  H_0^{\mathrm{pred}}:
  \qquad
  Y_i\mid\cG_i
  \sim
  \pzeroD{\cdot}{X_i}
  \quad\text{for every }i.
  \label{eq:section3-predictive-null}
\end{equation}
Let $\mathcal P_0^{\mathrm{pred}}$ denote the class of all data-stream
distributions satisfying \eqref{eq:section3-predictive-null}.  This class is
composite because the null specifies only the conditional distribution of
$Y_i$ given $\cG_i$ and leaves the input mechanism unrestricted.  The inputs
may be deterministic or stochastic, dependent on the past, or selected by an
adaptive experimental-design rule.

A nonnegative process $(E_t)_{t\ge0}$, adapted to $(\cF_t)_{t\ge0}$ and
initialized at $E_0=1$, is an \emph{e-process} for
$\mathcal P_0^{\mathrm{pred}}$ if
\begin{equation}
  \sup_{P\in\mathcal P_0^{\mathrm{pred}}}
  \E_P[E_\tau]
  \le1
  \label{eq:eprocess-definition}
\end{equation}
for every stopping time $\tau$.  For a possibly infinite stopping time, we use
the convention
\[
  E_\tau=E_\infty
  \defeq
  \liminf_{t\to\infty}E_t
  \qquad\text{on }\{\tau=\infty\}.
\]
Thus, continuous monitoring and data-dependent stopping do not increase the
expected evidence above one under any distribution in the null class.  Every
nonnegative supermartingale with initial value one is an e-process: apply
optional stopping to $\tau\wedge t$ and then use Fatou's lemma as
$t\to\infty$.  Exact normalization will make the primary wealth process below
a martingale under every $P\in\mathcal P_0^{\mathrm{pred}}$.

A predictive correction is specified by a jointly measurable nonnegative tilt
\[
  h:\cX\times\cY\to[0,\infty)
\]
with finite and positive normalizer
\begin{equation}
  0<\ZhD{x}
  =
  \int h(x,y)\pzeroD{y}{x}\,dy
  <\infty
  \label{eq:normalizer}
\end{equation}
for every relevant input $x$.  We assume that the source predictive and the
tilt are measurable so that $x\mapsto\ZhD{x}$ is measurable.  The tilt defines
the corrected predictive $\phDobj$ in \eqref{eq:tilted-general}.  In the
primary setting, $h$, and hence $\phDobj$, is fixed before monitoring begins;
\cref{sec:predictable} later allows predictable updates based on past
observations.  The inferential object is the comparison between the corrected
and source predictive distributions, not estimation of the unknown target
distribution itself.

\subsection{Anytime-Valid Relative Confirmation of a Predictive Correction}
\label{sec:eprocess-construction}

Suppose that the practitioner has prespecified $\phDobj$ before observing the
target outcomes.  The operational question is whether the accumulating
outcomes provide sufficient evidence to reject continued use of
$\pzeroDobj$ in the direction represented by $\phDobj$.  The corrected
predictive determines the direction in which evidence against
$H_0^{\mathrm{pred}}$ is accumulated; it is not itself assumed to be the true
target predictive.

\bigskip
\begin{lemma}[Normalized tilt as a predictive likelihood ratio]
\label{lem:ratio}
For each \(x\) satisfying \eqref{eq:normalizer}, \(\phD{\cdot}{x}\) is a
probability distribution absolutely continuous with respect to
\(\pzeroD{\cdot}{x}\), and
\begin{equation}
  \frac{\phD{y}{x}}{\pzeroD{y}{x}}
  =
  \frac{h(x,y)}{\ZhD{x}}
  \qquad
  \pzeroD{\cdot}{x}\text{-a.s.}
  \label{eq:normalized-likelihood-ratio}
\end{equation}
Moreover,
\begin{equation}
  \E_{Y\sim\pzeroD{\cdot}{x}}
  \!\left[
    \frac{h(x,Y)}{\ZhD{x}}
  \right]
  =1.
  \label{eq:normalized-ratio-mean-one}
\end{equation}
\end{lemma}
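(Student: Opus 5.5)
The plan is to verify the three claims of \cref{lem:ratio} directly from the definition \eqref{eq:tilted-general}, working at a fixed input $x$ satisfying \eqref{eq:normalizer} and with respect to the common dominating measure on $\cY$ (sums in the discrete case). First, I will check that $\phD{\cdot}{x}$ is a probability density. It is nonnegative because $h\ge0$, $\pzeroD{\cdot}{x}\ge0$ and $\ZhD{x}>0$. It is measurable in $y$ because $h$ is jointly measurable. Its total mass is
\[
  \int \frac{h(x,y)\pzeroD{y}{x}}{\ZhD{x}}\,dy=\frac{\ZhD{x}}{\ZhD{x}}=1 .
\]
Here finiteness and positivity of the normalizer are exactly what make the division legitimate.

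Second, absolute continuity follows pointwise. If $A$ is a measurable set with $\int_A \pzeroD{y}{x}\,dy=0$, then $\pzeroD{y}{x}=0$ for almost every $y\in A$ with respect to the dominating measure. Hence the integrand $h(x,y)\pzeroD{y}{x}/\ZhD{x}$ also vanishes almost everywhere on $A$, so the corrected measure of $A$ is zero. The Radon--Nikodym derivative of the corrected distribution with respect to the source distribution is therefore $h(x,\cdot)/\ZhD{x}$. To obtain \eqref{eq:normalized-likelihood-ratio} as a ratio of densities, I will note that on the set $\{y:\pzeroD{y}{x}>0\}$, which carries full $\pzeroD{\cdot}{x}$-probability, one can simply divide \eqref{eq:tilted-general} by $\pzeroD{y}{x}$. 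The complement of this set is $\pzeroD{\cdot}{x}$-null, which explains the almost-sure qualifier.

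Third, for the mean-one identity \eqref{eq:normalized-ratio-mean-one}, I will write the expectation as
\[
  \int \frac{h(x,y)}{\ZhD{x}}\,\pzeroD{y}{x}\,dy .
\]
This integral is well defined in $[0,\infty]$ because the integrand is nonnegative, and by the definition of $\ZhD{x}$ it equals $\ZhD{x}/\ZhD{x}=1$. Equivalently, it is the total mass of $\phD{\cdot}{x}$ computed in the first step.

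There is no real obstacle in this argument. The only point requiring care is the measure-theoretic bookkeeping: the ratio is only defined where $\pzeroD{y}{x}>0$, so identity \eqref{eq:normalized-likelihood-ratio} must be stated almost surely. Nonnegativity is what allows the integrals to be taken without separate integrability assumptions. I will also remark that, because the normalizer is assumed finite and positive for every relevant $x$, these pointwise statements hold uniformly over those inputs. This is what the later conditional e-value argument at the realized $X_i$ will require, together with the assumed measurability of $x\mapsto\ZhD{x}$.
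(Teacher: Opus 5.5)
Your proposal is correct and follows essentially the same route as the paper's proof in \cref{app:proof-ratio}: unit mass from the definition of the normalizer, absolute continuity because the corrected density is the source density times the nonnegative factor $h(x,\cdot)/\ZhD{x}$, that factor as the Radon--Nikodym derivative, and mean one as the total mass. Your explicit restriction to the set where the source density is positive, and your null-set argument for absolute continuity, spell out details the paper leaves implicit.
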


\begin{proof}[Proof sketch]
Substituting the definition of \(\phD{\cdot}{x}\) and using
\eqref{eq:normalizer} gives unit integral, the likelihood-ratio identity, and
expectation one under \(\pzeroD{\cdot}{x}\).  See
\cref{app:proof-ratio} for details.
\end{proof}

\paragraph{Canonical evidence factor outside the source support.}
The likelihood-ratio identity in \cref{eq:normalized-likelihood-ratio} is an
$\pzeroD{\cdot}{x}$-almost-sure statement, as is standard for a
Radon--Nikodym derivative.  Throughout the paper we therefore take
\begin{equation}
  e(x,y)\defeq\frac{h(x,y)}{\ZhD{x}}
  \label{eq:canonical-e-factor}
\end{equation}
as the canonical measurable version of the one-step evidence factor.  Under
the source predictive null it coincides almost surely with the
corrected-to-source predictive likelihood ratio.  It remains well defined for
a target distribution that is not dominated by the source predictive.  The
literal predictive-likelihood-ratio, log-score, and KL interpretations below
are invoked only when the relevant densities and logarithms are well defined.

\bigskip
\begin{proposition}[Per-observation relative e-value]
\label{prop:evalue}
For the \(i\)th observation, define the canonical one-step factor
\begin{equation}
  e_i
  \defeq
  \frac{h(X_i,Y_i)}{\ZhD{X_i}}.
  \label{eq:one-step-evalue}
\end{equation}
Under the source predictive null, \cref{lem:ratio} gives
$e_i=\phD{Y_i}{X_i}/\pzeroD{Y_i}{X_i}$ almost surely.  For every
\(P\in\mathcal P_0^{\mathrm{pred}}\),
\begin{equation}
  \E_P[e_i\mid\cG_i]=1.
  \label{eq:conditional-evalue}
\end{equation}
Hence \(e_i\) is a conditional e-value for the source predictive null.
\end{proposition}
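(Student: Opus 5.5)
The argument conditions on \(\cG_i\) and reduces everything to the fixed-input identity \eqref{eq:normalized-ratio-mean-one} of \cref{lem:ratio}.

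\emph{Measurability.} Since \(h\) is jointly measurable and \(x\mapsto\ZhD{x}\) is measurable with values in \((0,\infty)\), the map \(e(x,y)=h(x,y)/\ZhD{x}\) from \eqref{eq:canonical-e-factor} is a nonnegative jointly measurable function. Hence \(e_i=e(X_i,Y_i)\) is a nonnegative \(\cF_i\)-measurable random variable, and its conditional expectation given \(\cG_i\) is well defined in \([0,\infty]\) without any integrability assumption.

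\emph{Reduction to a fixed input.} Fix \(P\in\mathcal P_0^{\mathrm{pred}}\). Under \eqref{eq:section3-predictive-null}, the regular conditional distribution of \(Y_i\) given \(\cG_i\) is \(\pzeroD{\cdot}{X_i}\). Because \(X_i\) is \(\cG_i\)-measurable, the disintegration (``freezing'') lemma for nonnegative jointly measurable functions gives
\[
  \E_P[e(X_i,Y_i)\mid\cG_i]
  =
  \int e(X_i,y)\,\pzeroD{y}{X_i}\,dy
  \qquad P\text{-a.s.}
\]
For a fixed \(x\), the integral \(\int e(x,y)\pzeroD{y}{x}\,dy\) equals \(1\) by \eqref{eq:normalized-ratio-mean-one}, and this is exactly where the normalization \eqref{eq:normalizer} enters. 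Evaluating at \(x=X_i\) yields \eqref{eq:conditional-evalue}. The conclusion does not depend on how \(X_i\) was generated, which is why it holds uniformly over the composite class \(\mathcal P_0^{\mathrm{pred}}\), including adaptively chosen inputs.

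\emph{A.s. likelihood-ratio statement and conclusion.} \Cref{lem:ratio} gives \(h(x,\cdot)/\ZhD{x}=\phD{\cdot}{x}/\pzeroD{\cdot}{x}\) on a set of full \(\pzeroD{\cdot}{x}\)-measure. Under the null, \(Y_i\) given \(\cG_i\) has law \(\pzeroD{\cdot}{X_i}\), so applying the same disintegration to the indicator of the exceptional set shows that the set is hit with conditional probability zero. Hence \(e_i=\phD{Y_i}{X_i}/\pzeroD{Y_i}{X_i}\) almost surely. Finally, nonnegativity together with conditional mean one is precisely the definition of a conditional e-value.

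The main technical obstacle is justifying the substitution of the random \(X_i\) into the fixed-\(x\) identity. This needs a regular conditional distribution and joint measurability of \(e\). I would handle it with the standard freezing lemma, proved first for products of indicators \(\mathbf 1_A(x)\mathbf 1_B(y)\), then extended by a monotone class argument, and then by monotone convergence to nonnegative \(e\). The assumption that \(\ZhD{x}\in(0,\infty)\) holds only for every \emph{relevant} \(x\) should be read as holding on a set containing \(X_i\) almost surely.
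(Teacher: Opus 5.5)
Your proposal is correct and follows the paper's own argument. Both condition on $\cG_i$, use $Y_i\mid\cG_i\sim\pzeroD{\cdot}{X_i}$ with $X_i$ and $\ZhD{X_i}$ held fixed, and conclude from the normalization identity of \cref{lem:ratio}. The freezing-lemma measurability details and the null-set argument for the almost-sure likelihood-ratio identity are sound refinements of steps the paper leaves implicit.
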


\begin{proof}[Proof sketch]
Condition on \(\cG_i\) and apply \cref{lem:ratio} under
\eqref{eq:section3-predictive-null}.  See
\cref{app:proof-evalue}.
\end{proof}

\bigskip
\begin{theorem}[Anytime-valid relative confirmation of a predictive correction]
\label{thm:eprocess}
Let \(M_0=1\) and
\begin{equation}
  M_t
  \defeq
  \prod_{i=1}^t e_i
  =
  \prod_{i=1}^t
  \frac{h(X_i,Y_i)}{\ZhD{X_i}}.
  \label{eq:relative-eprocess}
\end{equation}
Under every $P\in\mathcal P_0^{\mathrm{pred}}$, this process agrees almost
surely at every finite time with the corrected-to-source predictive
likelihood-ratio product.  For every \(P\in\mathcal P_0^{\mathrm{pred}}\),
\((M_t)_{t\ge0}\) is a nonnegative martingale with respect to
\((\cF_t)_{t\ge0}\).  Hence it is an e-process for
\(\mathcal P_0^{\mathrm{pred}}\), and for every \(\alpha\in(0,1)\),
\begin{equation}
  \sup_{P\in\mathcal P_0^{\mathrm{pred}}}
  \Prob_P\!\left\{
    \sup_{t\ge0}M_t>\frac{1}{\alpha}
  \right\}
  \le\alpha.
  \label{eq:anytime-relative-bound}
\end{equation}
Therefore, the stopping time
\begin{equation}
  \tau^*
  \defeq
  \inf\left\{
    t\ge1:M_t>\frac{1}{\alpha}
  \right\}
  \label{eq:relative-stopping-time}
\end{equation}
satisfies
\begin{equation}
  \sup_{P\in\mathcal P_0^{\mathrm{pred}}}
  \Prob_P(\tau^*<\infty)
  \le\alpha.
  \label{eq:relative-stopping-bound}
\end{equation}
The guarantee holds under continuous monitoring and for every admissible input
mechanism, including adaptive selection based on past observations.
\end{theorem}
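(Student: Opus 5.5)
The plan is to prove each claim of \cref{thm:eprocess} in order, relying mainly on \cref{lem:ratio} and \cref{prop:evalue}.

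\textbf{Likelihood-ratio identification.} Fix $P\in\mathcal P_0^{\mathrm{pred}}$. By \cref{lem:ratio}, the set
\[
  N_x=\{y:\ \phD{y}{x}/\pzeroD{y}{x}\neq h(x,y)/\ZhD{x}\}
\]
is $\pzeroD{\cdot}{x}$-null. Under \eqref{eq:section3-predictive-null}, the conditional probability $P(Y_i\in N_{X_i}\mid\cG_i)$ therefore vanishes. Taking expectations gives $e_i=\phD{Y_i}{X_i}/\pzeroD{Y_i}{X_i}$ almost surely. Joint measurability of the densities makes the relevant set measurable. A finite union of null events is null, so the two products agree almost surely at each finite $t$.

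\textbf{Martingale property.} $M_t$ is nonnegative because $h\ge0$ and $\ZhD{x}>0$. It is $\cF_t$-measurable because $x\mapsto\ZhD{x}$ is measurable. $M_{t-1}$ is $\cF_{t-1}\subseteq\cG_t$-measurable and nonnegative, so the tower property for nonnegative variables gives
\[
  \E_P[M_t\mid\cF_{t-1}]
  =\E_P\bigl[M_{t-1}\,\E_P[e_t\mid\cG_t]\bigm|\cF_{t-1}\bigr]
  =M_{t-1},
\]
where the inner conditional expectation equals one by \cref{prop:evalue}. Conditional expectations of nonnegative variables are always defined, so no integrability hypothesis is needed beforehand. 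Induction from $M_0=1$ then gives $\E_P[M_t]=1$, so $M_t$ is integrable and is a genuine martingale.

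\textbf{From martingale to the stated bounds.} The e-process property follows from the remark in \cref{sec:setup}. Optional stopping at $\tau\wedge t$ gives $\E_P[M_{\tau\wedge t}]=1$. By the martingale convergence theorem, $M_t\to M_\infty$ almost surely. Fatou's lemma as $t\to\infty$ then gives $\E_P[M_\tau]\le1$, consistent with the $\liminf$ convention. For \eqref{eq:anytime-relative-bound}, apply Ville's inequality $P(\sup_t M_t\ge c)\le 1/c$ with $c=1/\alpha$, using $\{\sup_t M_t>1/\alpha\}\subseteq\{\sup_t M_t\ge 1/\alpha\}$. One can also reprove it directly. Set $\sigma=\inf\{t:M_t>1/\alpha\}$. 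Markov's inequality applied to $M_{\sigma\wedge T}$ yields $P(\sigma\le T)\le\alpha$, and letting $T\to\infty$ gives the bound. Since $M_0=1<1/\alpha$, we have $\{\tau^*<\infty\}=\{\sup_{t\ge0}M_t>1/\alpha\}$, which is \eqref{eq:relative-stopping-bound}. Every bound holds for each $P$, so it survives the supremum over the class.

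\textbf{Main obstacle.} The main subtlety is that the argument never uses the law of the inputs. Only the conditional law of $Y_i$ given $\cG_i$ enters, through \cref{prop:evalue}. This is why adaptive or deterministic input selection is covered: $X_i$ may be any $\cF_{i-1}$-dependent choice, possibly with external randomization. For randomized designs, one adds the randomness to $\cF_{i-1}$. This is harmless provided the null conditional law is unchanged given the enlarged $\cG_i$, which is what \eqref{eq:section3-predictive-null} asserts. I would state this enlargement remark explicitly and otherwise keep the computations routine.
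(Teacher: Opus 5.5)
Your proposal is correct and follows essentially the same route as the paper. You condition on $\cG_t$ and invoke \cref{prop:evalue} to get $\E_P[M_t\mid\cF_{t-1}]=M_{t-1}$, then apply Ville's inequality and the event identity $\{\tau^*<\infty\}=\{\sup_{t\ge0}M_t>1/\alpha\}$; your additional checks (almost-sure identification with the likelihood-ratio product, integrability by induction, optional stopping plus Fatou for the e-process property) fill in details the paper leaves implicit. One small caution: your closing remark about enlarging $\cF_{i-1}$ by design randomization is unnecessary and slightly misstated, since \eqref{eq:section3-predictive-null} conditions only on $\cG_i=\cF_{i-1}\vee\sigma(X_i)$ and so does not itself assert the conditional law given an enlarged $\sigma$-field, whereas the argument with the paper's filtration already covers randomized designs.
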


\begin{proof}[Proof sketch]
Fix \(P\in\mathcal P_0^{\mathrm{pred}}\).  By
\cref{prop:evalue} and iterated conditional expectation,
\[
  \E_P[e_t\mid\cF_{t-1}]
  =
  \E_P\!\left[
    \E_P[e_t\mid\cG_t]
    \,\middle|\,
    \cF_{t-1}
  \right]
  =1.
\]
Therefore,
\(\E_P[M_t\mid\cF_{t-1}]=M_{t-1}\), so \((M_t)\) is a
nonnegative \(P\)-martingale.  Since this holds for every
\(P\in\mathcal P_0^{\mathrm{pred}}\), the process is an e-process for the
whole null class.  Ville's inequality gives
\eqref{eq:anytime-relative-bound}, and
\eqref{eq:relative-stopping-bound} follows from
\[
  \{\tau^*<\infty\}
  =
  \left\{
    \sup_{t\ge0}M_t>\frac1\alpha
  \right\}.
\]
See \cref{app:proof-eprocess}.
\end{proof}

\paragraph{Classical likelihood-ratio monitoring as a special case.}
The predictive-ratio framework includes ordinary conditional
likelihood-ratio monitoring.  If the source and corrected predictives are two
fixed, fully specified conditional likelihoods,
\[
  \pzeroD{y}{x}=f(y\mid x,\theta_0),
  \qquad
  \phD{y}{x}=f(y\mid x,\theta_1),
\]
then
\[
  e_i
  =
  \frac{f(Y_i\mid X_i,\theta_1)}
       {f(Y_i\mid X_i,\theta_0)}
\]
is the classical one-step likelihood ratio, and \(M_t\) is its sequential
product.  The predictive formulation is more general because it also permits
posterior predictive distributions, fitted predictive distributions treated
as fixed conditional on \(\Dtr\), and corrections specified directly at the
level of the outcome distribution.  When parameters are estimated from
\(\Dtr\), the resulting guarantee is conditional on the fitted source
predictive; it does not automatically extend to an unresolved composite
parametric null.

\paragraph{Sequential test enabled by Theorem~\ref{thm:eprocess}.}
Theorem~\ref{thm:eprocess} gives the practitioner an explicit continuously
monitored test of \(H_0^{\mathrm{pred}}\).  Starting from \(M_0=1\), after a
new target input \(X_t\) and outcome \(Y_t\) are observed, update
\begin{equation}
  M_t
  =
  M_{t-1}
  \frac{\phD{Y_t}{X_t}}{\pzeroD{Y_t}{X_t}}.
  \label{eq:sequential-update}
\end{equation}
If \(M_t\le1/\alpha\), monitoring may continue and the process is updated
again when the next target outcome becomes available.  At the first time
\(M_t>1/\alpha\), stop and reject the source predictive null.  No monitoring
horizon needs to be fixed in advance, the process may be inspected after every
observation, and the stopping decision may depend on the entire observed
history.  Inputs may also be selected adaptively.  Despite these freedoms,
if \(H_0^{\mathrm{pred}}\) is true, the probability of ever rejecting it is at
most \(\alpha\).

The formal output of this test is therefore an anytime-valid rejection of the
source predictive null.  Because every update in \eqref{eq:sequential-update}
is the prespecified likelihood ratio of \(\phDobj\) to \(\pzeroDobj\), the
rejection has a directional interpretation: the incoming target outcomes have
provided sufficient sequential evidence favoring \(\phDobj\) over
\(\pzeroDobj\).  We call this conclusion \emph{anytime-valid relative
confirmation} of the proposed correction.  The word ``relative'' emphasizes
that the conclusion compares the corrected predictive with the source
predictive; the theorem does not treat \(\phDobj\) as a null hypothesis to be
accepted.

\paragraph{Log-score representation of the evidence.}
Let
\[
  \NLPD_i(p)
  =
  -\log p(Y_i\mid X_i,\Dtr)
\]
denote the negative log-predictive density at observation \(i\).  Whenever
the two predictive log densities are finite at the observed outcome, the
one-step log evidence equals the difference in predictive log scores:
\begin{align}
  \log e_i
  &=
  \log\phD{Y_i}{X_i}
  -
  \log\pzeroD{Y_i}{X_i}
  \nonumber\\
  &=
  \NLPD_i(\pzeroDobj)
  -
  \NLPD_i(\phDobj).
  \label{eq:nlpd-gap}
\end{align}
Consequently,
\begin{equation}
  \log M_t
  =
  \sum_{i=1}^t
  \left\{
    \NLPD_i(\pzeroDobj)
    -
    \NLPD_i(\phDobj)
  \right\}.
  \label{eq:nlpd-process}
\end{equation}
Thus, on paths for which these predictive log scores are finite,
\(\log M_t\) is the cumulative predictive log-score advantage of the corrected
predictive over the source predictive on the observed target stream.  Under
the source predictive null this qualification holds almost surely whenever
the one-step log evidence is finite.  At a boundary crossing,
\[
  \log M_{\tau^*}>\log(1/\alpha),
\]
so the corrected predictive has accumulated more than \(\log(1/\alpha)\) nats
of observed log-score advantage.  This identity explains why rejection of the
source predictive null can be interpreted as relative evidence for the
prespecified correction.

\paragraph{What relative confirmation does and does not establish.}
Relative confirmation is a finite-sample, observed-data conclusion.  It says
that the target stream has accumulated enough evidence to reject the source
predictive null in the prespecified direction \(\phDobj/\pzeroDobj\).  It does
not establish that \(\phDobj\) equals the true target predictive distribution,
that the tilt \(h\) is unique or correctly specified, or that \(\phDobj\) is
close to the target distribution in an absolute sense.  The true target
predictive may be a third distribution \(q\) that differs from both
\(\pzeroDobj\) and \(\phDobj\), while \(\phDobj\) is nevertheless less wrong
than \(\pzeroDobj\) and therefore accumulates positive evidence.

The next subsection makes this population comparison precise.  Under a target
predictive \(q\), positive expected log-evidence at an input is equivalent to
\(\phDobj\) being closer to \(q\) than \(\pzeroDobj\) is in conditional KL
divergence, subject to the stated finiteness conditions.  This is a statement
of relative predictive superiority, not an absolute adequacy certificate:
even when \(\phDobj\) is closer to \(q\) than \(\pzeroDobj\), it may still be
far from \(q\).  Conversely, failure to cross the boundary does not establish
that the source predictive is correct or that the two predictives are
equivalent; it means only that the observed stream has not supplied enough
evidence for rejection at the chosen anytime-valid level.

The construction above uses the exact normalizer \(\ZhD{x}\).  Certified
upper-bound normalizers preserve confirmation validity but subtract a
predictable log-evidence penalty and no longer yield an exact predictive
likelihood ratio.  This implementation issue is treated in
\cref{sec:procedure}.  \Cref{fig:monitoring-schematic} summarizes the complete
monitoring logic.

\begin{figure}[t]
\centering
\footnotesize
\begin{tikzpicture}[
  >=Latex,
  box/.style={
    draw,rounded corners=4pt,thick,align=center,
    inner sep=5pt,minimum height=0.9cm
  },
  source/.style={
    box,fill=blue!6,draw=blue!55!black,text width=2.35cm
  },
  corr/.style={
    box,fill=orange!8,draw=orange!65!black,text width=2.05cm
  },
  model/.style={
    box,fill=green!7,draw=green!45!black,text width=2.75cm
  },
  evidence/.style={
    box,fill=violet!7,draw=violet!55!black,text width=2.55cm
  },
  decision/.style={
    box,text width=3.15cm,minimum height=1.0cm
  },
  arr/.style={->,thick}
]

\node[source] (p0) at (-4.75,2.15)
  {Source predictive\\$p_0(y\mid x, \Dtr)$};

\node[corr] (h) at (-4.75,0.95)
  {Prespecified tilt\\$h(x,y)$};

\node[model] (ph) at (-1.25,1.55)
  {Corrected predictive\\
   $p_h(y\mid x, \Dtr)$};

\node[model] (obs) at (-4.75,-0.65)
  {Incoming pair\\$(X_i,Y_i)$};

\node[evidence] (ei) at (-1.25,-0.65)
  {One-step evidence\\
   $e_i=h(X_i,Y_i)/Z_h(X_i)$};

\node[evidence] (mt) at (2.05,-0.65)
  {Running wealth\\
   $M_t=\prod_{i\le t}e_i$};

\node[decision,fill=green!8,draw=green!45!black]
  (up) at (6.60,0.95)
  {$M_t>1/\alpha$\\[3pt]
   relative confirmation};

\node[decision,fill=gray!8,draw=gray!55]
  (mid) at (6.60,-0.65)
  {$\alpha\le M_t\le1/\alpha$\\[3pt]
   continue monitoring};

\node[decision,fill=red!6,draw=red!55!black]
  (down) at (6.60,-2.25)
  {$M_t<\alpha$\\[3pt]
   relative refutation\\[2pt]
   {\scriptsize exact normalization only}};

\draw[arr] (p0.east) -- (ph.west);
\draw[arr] (h.east) -- (ph.west);

\draw[arr] (obs.east) -- (ei.west);
\draw[arr] (ph.south) -- ++(0,-0.45) -| (ei.north);

\draw[arr] (ei.east) -- (mt.west);


\draw[thick] (mt.east) -- (4.35,-0.65);

\draw[thick]
  (4.35,0.95) -- (4.35,-2.25);

\draw[arr] (4.35,0.95)  -- (up.west);
\draw[arr] (4.35,-0.65) -- (mid.west);
\draw[arr] (4.35,-2.25) -- (down.west);

\node[align=center,text width=5.4cm] at (-0.10,-2.40)
  {Under the source predictive null, exact normalization makes\\
   $e_i=p_h(Y_i\mid X_i)/p_0(Y_i\mid X_i)$ almost surely.};

\end{tikzpicture}

\caption{Schematic of sequential monitoring for a prespecified predictive
correction. The tilt transforms the source predictive into a corrected
predictive, each incoming outcome contributes one-step evidence, and the
running product is monitored continuously. The upper boundary provides
anytime-valid relative confirmation under the source predictive null.
The reciprocal lower boundary is available only with exact normalization
and is calibrated under the corrected predictive null.}
\label{fig:monitoring-schematic}
\end{figure}

\subsection{Growth Under Alternatives}
\label{sec:growth}

The e-process guarantee controls false confirmation under the source predictive null, 
but it does not describe how evidence behaves under a target predictive distribution $q$. 
We now ask when the proposed correction accumulates evidence under $q$ and how the rate of accumulation 
depends on the inputs observed. 
The main result decomposes the log e-process into conditional expected growth under $q$ 
along the realized input sequence and a martingale fluctuation term. We first state this decomposition for general, 
possibly adaptively selected inputs; i.i.d. and stationary-ergodic limits then follow as corollaries.

For a fixed target conditional distribution $q$ and a given input-selection
mechanism, let $P_q$ denote the induced distribution of the sequential data
stream, and write $\E_q$ and $\Prob_q$ for expectation and probability under
$P_q$.  Thus the subscript $q$ specifies the outcome mechanism together with
the input process under consideration.

\bigskip
\begin{proposition}[Conditional drift decomposition]
\label{prop:drift}
Suppose that, for each $i$, conditional on \(\cG_i\), the outcome satisfies
\[
  Y_i\mid\cG_i\sim q(\cdot\mid X_i)
\]
for a fixed target conditional distribution $q$. Assume that
\[
  \E_q \! \left[
    \left|
      \log\frac{h(X_i,Y_i)}{\ZhD{X_i}}
    \right|
  \right]
  <\infty
\]
for every $i$. Define a measurable version of the per-input drift by
\begin{equation}
  \Gamma_h(x)
  \defeq
  \int q(y\mid x)
  \log\frac{h(x,y)}{\ZhD{x}}\,dy .
  \label{eq:gamma-x}
\end{equation}
wherever the integral is finite. The preceding integrability condition ensures
that $\Gamma_h(X_i)$ is finite almost surely for every $i$.
Whenever both KL divergences are finite, the drift can equivalently be written as
\begin{equation}
  \Gamma_h(x)
  =
  \KL\!\left(
    q(\cdot\mid x)\,\middle\|\,\pzeroD{\cdot}{x}
  \right)
  -
  \KL\!\left(
    q(\cdot\mid x)\,\middle\|\,\phD{\cdot}{x}
  \right).
  \label{eq:gamma-kl}
\end{equation}

Then, under the target process induced by $q$,
\[
  \E_q[\log e_i\mid\cG_i]
  =
  \int q(y\mid X_i)
  \log\frac{h(X_i,y)}{\ZhD{X_i}}\,dy
  =
  \Gamma_h(X_i)
  \qquad\text{a.s.},
\]
and $\E_q|\Gamma_h(X_i)|<\infty$. Consequently,
\[
  N_t
  \defeq
  \log M_t-\sum_{i=1}^t\Gamma_h(X_i),
  \qquad N_0=0,
\]
is a martingale with respect to $(\cF_t)_{t\ge0}$ under $P_q$.

If, in addition, there exists a finite constant $v$ such that
\[
  \E_q \!\left[
    \bigl\{\log e_i-\Gamma_h(X_i)\bigr\}^2
    \,\middle|\,
    \cG_i
  \right]
  \le v
  \qquad\text{a.s. for every }i,
\]
then
\[
  \frac{N_t}{t}\longrightarrow0
  \qquad\text{a.s.}
\]
Therefore, on the event
\[
  \left\{
    \liminf_{t\to\infty}
    \frac{1}{t}\sum_{i=1}^t\Gamma_h(X_i)>0
  \right\},
\]
we have $\log M_t\to+\infty$ and hence $\tau^*<\infty$ almost surely.
\end{proposition}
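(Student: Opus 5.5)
The plan has three steps, taken in order: identify the conditional expectation of $\log e_i$, build the martingale $N_t$, and then get the strong law.

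\textbf{Step 1: conditional expectation.} Because $X_i$ is $\cG_i$-measurable and $Y_i\mid\cG_i\sim q(\cdot\mid X_i)$, the disintegration (regular conditional distribution) formula gives
\[
  \E_q[\log e_i\mid\cG_i]=\int q(y\mid X_i)\log\frac{h(X_i,y)}{\ZhD{X_i}}\,dy
  \quad\text{a.s.}
\]
The integrability assumption $\E_q|\log e_i|<\infty$ makes this conditional expectation well defined. It also ensures that the integral against $q(\cdot\mid X_i)$ converges absolutely for almost every realization, since $\E_q[\,|\log e_i|\mid\cG_i]<\infty$ a.s. The right-hand side is therefore $\Gamma_h(X_i)$ a.s. Conditional Jensen then gives $\E_q|\Gamma_h(X_i)|\le\E_q|\log e_i|<\infty$.

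\textbf{Step 2: the KL form.} When both divergences are finite, \cref{lem:ratio} gives $\log(h/Z_h)=\log p_h-\log p_0$ on the support of $q(\cdot\mid x)$. On that support $p_0>0$, because the first KL divergence is finite. Writing $\log(p_h/p_0)=\log(q/p_0)-\log(q/p_h)$ and integrating against $q$ yields \eqref{eq:gamma-kl}; finiteness of both terms justifies splitting the integral.

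\textbf{Step 3: martingale property.} Set $\xi_i=\log e_i-\Gamma_h(X_i)$. Each $\xi_i$ is $\cF_i$-measurable and integrable. By the tower property through $\cG_i\supseteq\cF_{i-1}$,
\[
  \E_q[\xi_i\mid\cF_{i-1}]=\E_q\bigl[\E_q[\xi_i\mid\cG_i]\bigm|\cF_{i-1}\bigr]=0.
\]
Hence $N_t=\sum_{i\le t}\xi_i$ is an $(\cF_t)$-martingale under $P_q$.

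\textbf{Step 4: strong law.} This is the main technical point. The increments are martingale differences whose conditional second moments are bounded by $v$; the bound relative to $\cG_i$ transfers to $\cF_{i-1}$ by the tower property. Then
\[
  \sum_{i\ge1}\frac{\E_q[\xi_i^2]}{i^2}\le\sum_{i\ge1}\frac{v}{i^2}<\infty,
\]
so the $L^2$-bounded martingale $\sum_i \xi_i/i$ converges a.s. by the martingale convergence theorem. Kronecker's lemma then gives $N_t/t\to0$ a.s. Alternatively, one could cite the martingale SLLN of Chow directly. The point to verify is only that square-integrability holds, which follows from $\E_q\xi_i^2\le v$.

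\textbf{Conclusion.} On the stated event, $\liminf_t t^{-1}\sum_{i\le t}\Gamma_h(X_i)=c>0$. Intersecting with the probability-one event $\{N_t/t\to0\}$ gives
\[
  \liminf_{t\to\infty}\frac{\log M_t}{t}\ge c>0.
\]
Hence $\log M_t\to+\infty$, so $M_t$ eventually exceeds $1/\alpha$, and $\tau^*<\infty$ almost surely on that event.
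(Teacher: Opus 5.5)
Your proposal is correct and follows essentially the same route as the paper's proof. You identify $\E_q[\log e_i\mid\cG_i]$ with $\Gamma_h(X_i)$ and bound $\E_q|\Gamma_h(X_i)|$ by conditional Jensen, get the KL form by adding and subtracting $\log q$, obtain the martingale property by the tower property through $\cG_i\supseteq\cF_{i-1}$, and derive $N_t/t\to0$ from $\sum_i\E_q[\xi_i^2]/i^2<\infty$ via convergence of $\sum_i\xi_i/i$ and Kronecker's lemma, which is the paper's ``Chow plus Kronecker'' step made explicit. Your liminf conclusion, and your remark that a finite first KL divergence gives $q(\cdot\mid x)\ll\pzeroD{\cdot}{x}$ so the identity of \cref{lem:ratio} holds $q$-a.s., are only slightly more explicit than the paper's corresponding steps.
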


\begin{proof}[Proof sketch]
Conditioning on $\cG_i$ and using
$Y_i\mid\cG_i\sim q(\cdot\mid X_i)$ gives
\[
  \E_q[\log e_i\mid\cG_i]=\Gamma_h(X_i).
\]
Hence the centered increments
\[
  \log e_i-\Gamma_h(X_i)
\]
form a martingale difference sequence under $P_q$, so $N_t$ is an
$(\cF_t)$-martingale under $P_q$.  The conditional second-moment bound implies, by a
martingale strong law, that $N_t/t\to0$ almost surely.  Therefore,
\[
  \frac{\log M_t}{t}
  =
  \frac1t\sum_{i=1}^t\Gamma_h(X_i)+\frac{N_t}{t},
\]
and a positive lower limit of the average drift forces $\log M_t\to+\infty$, so the
confirmation boundary is crossed in finite time.  See
\cref{app:proof-drift} for details.
\end{proof}

Proposition~\ref{prop:drift} has three main implications.  
\begin{itemize}
\item[a.]
First, the sign of $\Gamma_h(x)$ measures relative predictive merit at input $x$.  Whenever the two KL divergences in \eqref{eq:gamma-kl} are finite, $\Gamma_h(x)>0$ exactly when the corrected predictive $\phD{\cdot}{x}$ is closer to the true target conditional distribution $q(\cdot\mid x)$ in KL divergence than the source predictive $\pzeroD{\cdot}{x}$ is.  The correction need not coincide with the true target distribution: an incorrect magnitude, or even an imperfect structural form, can have positive expected log-growth if it predicts better than the source model.  Conversely, scientific plausibility alone does not ensure positive drift.  Confirmation therefore concerns the correction's predictive advantage relative to the source, not exact estimation or identification of the shift.
\item[b.]
Second, the decomposition
\[
  \log M_t=\sum_{i=1}^t\Gamma_h(X_i)+N_t
\]
separates systematic evidence growth from random fluctuation.  The first term is the cumulative conditional expected log-score advantage of the correction along the inputs actually observed; $N_t$ records the deviations of the realized log scores from those conditional expectations.  Under the conditional second-moment condition, $N_t/t\to0$ almost surely.  Consequently, persistent positive average drift implies eventual confirmation with probability one, whereas the proposition itself does not provide a finite-horizon power
function such as $\Prob_q(\tau^*\le t)$; a conservative finite-horizon bound
under i.i.d.\ sampling is given in \cref{cor:finite-horizon}.
\item[c.]
Third, the input sequence affects the rate of evidence accumulation through $\Gamma_h(X_i)$.  Inputs with large positive drift are more informative for comparing $\phDobj$ with $\pzeroDobj$, while inputs with drift near zero contribute little expected log evidence.  The null guarantee remains valid for arbitrary adaptive input selection because it conditions on the realized $X_i$.  Thus, when informative inputs can be identified from scientific knowledge or a prespecified or predictable design criterion, adaptive experimental design may accelerate confirmation without changing the source-null error guarantee.  The design affects the growth rate under the target alternative, not the validity of the e-process under the null.
\end{itemize}

\bigskip
\begin{corollary}[Asymptotic growth under i.i.d.\ or stationary-ergodic sampling]
\label{cor:ergodic}
Suppose that either

\begin{itemize}
\item[(i)] the pairs $(X_i,Y_i)_{i\ge1}$ are i.i.d.\ with joint distribution
\[
  q^X(dx)\,q(dy\mid x),
\]
or
\item[(ii)] the pair process $(X_i,Y_i)_{i\ge1}$ is stationary and ergodic with
one-step distribution
\[
  q^X(dx)\,q(dy\mid x).
\]
\end{itemize}

If
\[
  \E_{q^X q^{Y\mid X}}\!\left[
    \left|
      \log\frac{h(X,Y)}{\ZhD{X}}
    \right|
  \right]
  <\infty,
\]
then
\begin{equation}
  \frac{1}{t}\log M_t
  \longrightarrow
  \overline\Gamma(q;h)
  \defeq
  \E_{X\sim q^X}[\Gamma_h(X)]
  =
  \E_{(X,Y)\sim q^X q^{Y\mid X}}\!\left[
    \log\frac{h(X,Y)}{\ZhD{X}}
  \right]
  \qquad\text{a.s.}
  \label{eq:ergodic-growth}
\end{equation}
If, in addition, the two expected predictive log losses below are finite, then
this limit has the equivalent log-score representation
\begin{equation}
  \overline\Gamma(q;h)
  =
  \E_{q^X q^{Y\mid X}}\!\left[\NLPD(\pzeroDobj)\right]
  -
  \E_{q^X q^{Y\mid X}}\!\left[\NLPD(\phDobj)\right].
  \label{eq:ergodic-logloss}
\end{equation}
Consequently, if $\overline\Gamma(q;h)>0$, then
$\log M_t\to+\infty$ and $\tau^*<\infty$ almost surely.  If
$\overline\Gamma(q;h)<0$, then $\log M_t\to-\infty$ and $M_t\to0$ almost surely.
\end{corollary}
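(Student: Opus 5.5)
The plan is to write $\log M_t=\sum_{i=1}^t g(X_i,Y_i)$ with the fixed measurable function
\[
  g(x,y)\defeq\log\frac{h(x,y)}{\ZhD{x}},
\]
and apply a law of large numbers directly to this sum. This avoids the conditional second-moment condition of \cref{prop:drift}, which the corollary does not assume. The assumption $\E_{q^Xq^{Y\mid X}}|g(X,Y)|<\infty$ makes $g(X_1,Y_1)$ integrable, so $g$ is finite almost surely and each $\log M_t$ is a well-defined finite sum.

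In case (i), the variables $g(X_i,Y_i)$ are i.i.d.\ and integrable. Kolmogorov's strong law then gives
\[
  \frac{1}{t}\log M_t\to\E[g(X,Y)]\quad\text{a.s.}
\]
In case (ii), the sequence $(g(X_i,Y_i))_i$ is a measurable coordinatewise function of a stationary ergodic process, so it is itself stationary and ergodic. Birkhoff's ergodic theorem gives the same limit. This is the step needing some care: stationarity and ergodicity must be applied to the shift on the pair-process path space, and the integrable function $(x_i,y_i)_{i\ge1}\mapsto g(x_1,y_1)$ composes with the shift. This is standard.

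Next I identify the limit. By disintegration (Fubini/Tonelli, using integrability of $|g|$),
\[
  \E[g(X,Y)]=\int q^X(dx)\int q(dy\mid x)\,g(x,y)=\E_{X\sim q^X}[\Gamma_h(X)].
\]
The inner integral is finite for $q^X$-a.e.\ $x$, so $\Gamma_h$ as defined in \eqref{eq:gamma-x} is well defined a.e. This gives \eqref{eq:ergodic-growth}. For \eqref{eq:ergodic-logloss}, if both expected log losses are finite, then $\log\phD{Y}{X}$ and $\log\pzeroD{Y}{X}$ are integrable and finite a.s. By \cref{lem:ratio}, $g=\log\phDobj-\log\pzeroDobj$ wherever both densities are positive; this holds $q$-a.s.\ because both log losses are finite a.s. Linearity of expectation then splits $\E[g]$ into the difference of expected NLPDs.

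For the final claims, $\overline\Gamma>0$ means $\log M_t\ge t\,\overline\Gamma/2$ eventually, almost surely. Hence $\log M_t\to+\infty$, $M_t$ eventually exceeds $1/\alpha$, and $\tau^*<\infty$ a.s. Symmetrically, $\overline\Gamma<0$ gives $\log M_t\to-\infty$, so $M_t\to0$ a.s. The main obstacle is only the measure-theoretic bookkeeping: making sure $\Gamma_h$ is a version defined a.e.\ and that the ergodic theorem applies to the composed process. Everything else is direct.
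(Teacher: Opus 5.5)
Your proposal is correct and follows essentially the same route as the paper. You write $\log M_t$ as partial sums of the fixed integrable function $\log\{h(X_i,Y_i)/\ZhD{X_i}\}$, apply the strong law or Birkhoff's theorem, and identify the limit as $\E_{q^X}[\Gamma_h(X)]$ by iterated integration. Your added bookkeeping for the log-score representation (both densities positive $q$-a.s.\ when the log losses are finite) and for the sign-dependent conclusions fills in steps the paper treats as immediate.
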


\begin{proof}[Proof sketch]
Under either assumption, the sequence
\[
  \log e_i
  =
  \log\frac{h(X_i,Y_i)}{\ZhD{X_i}}
\]
is integrable and is respectively i.i.d.\ or stationary and ergodic.
The ordinary strong law or Birkhoff's ergodic theorem therefore gives
\[
  \frac{1}{t}\log M_t
  =
  \frac{1}{t}\sum_{i=1}^t\log e_i
  \longrightarrow
  \E_{q^X q^{Y\mid X}}[\log e_1]
  =
  \E_{q^X}[\Gamma_h(X)]
  \qquad\text{a.s.}
\]
The conclusions for positive and negative limits follow immediately.
See \cref{app:proof-ergodic} for details.
\end{proof}

The corollary replaces the input-dependent cumulative drift in
Proposition~\ref{prop:drift} by a single deterministic long-run growth rate.
Under i.i.d.\ or stationary-ergodic sampling,
\[
  \log M_t
  =
  t\,\overline\Gamma(q;h)+o(t)
  \qquad\text{a.s.}
\]
Thus $\overline\Gamma(q;h)$ is the asymptotic number of nats of evidence gained
per observation.  It is positive exactly when the corrected predictive
$\phDobj$ has smaller expected negative log-predictive density than the
source predictive $\pzeroDobj$ under the joint target distribution
$q^X(dx)q(dy\mid x)$.

When $\overline\Gamma(q;h)>0$, the e-process grows exponentially at rate
$\overline\Gamma(q;h)$:
\[
  M_t
  =
  \exp\{t\overline\Gamma(q;h)+o(t)\}.
\]
The correction is therefore eventually confirmed almost surely, and the
first-order crossing-time approximation is
\[
  \tau^*
  \approx
  \frac{\log(1/\alpha)}{\overline\Gamma(q;h)}.
\]
When $\overline\Gamma(q;h)<0$, the source predictive has the smaller expected
log loss and the evidence process decays exponentially.  When
$\overline\Gamma(q;h)=0$, neither predictive has a long-run expected log-score
advantage, and the corollary alone does not determine whether a finite
boundary crossing occurs.

Proposition~\ref{prop:drift} and Corollary~\ref{cor:ergodic} serve
complementary purposes.  Proposition~\ref{prop:drift} applies to general,
possibly adaptively selected inputs and describes growth through the
path-dependent average
\[
  \frac{1}{t}\sum_{i=1}^t\Gamma_h(X_i).
\]
Corollary~\ref{cor:ergodic} applies when the sampling process has a stable
long-run distribution and reduces this quantity to the population average
\[
  \overline\Gamma(q;h)=\E_{q^X}[\Gamma_h(X)].
\]
It therefore provides a simple summary of the correction's long-run
predictive advantage and connects the sequential e-process directly to
standard expected log-loss comparison.

\bigskip
\begin{corollary}[A finite-horizon crossing bound]
\label{cor:finite-horizon}
Suppose that the log e-values $Z_i=\log e_i$ are i.i.d.\ under $P_q$, with
\[
  \E_q[Z_i]=\overline\Gamma(q;h)>0,
  \qquad
  \Var_q(Z_i)\le v<\infty.
\]
Let $b=\log(1/\alpha)$.  For every integer $t$ satisfying
$t\overline\Gamma(q;h)>b$,
\begin{equation}
  \Prob_q(\tau^*>t)
  \le
  \frac{t v}{
    t v+\{t\overline\Gamma(q;h)-b\}^2
  }.
  \label{eq:finite-horizon-power}
\end{equation}
If, in addition, the centered increments $Z_i-\overline\Gamma(q;h)$ are
sub-Gaussian with variance proxy $s^2$, that is,
\[
  \E_q\exp\left[\theta\{Z_i-\overline\Gamma(q;h)\}\right]
  \le
  \exp\left(\frac{\theta^2s^2}{2}\right)
  \qquad\text{for every }\theta\in\R,
\]
then the same event admits the exponential bound
\begin{equation}
  \Prob_q(\tau^*>t)
  \le
  \exp\left[
    -\frac{\{t\overline\Gamma(q;h)-b\}^2}{2ts^2}
  \right].
  \label{eq:finite-horizon-subgaussian}
\end{equation}
If instead the increments are bounded, with
$|Z_i-\overline\Gamma(q;h)|\le R$ almost surely, then
\begin{equation}
  \Prob_q(\tau^*>t)
  \le
  \exp\left[
    -\frac{\{t\overline\Gamma(q;h)-b\}^2}
          {2tv+\tfrac23R\{t\overline\Gamma(q;h)-b\}}
  \right].
  \label{eq:finite-horizon-bernstein}
\end{equation}
\end{corollary}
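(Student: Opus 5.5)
The argument reduces every bound to a lower-tail deviation for the random walk $S_t=\sum_{i=1}^t Z_i=\log M_t$ at the single time $t$. Since $\tau^*=\inf\{s\ge1:M_s>1/\alpha\}$, the event $\{\tau^*>t\}$ means that $M_s\le 1/\alpha$ for every $s\le t$. In particular it implies $S_t\le b$. So we have the containment
\[
  \{\tau^*>t\}\subseteq\{S_t\le b\}
  =\Bigl\{S_t-t\overline\Gamma(q;h)\le-\bigl(t\overline\Gamma(q;h)-b\bigr)\Bigr\}.
\]
Write $\Delta\defeq t\overline\Gamma(q;h)-b>0$, which is positive by the hypothesis on $t$. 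Then
\[
  \Prob_q(\tau^*>t)\le\Prob_q(\bar S_t\le-\Delta),
  \qquad
  \bar S_t\defeq S_t-t\overline\Gamma(q;h).
\]
Here $\bar S_t$ is a sum of $t$ i.i.d.\ mean-zero terms. Discarding the information from earlier times is what makes the bounds conservative, and nothing beyond the time-$t$ marginal is needed.

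For \eqref{eq:finite-horizon-power}, I would use the one-sided Chebyshev (Cantelli) inequality. Independence gives $\Var_q(\bar S_t)=t\Var_q(Z_1)\le tv$, so
\[
  \Prob_q(\bar S_t\le-\Delta)\le\frac{tv}{tv+\Delta^2}.
\]
The step to check is that Cantelli's bound $\sigma^2/(\sigma^2+\Delta^2)$ is increasing in $\sigma^2$. This lets us replace the exact variance by the upper bound $tv$.

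For \eqref{eq:finite-horizon-subgaussian}, I would apply the Chernoff method to $-\bar S_t$. For $\theta>0$, independence and the sub-Gaussian assumption (applied at $-\theta$) give
\[
  \E_q e^{-\theta\bar S_t}\le e^{t\theta^2s^2/2}.
\]
Markov's inequality then yields
\[
  \Prob_q(\bar S_t\le-\Delta)\le\exp\bigl(-\theta\Delta+t\theta^2s^2/2\bigr).
\]
Taking $\theta=\Delta/(ts^2)$ gives the exponent $-\Delta^2/(2ts^2)$.

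For \eqref{eq:finite-horizon-bernstein}, I would use Bernstein's inequality for the bounded, mean-zero, i.i.d.\ variables $-(Z_i-\overline\Gamma(q;h))$. These have absolute value at most $R$ and summed variance at most $tv$. The standard form then gives
\[
  \Prob_q(\bar S_t\le-\Delta)\le\exp\!\left[-\frac{\Delta^2}{2tv+\tfrac23R\Delta}\right].
\]

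None of the steps is a real obstacle once the reduction to the time-$t$ marginal is in place. The only care needed is the monotonicity in the variance for Cantelli, and using the variance bound $v$ rather than the exact variance inside Bernstein's inequality, where the bound is likewise monotone in the variance.
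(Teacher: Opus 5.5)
Your proposal is correct and follows the paper's proof essentially step for step. You use the same containment $\{\tau^*>t\}\subseteq\{S_t\le b\}$ to reduce everything to a single lower-tail deviation of size $t\overline\Gamma(q;h)-b$, and then bound it by Cantelli (using monotonicity in the variance), the optimized sub-Gaussian Chernoff bound, and the one-sided Bernstein inequality.
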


\begin{proof}[Proof sketch]
The event $\{\tau^*>t\}$ implies $\log M_t\le b$.  All three bounds follow by
applying a lower-tail inequality to
$\log M_t-t\overline\Gamma(q;h)=\sum_{i=1}^t\{Z_i-\overline\Gamma(q;h)\}$
at the deviation level $t\overline\Gamma(q;h)-b>0$: Cantelli's one-sided variance
inequality gives \eqref{eq:finite-horizon-power}, using that
$\sigma^2\mapsto\sigma^2/(\sigma^2+\lambda^2)$ is increasing, so that the
variance may be replaced by the upper bound $tv$; the sub-Gaussian Chernoff
bound gives \eqref{eq:finite-horizon-subgaussian}; and Bernstein's inequality
gives \eqref{eq:finite-horizon-bernstein}.  See
\cref{app:proof-finite-horizon} for details.
\end{proof}

The bounds make the crossing-time heuristic $\tau^*\approx b/\overline\Gamma(q;h)$
operational: once the expected accumulated log evidence exceeds the boundary,
the probability of not yet crossing is explicitly controlled.  The two differ
sharply in how fast that control improves.  The variance-only bound
\eqref{eq:finite-horizon-power} decays only at the polynomial rate
$v/\{t\overline\Gamma(q;h)^2\}$ and is therefore very conservative at moderate
horizons, whereas \eqref{eq:finite-horizon-subgaussian} decays exponentially in
$t$.  For the Gaussian mean tilt of \cref{sec:mean-shift} with bounded $g$ the
increments are sub-Gaussian, so the exponential bound also applies.  Its
numerical sharpness depends entirely on the certified variance proxy;
\cref{sec:exp-mean-shift} evaluates both the variance-only bound and a
deliberately conservative certified sub-Gaussian proxy.

The Bernstein form \eqref{eq:finite-horizon-bernstein} is useful when a
deterministic bound on the centered increments and a variance bound are both
available, especially when the increment distribution is strongly skewed.  A
sub-Gaussian proxy obtained only from a worst-case range can be much looser
because it discards the variance information, whereas
\eqref{eq:finite-horizon-bernstein} uses the variance and the range together.
The two forms can cross: \eqref{eq:finite-horizon-power} may be sharper at short
horizons, where the linear term $\tfrac23R\{t\overline\Gamma(q;h)-b\}$ dominates the
Bernstein denominator, while \eqref{eq:finite-horizon-bernstein} can become
sharper at long horizons as the quadratic numerator grows.  The Bernstein bound
requires valid variance and range bounds; these quantities should not be
estimated from the same monitored outcomes and then treated as prospective
certificates.  A step-by-step interpretation of the crossing event, the
sample-size heuristic, and all three finite-horizon bounds is provided in
\cref{app:finite-horizon-guide}.

\bigskip
\begin{corollary}[Correctly specified predictive correction]
\label{cor:correct}
Suppose that the target conditional distribution is exactly the corrected
predictive:
\[
  q(\cdot\mid X_i)=\phD{\cdot}{X_i}
  \qquad\text{a.s. for every }i.
\]
Under the integrability hypothesis of \cref{prop:drift},
\begin{equation}
  \Gamma_h(X_i)
  =
  \KL\!\left(
    \phD{\cdot}{X_i}
    \,\middle\|\,
    \pzeroD{\cdot}{X_i}
  \right)
  \ge 0
  \qquad\text{a.s.}
  \label{eq:correct-drift}
\end{equation}
If, in addition, the bounded conditional second-moment condition of
\cref{prop:drift} holds, then for arbitrary, possibly adaptively selected
inputs,
\[
  \tau^*<\infty
  \qquad\text{a.s. on the event}\qquad
  \left\{
    \liminf_{t\to\infty}
    \frac{1}{t}\sum_{i=1}^t
    \KL\!\left(
      \phD{\cdot}{X_i}
      \,\middle\|\,
      \pzeroD{\cdot}{X_i}
    \right)
    >0
  \right\}.
\]

Under either the i.i.d.\ or stationary-ergodic sampling regime of
\cref{cor:ergodic}, suppose instead that its integrability condition holds.
Then
\begin{equation}
  \frac{1}{t}\log M_t
  \longrightarrow
  \gamma_h
  \defeq
  \E_{X\sim q^X}\!\left[
    \KL\!\left(
      \phD{\cdot}{X}
      \,\middle\|\,
      \pzeroD{\cdot}{X}
    \right)
  \right]
  \ge 0
  \qquad\text{a.s.}
  \label{eq:correct-growth-rate}
\end{equation}
In particular, if $\gamma_h>0$, then $\log M_t\to+\infty$ and
$\tau^*<\infty$ almost surely.  Moreover, $\gamma_h=0$ if and only if
\[
  \phD{\cdot}{X}
  =
  \pzeroD{\cdot}{X}
  \qquad
  \text{for $q^X$-almost every $X$}.
\]
\end{corollary}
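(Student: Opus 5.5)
The proof plugs $q=\phDobj$ into Proposition~\ref{prop:drift} and Corollary~\ref{cor:ergodic}. The only genuinely new ingredient is that the drift is a KL divergence computed directly from its definition, which needs no finiteness assumption on both KL terms.

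\textbf{Step 1: the drift is a KL divergence.} Fix an input $x$ and take $q(\cdot\mid x)=\phD{\cdot}{x}$ in \eqref{eq:gamma-x}. Then
\[
  \Gamma_h(x)=\int \phD{y}{x}\log\frac{h(x,y)}{\ZhD{x}}\,dy .
\]
By \cref{lem:ratio}, $\phD{\cdot}{x}\ll\pzeroD{\cdot}{x}$, and $h/\ZhD{x}$ is a version of $d\phD{\cdot}{x}/d\pzeroD{\cdot}{x}$. The identity holds $\pzeroD{\cdot}{x}$-a.s., hence also $\phD{\cdot}{x}$-a.s. So the integral is, by definition,
\[
  \KL\!\left(\phD{\cdot}{x}\,\middle\|\,\pzeroD{\cdot}{x}\right),
\]
and it is finite at $x=X_i$ by the integrability hypothesis. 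I would stress this route rather than \eqref{eq:gamma-kl}: that formula gives $\KL(q\|p_0)-0$ but requires both divergences to be finite, which is not needed here. Nonnegativity follows from Gibbs' inequality, that is, Jensen applied to $-\log$ of the likelihood ratio, which has mean one under $\phDobj$ integrated against $p_h$. This gives \eqref{eq:correct-drift}.

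\textbf{Step 2: adaptive-input conclusion.} This is immediate from Proposition~\ref{prop:drift}. Under the second-moment condition, $\log M_t=\sum_{i\le t}\Gamma_h(X_i)+N_t$ with $N_t/t\to0$ a.s. On the stated liminf event, $\log M_t\to\infty$, so $\tau^*<\infty$ a.s. on that event.

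\textbf{Step 3: i.i.d.\ or stationary-ergodic regime.} Apply Corollary~\ref{cor:ergodic} with $q=\phDobj$. This gives $t^{-1}\log M_t\to\E_{q^X}[\Gamma_h(X)]$. By Step~1 this limit equals $\gamma_h\ge0$, which is \eqref{eq:correct-growth-rate}; the case $\gamma_h>0$ then follows from that corollary.

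\textbf{Step 4: characterization of $\gamma_h=0$.} Since the integrand is nonnegative, $\gamma_h=0$ iff $\KL(\phD{\cdot}{X}\|\pzeroD{\cdot}{X})=0$ for $q^X$-a.e.\ $X$. The remaining step is the equality case of Gibbs' inequality. By strict concavity of $\log$, zero KL forces $h(x,\cdot)/\ZhD{x}=1$ $\phDobj$-a.s. It does not immediately give $\pzeroDobj$-a.s., because $p_h$ may vanish where $h=0$. To close this gap:
\begin{itemize}
\item If the ratio $h/\ZhD{x}$ equals $1$ $p_h$-a.s., then $p_h$ assigns full mass to the set $\{h=\ZhD{x}\}$.
\item The $p_0$-mass of this set is $\int_{\{h=Z\}}p_0=\int_{\{h=Z\}}p_h=1$, using $p_h=p_0$ on that set.
\item Hence $h=\ZhD{x}$ $p_0$-a.s., so the two predictives coincide.
\end{itemize}
The converse is trivial. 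This mass argument, together with measurability of the exceptional $x$-set, is the only delicate point I expect.
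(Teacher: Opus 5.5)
Your proposal is correct and follows essentially the same route as the paper: set $q=\phDobj$ to identify the drift as $\KL\bigl(\phD{\cdot}{x}\,\|\,\pzeroD{\cdot}{x}\bigr)\ge0$, then invoke \cref{prop:drift} for adaptive inputs and \cref{cor:ergodic} for the growth rate, and use the equality case of the KL divergence to characterize $\gamma_h=0$. Two of your steps are spelled out more carefully than in the paper, which goes through \eqref{eq:gamma-kl} and cites the standard fact that zero KL forces equality: the direct Radon--Nikodym identification of $\Gamma_h$, and the explicit mass argument for the equality case. These are small tightenings rather than a different approach.
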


\begin{proof}[Proof sketch]
Setting $q(\cdot\mid x)=\phD{\cdot}{x}$ in \eqref{eq:gamma-kl} gives
\[
  \Gamma_h(x)
  =
  \KL\!\left(
    \phD{\cdot}{x}
    \,\middle\|\,
    \pzeroD{\cdot}{x}
  \right),
\]
because
\[
  \KL\!\left(
    \phD{\cdot}{x}
    \,\middle\|\,
    \phD{\cdot}{x}
  \right)
  =0.
\]
The arbitrary-input conclusion follows from \cref{prop:drift}, and the
asymptotic growth statement follows from \cref{cor:ergodic}.  See
\cref{app:proof-correct} for details.
\end{proof}

Under correct specification, let $P_h$ denote the data-stream distribution
induced by the corrected predictive and the given input mechanism.  The
general conditional drift then reduces to an information divergence:
\[
  \E_{P_h}[\log e_i\mid\cG_i]
  =
  \KL\!\left(
    \phD{\cdot}{X_i}
    \,\middle\|\,
    \pzeroD{\cdot}{X_i}
  \right).
\]
Thus no input has negative expected
log-growth.  An input contributes zero expected evidence when the corrected
and source predictives coincide there, and positive expected evidence when
they differ, subject to the stated integrability conditions.

For arbitrary, possibly adaptive, inputs, eventual confirmation requires the
average KL separation along the realized input sequence to remain positive.
Correct specification alone is therefore not enough if the sampling mechanism
visits only regions where the two predictives are indistinguishable.  This also
gives the result an experimental-design interpretation: inputs with larger KL
separation are more informative for confirming the correction.

Under i.i.d.\ or stationary-ergodic sampling, the pathwise average reduces to
$\gamma_h$, the population-average KL separation.  Hence $\gamma_h$ is the
asymptotic number of nats of evidence gained per observation, and
$\gamma_h>0$ implies exponential evidence growth and eventual confirmation
almost surely.  This is the clean benchmark case: when the prespecified
correction is the true target predictive, its evidence rate is exactly the KL
information separating it from the source predictive.

\subsection{False-Confirmation Control under Target Misspecification}
\label{sec:misspecification-control}

Theorem~\ref{thm:eprocess} controls the probability of ever confirming the
correction when the source predictive null is true.  In deployment, however,
the true target predictive distribution $q$ may be neither the source
predictive $\pzeroDobj$ nor the proposed corrected predictive $\phDobj$.
The source predictive null remains the original null hypothesis; a target
distribution $q\ne\pzeroDobj$ is not reclassified as part of that null.
Instead, we ask a robustness question:

\begin{quote}
For which misspecified target predictive distributions $q$ does the original
stopping rule still control the probability of ever crossing the confirmation
boundary by $\alpha$?
\end{quote}

Answering this question identifies a protected class of target
misspecifications for the original directional bet.  It does not enlarge the
source predictive null.  Rather, it clarifies when the same maximal crossing
bound persists despite misspecification and when additional protection must
be built into the e-process.  The following condition is exactly the condition
under which each one-step factor remains a conditional e-value under the
misspecified target process.

\bigskip
\begin{proposition}[Persistence of false-confirmation control under target misspecification]
\label{prop:misspecification-control}
For each $i$, let $q_i$ be a $\cG_i$-measurable target conditional
distribution and suppose that
\[
  Y_i\mid\cG_i\sim q_i(\cdot\mid X_i).
\]
The sequence $(q_i)$ may vary predictably with time, the past, and the current
input.  If
\begin{equation}
  \E_{Y\sim q_i(\cdot\mid X_i)}
  \bigl[h(X_i,Y)\bigr]
  \le
  \E_{Y\sim\pzeroD{\cdot}{X_i}}
  \bigl[h(X_i,Y)\bigr]
  =
  \ZhD{X_i}
  \qquad\text{a.s. for every }i,
  \label{eq:moment-condition}
\end{equation}
then $(M_t)_{t\ge0}$ is a nonnegative supermartingale under the data-stream
distribution $P_{(q_i)}$ induced by $(q_i)$ and the given input mechanism.
Consequently,
\[
  \Prob_{(q_i)}\!\left\{
    \sup_{t\ge0}M_t>\frac{1}{\alpha}
  \right\}
  \le\alpha.
\]
In particular, if every conditional distribution in a class
$\mathcal Q_0$ satisfies
\[
  \E_{Y\sim r(\cdot\mid x)}[h(x,Y)]
  \le \ZhD{x}
  \qquad\text{for every }x,
\]
then
\[
  \sup_{(q_i):\,q_i\in\mathcal Q_0}
  \Prob_{(q_i)}\!\left\{
    \sup_{t\ge0}M_t>\frac{1}{\alpha}
  \right\}
  \le\alpha,
\]
where the supremum is over all predictable selections from $\mathcal Q_0$, and
the bound holds for any input process, including an adaptively selected one.
\end{proposition}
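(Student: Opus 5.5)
The plan mirrors the proof of \cref{thm:eprocess}, replacing the equality $\E_P[e_i\mid\cG_i]=1$ with an inequality. I work with the canonical factor $e_i=h(X_i,Y_i)/\ZhD{X_i}$ from \eqref{eq:canonical-e-factor}. It is well defined and nonnegative whether or not $q_i(\cdot\mid X_i)$ is dominated by the source predictive, because \eqref{eq:normalizer} guarantees $0<\ZhD{X_i}<\infty$. No likelihood-ratio interpretation is needed here; only nonnegativity and measurability of $e_i$ matter.

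\emph{Step 1: one-step supermartingale bound.} Since $\ZhD{X_i}$ is $\cG_i$-measurable, positive and finite, it can be pulled out of the conditional expectation. Since $Y_i\mid\cG_i\sim q_i(\cdot\mid X_i)$ with $q_i$ being $\cG_i$-measurable, we get
\[
  \E_{(q_i)}[e_i\mid\cG_i]
  =\frac{\E_{Y\sim q_i(\cdot\mid X_i)}[h(X_i,Y)]}{\ZhD{X_i}}\le 1
  \qquad\text{a.s.}
\]
by \eqref{eq:moment-condition}. Nonnegativity of $h$ makes these conditional expectations well defined in $[0,\infty]$, and \eqref{eq:moment-condition} then forces finiteness. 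The only delicate point is justifying the formula for the conditional expectation via a regular conditional distribution; the setup's convention for $\E_{Y\sim r_i(\cdot\mid X_i)}$ supplies exactly this.

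\emph{Step 2: supermartingale property.} $M_{t-1}$ is $\cF_{t-1}\subseteq\cG_t$-measurable and nonnegative. By the tower property with $\cF_{t-1}\subseteq\cG_t$ and conditional expectation of nonnegative variables (no integrability needed in advance),
\[
  \E[M_t\mid\cF_{t-1}]=M_{t-1}\,\E\bigl[\E[e_t\mid\cG_t]\bigm|\cF_{t-1}\bigr]\le M_{t-1}.
\]
Induction from $M_0=1$ gives $\E[M_t]\le1$, so each $M_t$ is integrable. Hence $(M_t)$ is a nonnegative supermartingale under $P_{(q_i)}$.

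\emph{Step 3: Ville and uniformity.} Ville's inequality for nonnegative supermartingales with $M_0=1$ yields $\Prob_{(q_i)}\{\sup_t M_t>1/\alpha\}\le\alpha$. (Alternatively, apply optional stopping at $\tau^*\wedge t$, use Markov's inequality, and let $t\to\infty$.)

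For the class statement, I take any predictable selection with $q_i\in\mathcal Q_0$. Evaluating the pointwise inequality $\E_{Y\sim r(\cdot\mid x)}[h(x,Y)]\le\ZhD{x}$ at $x=X_i$ and $r=q_i$ gives \eqref{eq:moment-condition} almost surely. The input mechanism never enters Steps 1--3, so the bound holds for every selection and every input process, and taking the supremum completes the argument. I expect no real obstacle. The main care points are the measurability of $\omega\mapsto q_i$ evaluated at $X_i$, handled by the predictability assumption, and avoiding any appeal to absolute continuity with respect to $\pzeroDobj$.
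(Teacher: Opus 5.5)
Your proposal is correct and follows essentially the same route as the paper. The paper computes $\E[e_i\mid\cG_i]$ as the $q_i$-moment of $h$ divided by $\ZhD{X_i}$, bounds it by one using the moment condition, conditions first on $\cG_t$ and then on $\cF_{t-1}$ to obtain a nonnegative supermartingale, applies Ville's inequality, and gets the class statement by evaluating the pointwise condition at $X_i$. Your added care (extended-valued conditional expectations of nonnegative variables, integrability of $M_t$ by induction, and using the canonical factor so that no absolute continuity with respect to the source predictive is needed) consists of harmless refinements of that same argument.
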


\begin{proof}[Proof sketch]
Because
\[
  e_i=\frac{h(X_i,Y_i)}{\ZhD{X_i}},
\]
condition \eqref{eq:moment-condition} is equivalent to
\[
  \E[e_i\mid\cG_i]\le1.
\]
Iterated conditioning therefore makes $(M_t)$ a nonnegative supermartingale
under the induced process, and Ville's inequality gives the stated crossing
bound.  See
\cref{app:proof-misspecification-control} for details.
\end{proof}

\paragraph{What this result establishes.}
Whenever \eqref{eq:moment-condition} holds along the inputs visited by the
process, the same time-uniform crossing bound continues to hold under the
misspecified target process:
\[
  \Prob_{(q_i)}\!\left\{\sup_{t\ge0}M_t>1/\alpha\right\}\le\alpha.
\]
Thus the original directional test is robust to a correction-dependent class
of target misspecifications, even though those targets are not part of the
source predictive null.  The protected class is characterized below.

This distinction matters because a departure from $\pzeroDobj$ need not favor
the proposed correction.  Some departures move outcomes in the opposite
direction or leave the moment targeted by $h$ unchanged.  The proposition
identifies a correction-specific region in which such departures still cannot
inflate the probability of false confirmation beyond $\alpha$.

\subsubsection{A Correction-Dependent Protected Half-Space}
\label{sec:protected-halfspace}
At a fixed input $x$, define
\begin{equation}
  \mathcal H_h(x)
  \defeq
  \left\{
    q(\cdot\mid x):
    \int h(x,y)\,q(dy\mid x)\le\ZhD{x}
  \right\}.
  \label{eq:halfspace}
\end{equation}
The mapping
\[
  q(\cdot\mid x)
  \longmapsto
  \int h(x,y)\,q(dy\mid x)
\]
is linear in the target conditional distribution.  Hence $\mathcal H_h(x)$ is
the intersection of the set of conditional distributions with a linear
half-space.  The source predictive lies on its boundary because
\[
  \int h(x,y)\,\pzeroD{y}{x}\,dy=\ZhD{x}.
\]
Every target conditional distribution on the protected side makes the
one-step evidence factor have conditional mean at most one.  Consequently, if
a predictable target sequence satisfies
$q_i(\cdot\mid X_i)\in\mathcal H_h(X_i)$ almost surely at every monitored step,
the original wealth process remains a nonnegative supermartingale and retains
the same time-uniform crossing bound.  The orientation of this protected
region is determined entirely by the prespecified correction $h$; it is a
robustness region for the directional bet, not an enlargement of the original
null hypothesis.

The same half-space has a direct connection to the growth analysis in
\cref{sec:growth}.  For any $q(\cdot\mid x)\in\mathcal H_h(x)$, Jensen's
inequality gives
\[
  \E_q[\log e_i\mid X_i=x]
  \le
  \log\E_q[e_i\mid X_i=x]
  \le0,
\]
whenever the logarithmic expectation is well defined.  Thus no target on the
protected side can have positive conditional expected log evidence in favor
of the correction at that input.

The corrected predictive lies strictly outside this protected half-space
whenever the correction is nontrivial at $x$, meaning that
$h(x,Y)/\ZhD{x}$ is not equal to one $\pzeroD{\cdot}{x}$-almost surely.  Indeed,
\begin{align}
  \E_{Y\sim\phD{\cdot}{x}}
  \left[
    \frac{h(x,Y)}{\ZhD{x}}
  \right]
  &=
  \E_{Y\sim\pzeroD{\cdot}{x}}
  \left[
    \left\{
      \frac{h(x,Y)}{\ZhD{x}}
    \right\}^{2}
  \right]
  \notag\\
  &=
  1+
  \Var_{Y\sim\pzeroD{\cdot}{x}}
  \left(
    \frac{h(x,Y)}{\ZhD{x}}
  \right)
  \in(1,\infty],
  \label{eq:fails-at-ph}
\end{align}
where the second moment is interpreted in the extended sense.  This is exactly
what should happen: under the corrected predictive represented by the
alternative, the one-step evidence factor is expected to grow rather than to
retain the supermartingale property used for false-confirmation control.

\paragraph{Why generic closeness to the source is insufficient.}
The protected half-space is directional; it is not a KL, total-variation, or
Hellinger neighborhood around the source predictive.  For every nontrivial
tilt, there are target distributions outside $\mathcal H_h(x)$ that are
arbitrarily close to $\pzeroD{\cdot}{x}$ in total variation, Hellinger
distance, and $\KL\!\left(q(\cdot\mid x)\,\middle\|\,\pzeroD{\cdot}{x}\right)$.
Thus being close to the source in a generic distributional metric does not by
itself preserve the level-$\alpha$ crossing guarantee.  What matters for the
original bet is the correction-specific moment comparison in
\eqref{eq:moment-condition}.  A construction establishing this claim is given
in \cref{app:halfspace-details}.

\subsubsection{What Failure of the Moment Condition Means}
\label{sec:moment-condition-failure}
If \eqref{eq:moment-condition} fails at inputs visited by the process, the
one-step factor is no longer guaranteed to be a conditional e-value under
$q$, and the original supermartingale proof is unavailable.  This failure does
not by itself imply that
\[
  \Prob_q\!\left\{
    \sup_{t\ge0}M_t>\frac{1}{\alpha}
  \right\}
  >
  \alpha.
\]
The moment condition is sufficient for maximal crossing control and exact for
the one-step conditional e-value property, but failure of that sufficient
condition is not a converse false-confirmation result.  The crossing
probability may still be at most $\alpha$ for other, distribution-specific
reasons; it is simply no longer controlled by \cref{prop:misspecification-control}.

Negative long-run log-drift does not restore the missing anytime-valid
guarantee.  Under appropriate ergodic conditions, negative drift implies
$\log M_t\to-\infty$ almost surely, but the error criterion concerns the
probability of at least one boundary crossing over the entire path.  A process
that eventually decays may still cross early because of high-variance
increments or a single heavy-tailed jump.  Thus asymptotic decay and control
of the maximal process are distinct properties.  The cross-family
calculations in \cref{sec:cross-family} and experiments in
\cref{sec:exp-cross-family} illustrate both mechanisms.

The main conclusion is therefore directional.  The unmodified e-process has
its original level-$\alpha$ guarantee under the source predictive null and
retains the same time-uniform crossing bound for any predictable target
sequence that stays in the protected half-space at the inputs actually
visited.  It does not automatically protect a generic neighborhood of the
source or an arbitrary user-chosen class of plausible target distributions.
Broader uniform protection over a user-specified class would require
redesigning the e-process for that class and is beyond the scope of the present
paper.

\subsection{Anytime-Valid Confirmation, Refutation, and Overshoot}
\label{sec:refutation}

The preceding results use an upper boundary to reject the source predictive
null in the direction of the proposed correction.  A practitioner may also
want to stop in the opposite direction when the incoming target outcomes
provide sufficient evidence against the corrected predictive itself.  This
section asks:

\begin{quote}
Can the same monitored wealth process support both anytime-valid relative
confirmation and anytime-valid refutation of the proposed correction?
\end{quote}

\subsubsection{Anytime-Valid Two-Boundary Decisions}
\label{sec:two-boundary}

The answer is yes for the original, exactly normalized likelihood-ratio
process $M_t$ from \cref{thm:eprocess}.  It relies on the reciprocal likelihood ratio and therefore does not extend to a process formed using a conservative upper bound on the normalizer.

Suppose that
$h(x,y)>0$ for $\pzeroD{\cdot}{x}$-almost every $y$, so that
$\pzeroDobj$ and $\phDobj$ are mutually absolutely continuous.  Define the
corrected predictive null
\[
  H_h^{\mathrm{pred}}:\quad
  Y_i\mid\cG_i\sim \phD{\cdot}{X_i}.
\]
Let $\mathcal P_h^{\mathrm{pred}}$ denote the corresponding class of all
data-stream distributions, again allowing any admissible input mechanism.
The reciprocal e-values are
\[
  e_i^{\downarrow}
  =\frac{\pzeroD{Y_i}{X_i}}{\phD{Y_i}{X_i}}
  =\frac{\ZhD{X_i}}{h(X_i,Y_i)},
  \qquad
  M_t^{\downarrow}
  =\prod_{i=1}^t e_i^{\downarrow}
  =\frac{1}{M_t}.
\]
By the same argument as \cref{thm:eprocess}, with the roles of
$\pzeroDobj$ and $\phDobj$ exchanged, $M_t^{\downarrow}$ is a nonnegative
martingale under $H_h^{\mathrm{pred}}$, so
\[
  \sup_{P\in\mathcal P_h^{\mathrm{pred}}}
  \Prob_P\!\left(
    \inf_{t\ge0}M_t<\alpha
  \right)
  =
  \sup_{P\in\mathcal P_h^{\mathrm{pred}}}
  \Prob_P\!\left(
    \sup_{t\ge0}M_t^{\downarrow}>\frac{1}{\alpha}
  \right)
  \le\alpha.
\]
\paragraph{The two-boundary sequential decision.}
Monitoring the single wealth process $M_t$ gives two anytime-valid rejection
rules for the simple predictive pair:
\[
  M_t>1/\alpha
  \quad\Longrightarrow\quad
  \text{reject }H_0^{\mathrm{pred}}
  \text{ and relatively confirm }\phDobj\text{ over }\pzeroDobj,
\]
and
\[
  M_t<\alpha
  \quad\Longrightarrow\quad
  \text{reject }H_h^{\mathrm{pred}}
  \text{ and refute the proposed corrected predictive in favor of }\pzeroDobj.
\]
The probability of ever making the upper rejection is at most $\alpha$ when
$H_0^{\mathrm{pred}}$ is true, and the probability of ever making the lower
rejection is at most $\alpha$ when $H_h^{\mathrm{pred}}$ is true.  Neither
boundary proves that the predictive distribution favored by that boundary is
the true target predictive: the upper boundary rejects $\pzeroDobj$, whereas
the lower boundary rejects $\phDobj$.  If neither boundary is crossed, the
procedure remains inconclusive rather than accepting either model.

This is Wald's two-boundary sequential probability ratio test
\citep{WaldA1945aoms}.  The conditional formulation allows arbitrary adaptive
input selection provided that the same input mechanism is used under the two
predictive hypotheses.  Each error guarantee holds only under its
corresponding simple null; behavior under other target distributions is
characterized in \cref{sec:misspecification-control}.

The asymmetry of that qualification deserves emphasis, because the lower
boundary is the more fragile of the two in practice.  The upper boundary is
protected under the source predictive null and, by
\cref{prop:misspecification-control}, over the whole protected half-space
$\mathcal H_h(x)$.  The lower-boundary guarantee established here is calibrated
under $H_h^{\mathrm{pred}}$, that is, when the target predictive is exactly the
proposed corrected predictive.  We do not develop an analogous
misspecification-robustness class for the reciprocal process.  A target that
matches $\phDobj$ in the feature the correction acts on but differs from it in
some other respect is outside $\mathcal P_h^{\mathrm{pred}}$, so the guarantee
proved here does not apply.  A practitioner who wants to retire a correction
should therefore treat a lower crossing as rejection of the entire corrected
predictive relative to the source, and not as evidence that the correction
magnitude alone was wrong in the direction it was designed to test.

\subsubsection{Why the Actual Null Crossing Probability Can Be Below $\alpha$}
\label{sec:overshoot}
The boundary $1/\alpha$ gives a valid upper bound on the probability of ever
falsely confirming the correction, but the actual source-null crossing
probability is generally smaller than $\alpha$.  This does not change the
decision rule; it explains its conservativeness.  At the crossing time, the
wealth usually jumps beyond $1/\alpha$ rather than landing exactly on it.  The
following result quantifies this \emph{overshoot} and, for the exact predictive
likelihood ratio, separates its contribution from the probability of eventual
crossing under the corrected predictive.

\begin{proposition}[Overshoot identity]
\label{prop:overshoot}
Let $P_0$ denote the data-stream distribution under
$H_0^{\mathrm{pred}}$, let $(M_t)$ be a nonnegative $P_0$-martingale with
$M_0=1$, and define
\[
  \tau^*=\inf\{t:M_t>1/\alpha\}.
\]
If $P_0(\tau^*<\infty)>0$, then
\begin{equation}
  \E_{P_0}
  \left[
    M_{\tau^*}\mathbf 1\{\tau^*<\infty\}
  \right]
  \le1,
  \qquad
  P_0(\tau^*<\infty)
  \le
  \frac{1}{
    \E_{P_0}[M_{\tau^*}\mid\tau^*<\infty]
  }
  <\alpha.
  \label{eq:overshoot-ineq}
\end{equation}

For the exactly normalized predictive-correction process
\[
  M_t
  =
  \prod_{i=1}^t
  \frac{\phD{Y_i}{X_i}}{\pzeroD{Y_i}{X_i}},
\]
let $P_h$ denote the data-stream distribution under
$H_h^{\mathrm{pred}}$ and suppose that $P_h$ and $P_0$ use the same,
possibly adaptive, input mechanism.  Then $P_h\ll P_0$ on every $\cF_t$,
$M_t$ is their likelihood ratio. This one-sided absolute continuity follows
from \cref{lem:ratio} and does not require the strict-positivity assumption used
for reciprocal refutation. Moreover,
\begin{equation}
  \E_{P_0}
  \left[
    M_{\tau^*}\mathbf 1\{\tau^*<\infty\}
  \right]
  =
  P_h(\tau^*<\infty).
  \label{eq:overshoot-exact}
\end{equation}
Consequently, whenever $P_0(\tau^*<\infty)>0$,
\begin{equation}
  P_0(\tau^*<\infty)
  =
  \frac{
    P_h(\tau^*<\infty)
  }{
    \E_{P_0}[M_{\tau^*}\mid\tau^*<\infty]
  }.
  \label{eq:overshoot-ratio}
\end{equation}
In particular, if $\log M_t\to+\infty$ $P_h$-almost surely---as under the
positive-drift condition of \cref{cor:correct}---then
$P_h(\tau^*<\infty)=1$, the null crossing probability is positive, and
\begin{equation}
  P_0(\tau^*<\infty)
  =
  \frac{1}{
    \E_{P_0}[M_{\tau^*}\mid\tau^*<\infty]
  }.
  \label{eq:overshoot-equality}
\end{equation}
\end{proposition}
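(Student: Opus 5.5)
The plan has two halves: a generic optional-stopping bound for any nonnegative martingale, and a change-of-measure identity specific to the exact predictive likelihood ratio.

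\emph{Generic inequality.} For the first display in \eqref{eq:overshoot-ineq}, fix $t$ and apply optional stopping to the bounded stopping time $\tau^*\wedge t$, which gives $\E_{P_0}[M_{\tau^*\wedge t}]=1$. Since $M\ge0$,
\[
\E_{P_0}[M_{\tau^*}\mathbf 1\{\tau^*\le t\}]
=\E_{P_0}[M_{\tau^*\wedge t}\mathbf 1\{\tau^*\le t\}]\le 1 .
\]
Letting $t\to\infty$ by monotone convergence yields $\E_{P_0}[M_{\tau^*}\mathbf 1\{\tau^*<\infty\}]\le1$.

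On $\{\tau^*<\infty\}$ we have $M_{\tau^*}>1/\alpha$ pointwise. If this event has positive probability, then $c\defeq\E_{P_0}[M_{\tau^*}\mid\tau^*<\infty]>1/\alpha$ strictly, because the expectation of a random variable exceeding $1/\alpha$ everywhere on a set of positive measure strictly exceeds $1/\alpha$; the case $c=\infty$ is allowed. Dividing $P_0(\tau^*<\infty)\,c\le1$ by $c$ gives $P_0(\tau^*<\infty)\le 1/c<\alpha$.

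\emph{Change of measure.} For the exactly normalized process, I would first establish $P_h\ll P_0$ on $\cF_t$ with density $M_t$. Both measures are built sequentially from the same input kernels $X_i\mid\cF_{i-1}$ and from the outcome kernels $\pzeroD{\cdot}{X_i}$ and $\phD{\cdot}{X_i}$, respectively. For any bounded $\cF_t$-measurable $f$, I would compute $\E_{P_h}[f]$ by iterating conditional expectations backwards in $i$. At each outcome step I replace integration against $\phD{\cdot}{X_i}$ by integration against $\pzeroD{\cdot}{X_i}$ times $h(X_i,y)/\ZhD{X_i}$, which is exactly \cref{lem:ratio}. The input steps are identical under both measures and contribute no factor. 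An induction on $t$ then gives $\E_{P_h}[f]=\E_{P_0}[fM_t]$. Absolute continuity holds because $\phD{\cdot}{x}$ vanishes wherever $\pzeroD{\cdot}{x}$ does, so no strict positivity of $h$ is needed.

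With the density in hand, since $\{\tau^*=s\}\in\cF_s$, I would write
\[
P_h(\tau^*\le t)=\sum_{s\le t}\E_{P_0}[M_s\mathbf 1\{\tau^*=s\}]=\E_{P_0}[M_{\tau^*}\mathbf 1\{\tau^*\le t\}],
\]
and let $t\to\infty$ by monotone convergence to obtain \eqref{eq:overshoot-exact}. Dividing by $P_0(\tau^*<\infty)$ gives \eqref{eq:overshoot-ratio}.

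For the final claim, suppose $\log M_t\to+\infty$ $P_h$-a.s. Then $M_t$ eventually exceeds $1/\alpha$, so $P_h(\tau^*<\infty)=1$. By \eqref{eq:overshoot-exact}, $\E_{P_0}[M_{\tau^*}\mathbf 1\{\tau^*<\infty\}]=1>0$, which forces $P_0(\tau^*<\infty)>0$. Substituting into \eqref{eq:overshoot-ratio} then yields \eqref{eq:overshoot-equality}.

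\emph{Main obstacle.} I expect the delicate step to be the rigorous likelihood-ratio identification $dP_h/dP_0|_{\cF_t}=M_t$ under an arbitrary adaptive input mechanism. The input law may be deterministic or lack a density, so the argument should avoid joint densities altogether. Instead it should use the kernel-by-kernel tower argument above, with the input kernels cancelling because they are shared. Care is also needed to use the canonical version $e_i=h/Z_h$, which agrees with $p_h/p_0$ only $P_0$-a.s. This is harmless, since every expectation in the argument is taken under $P_0$.
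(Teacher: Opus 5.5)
Your proposal is correct and follows essentially the same route as the paper. Both use optional stopping at $\tau^*\wedge t$ for the generic inequality. For the exact identity, both apply the change of measure $dP_h|_{\cF_t}/dP_0|_{\cF_t}=M_t$ on each event $\{\tau^*=t\}\in\cF_t$ and sum by monotone convergence. The differences are cosmetic:
\begin{enumerate}
\item You pass to the limit by monotone convergence of $\E_{P_0}[M_{\tau^*}\mathbf 1\{\tau^*\le t\}]\le 1$ instead of applying Fatou's lemma to the stopped martingale, so you never need the almost-sure limit $M_\infty$.
\item You write out the kernel-by-kernel tower argument that the paper compresses into ``the input factors cancel.''
\item You get $P_0(\tau^*<\infty)>0$ directly from $\E_{P_0}[M_{\tau^*}\mathbf 1\{\tau^*<\infty\}]=1$ rather than by the paper's contradiction argument.
\end{enumerate}
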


\begin{proof}[Proof sketch]
Optional stopping for the stopped nonnegative martingale followed by Fatou's
lemma gives \eqref{eq:overshoot-ineq}.  For the exact identity, use
$M_t=dP_h|_{\cF_t}/dP_0|_{\cF_t}$ on each event $\{\tau^*=t\}$ and sum over
$t$.  See \cref{app:proof-overshoot} for details.
\end{proof}

\paragraph{Interpretation of the overshoot identity.}
The first inequality shows strict conservativeness whenever the null crossing
probability is positive, because $M_{\tau^*}>1/\alpha$ on the crossing event.
For the predictive likelihood-ratio process, \eqref{eq:overshoot-ratio} shows
that two quantities determine the source-null crossing probability: the mean
wealth at crossing under $P_0$ and the probability that the upper boundary is
ever reached under $P_h$.  If positive drift under $P_h$ makes eventual
crossing certain, then \eqref{eq:overshoot-equality} isolates the overshoot
effect exactly.  This is an accounting identity, not an additional testing
claim.  The corresponding empirical check is whether
$\Prob(\mathrm{cross})\,\E[M_{\tau^*}\mid\mathrm{cross}]$ is close to one;
\cref{sec:exp-overshoot} reports this product for the Gaussian null
experiment.

\subsection{Practical Monitoring Procedure}
\label{sec:procedure}

The preceding results yield two monitoring modes that should be selected before observing the target outcomes:
\begin{enumerate}
\item \emph{Exact relative confirmation}: use the exact normalizer, preserve the likelihood-ratio and cumulative log-score interpretations, and, when desired, monitor the reciprocal lower boundary to refute the corrected predictive.
\item \emph{Conservative source-null confirmation}: use a certified upper bound on the normalizer when exact normalization is unavailable, retaining anytime-valid rejection of the source predictive null at a predictable cost in log evidence.
\end{enumerate}

The core construction uses
\[
  \ZhD{x}=\E_{Y\sim\pzeroD{\cdot}{x}}[h(x,Y)].
\]
When this quantity is available in closed form, exact normalization gives the cleanest procedure. In more complicated models, one may instead use a positive, predictable, certified upper bound
\begin{equation}
  \widetilde Z_i(X_i)\ge\ZhD{X_i}
  \label{eq:safe-normalizer-bound}
\end{equation}
computed after observing $X_i$ but before observing $Y_i$. Then
\[
  \widetilde e_i=\frac{h(X_i,Y_i)}{\widetilde Z_i(X_i)}
\]
satisfies
\[
  \E[\widetilde e_i\mid\cG_i]
  =\frac{\ZhD{X_i}}{\widetilde Z_i(X_i)}\le1
\]
under the source predictive null. Its running product is therefore a nonnegative supermartingale. Relative to exact normalization, the one-step log-evidence loss is
\begin{equation}
  \log e_i-\log\widetilde e_i
  =\log\frac{\widetilde Z_i(X_i)}{\ZhD{X_i}}\ge0.
  \label{eq:normalizer-log-penalty}
\end{equation}
A loose upper bound is safe but may substantially delay confirmation. Unless equality holds, the approximate factor is no longer the exact likelihood ratio $\phD{Y_i}{X_i}/\pzeroD{Y_i}{X_i}$; its log wealth is the exact cumulative log-score advantage minus the accumulated normalizer penalty.

The direction of approximation is essential. A denominator smaller than $\ZhD{X_i}$ makes the conditional mean exceed one. An ordinary unbiased Monte Carlo estimate is not generally safe either. If a positive estimate $\widehat Z_i$ is conditionally independent of $Y_i$ given $\cG_i$ and satisfies $\E[\widehat Z_i\mid\cG_i]=\ZhD{X_i}$, then Jensen's inequality gives
\[
  \E\!\left[\frac{h(X_i,Y_i)}{\widehat Z_i}\,\middle|\,\cG_i\right]
  =\ZhD{X_i}\E\!\left[\frac1{\widehat Z_i}\,\middle|\,\cG_i\right]\ge1,
\]
with strict inequality unless $\widehat Z_i=\ZhD{X_i}$ almost surely. Reciprocal refutation also requires exact normalization. Under the strict-positivity condition used for reciprocal refutation in \cref{sec:two-boundary}, the corrected predictive null gives
\[
  \E\!\left[\frac{\widetilde Z_i(X_i)}{h(X_i,Y_i)}\,\middle|\,\cG_i\right]
  =\frac{\widetilde Z_i(X_i)}{\ZhD{X_i}}\ge1.
\]
Further numerical details are given in \cref{app:normalizer-numerics}; all experiments use closed-form Gaussian normalizers.

\begin{algorithm}[t]
\caption{Sequential evidence for a prespecified predictive correction}
\label{alg:conditional}
\begin{algorithmic}[1]
\Require Source predictive $\pzeroD{y}{x}$, fixed correction $h(x,y)$, level $\alpha$, and either the exact normalizer or a predictable certified upper bound
\State $S_0\gets0$
\For{$i=1,2,\ldots$}
  \State Observe $X_i$ and compute $D_i^{\rm use}=\ZhD{X_i}$ in exact mode or $D_i^{\rm use}=\widetilde Z_i(X_i)$ in conservative mode
  \State Observe $Y_i$
  \State Compute $\ell_i\gets\log h(X_i,Y_i)-\log D_i^{\rm use}$
  \State Update $S_i\gets S_{i-1}+\ell_i$
  \If{$S_i>\log(1/\alpha)$}
    \State \textbf{stop:} reject the source predictive null and report relative evidence in the direction of the correction
  \ElsIf{exact two-boundary mode and $S_i<\log\alpha$}
    \State \textbf{stop:} reject the corrected predictive null and refute the correction relative to the source
  \EndIf
\EndFor
\end{algorithmic}
\end{algorithm}

In \cref{alg:conditional}, exact mode has $S_i=\log M_i$, the cumulative
log-score advantage of $\phDobj$ over $\pzeroDobj$. An upper crossing rejects $H_0^{\mathrm{pred}}$ and gives anytime-valid relative confirmation of the corrected predictive; a lower crossing rejects $H_h^{\mathrm{pred}}$ and refutes it relative to the source. With a certified upper-bound normalizer, an upper crossing still rejects the source predictive null, but the accumulated wealth is conservative directional evidence and no lower refutation boundary is available. The algorithm concerns only corrections to $Y\mid X$; pure covariate shift requires separate input-stream methods \citep{Choi2026arxiv_avccs}.

\subsection{Predictable Corrections}
\label{sec:predictable}

The fixed-correction setting is the cleanest for interpretation.  Validity also permits corrections chosen predictably.

\bigskip
\begin{proposition}[Predictable tilts]
\label{prop:predictable}
At time $i$, suppose that after observing $X_i$ but before observing $Y_i$, the practitioner chooses a nonnegative function \(h_i(X_i,\cdot)\) that is \(\cG_i\)-measurable.  Let
\[
  Z_i(X_i)=\int h_i(X_i,y)\pzeroD{y}{X_i}\,dy
\]
be finite and positive, and define
\[
  e_i=\frac{h_i(X_i,Y_i)}{Z_i(X_i)}.
\]
Then $e_i$ is a conditional e-value under $H_0^{\mathrm{pred}}$, and $\prod_{i=1}^t e_i$ is a nonnegative martingale under $H_0^{\mathrm{pred}}$, hence an e-process.
\end{proposition}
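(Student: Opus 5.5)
The plan is to reproduce the argument of \cref{prop:evalue} and \cref{thm:eprocess}, with the fixed tilt $h$ replaced by the $\cG_i$-measurable random function $h_i(X_i,\cdot)$. The only new ingredient is a conditioning step. Because $h_i(X_i,\cdot)$ and $Z_i(X_i)$ are fixed given $\cG_i$, while $Y_i$ has conditional law $\pzeroD{\cdot}{X_i}$ under $H_0^{\mathrm{pred}}$, conditioning on $\cG_i$ reduces the computation to the fixed-tilt case covered by \cref{lem:ratio}.

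\textbf{Step 1 (joint measurability).} I will interpret ``$h_i(X_i,\cdot)$ is $\cG_i$-measurable'' to mean that the map $(\omega,y)\mapsto h_i(X_i(\omega),y)$ is $\cG_i\otimes\mathcal B(\cY)$-measurable. By Tonelli's theorem, $Z_i(X_i)=\int h_i(X_i,y)\,\pzeroD{y}{X_i}\,dy$ is then $\cG_i$-measurable, because $\pzeroD{\cdot}{X_i}$ is a $\cG_i$-measurable kernel. It follows that $e_i$ is $\cF_i$-measurable and nonnegative.

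\textbf{Step 2 (conditional mean one).} This is the only point requiring care. I will use the disintegration, or ``freezing'', lemma for regular conditional distributions. If $g(\omega,y)\ge0$ is $\cG_i\otimes\mathcal B(\cY)$-measurable and $Y_i\mid\cG_i$ has the regular conditional law $\pzeroD{\cdot}{X_i}$, then
\[
  \E_P\bigl[g(\cdot,Y_i)\mid\cG_i\bigr](\omega)=\int g(\omega,y)\,\pzeroD{y}{X_i(\omega)}\,dy
  \qquad\text{a.s.}
\]
This is proved first for product indicators $\mathbf 1_A(\omega)\mathbf 1_B(y)$ and then extended by a monotone class argument.

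Apply the lemma to $g=h_i(X_i,y)/Z_i(X_i)$. The right-hand side equals $Z_i(X_i)/Z_i(X_i)=1$, using the assumption that $Z_i$ is finite and positive. This is exactly \eqref{eq:normalized-ratio-mean-one} applied pointwise in $\omega$. Hence $\E_P[e_i\mid\cG_i]=1$ for every $P\in\mathcal P_0^{\mathrm{pred}}$, so $e_i$ is a conditional e-value.

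\textbf{Step 3 (martingale and e-process).} Next, apply the tower property over $\cF_{i-1}\subseteq\cG_i$ to obtain $\E_P[e_i\mid\cF_{i-1}]=1$. Since $M_{t-1}\ge0$ is $\cF_{t-1}$-measurable, conditional monotone convergence (which handles possibly infinite products of nonnegative terms) gives
\[
  \E_P[M_t\mid\cF_{t-1}]=M_{t-1}\,\E_P[e_t\mid\cF_{t-1}]=M_{t-1}.
\]
By induction $\E_P M_t=1$, so $M_t$ is integrable and $(M_t)$ is a nonnegative martingale.

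Finally, the e-process property follows from the remark in \cref{sec:setup}. For any stopping time $\tau$, optional stopping at $\tau\wedge t$ gives $\E_P[M_{\tau\wedge t}]=1$. Applying Fatou's lemma as $t\to\infty$, with the $\liminf$ convention on $\{\tau=\infty\}$, yields $\E_P[M_\tau]\le1$ uniformly over $\mathcal P_0^{\mathrm{pred}}$.

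The main obstacle is Step 2: making rigorous the substitution of the random, history-dependent tilt inside the conditional expectation. I expect the freezing lemma to resolve it cleanly, given the joint-measurability convention fixed in Step 1.
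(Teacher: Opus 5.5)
Your proposal is correct and follows the same route as the paper: condition on $\cG_i$, treat $h_i(X_i,\cdot)$ and $Z_i(X_i)$ as fixed so that the null conditional law gives $\E[e_i\mid\cG_i]=1$, then iterate conditional expectations exactly as in the proof of \cref{thm:eprocess}. Your joint-measurability convention and the freezing lemma make rigorous the paper's informal claim that the tilt is a ``fixed function of the yet-unobserved outcome'' given $\cG_i$, and your optional-stopping-plus-Fatou step is the argument the paper sketches in \cref{sec:setup}.
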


\begin{proof}[Proof sketch]
After conditioning on $\cG_i$, the predictable tilt is fixed as a function of the yet-unobserved outcome, so the same normalization argument applies.  See \cref{app:proof-predictable}.
\end{proof}

Predictable updating can be useful for adaptive betting or safe model
monitoring, but it changes the inferential object.  A fixed $h$ is designed to
confirm one prespecified correction.  A predictable sequence $h_i$ instead
shows that an adaptive betting strategy has accumulated evidence against the
source predictive null; without additional precommitment, it does not confirm
any single correction selected after observing the stream.

\section{Structured Conditional Predictive Corrections}
\label{sec:special-cases}

In this section, we instantiate the general construction of \cref{sec:eprocess-construction}
for several structured predictive corrections.  The common principle is that scientific or
operational knowledge specifies, before monitoring, how the source conditional predictive
$\pzeroD{y}{x}$ should be modified.  The resulting e-process then evaluates whether that
particular corrected predictive outpredicts its prespecified reference on the incoming target
stream.  The procedure does not estimate an unrestricted target distribution, and a crossing
does not by itself identify the physical mechanism responsible for the evidence.

The corrections considered here act on $Y\mid X$.  A label-shift assumption induces a
particular correction of the conditional label predictive, whereas concept drift motivates
direct corrections to the conditional response distribution.  Pure covariate shift changes
the marginal input distribution while leaving $Y\mid X$ unchanged and therefore requires a
separate input-stream construction, as developed in the covariate-balance paper
\citep{Choi2026arxiv_avccs}.  A full joint-shift analysis would combine input-distribution
evidence with the conditional predictive evidence studied here.

The subsections serve complementary purposes.  \Cref{sec:label-shift} derives the
label-shift-induced correction.  \Cref{sec:concept-drift-correction} develops Gaussian mean
and variance corrections and shows how evidence behaves when their magnitudes are
misspecified.  \Cref{sec:exp-tilt} gives a unifying exponential-tilt representation.
\Cref{sec:mixtures} handles prespecified uncertainty over the correction, while
\cref{sec:beyond-tolerance-correction} changes the decision problem by replacing the source
reference with an operational tolerance boundary.  Finally, \cref{sec:cross-family} studies
what the structured wealth processes do when the actual target change belongs to a different
mechanism family.  \Cref{tab:shift-correction-map} summarizes the inferential role of each
construction.

\begin{table}[ht!]
\centering
\small
\caption{Representative structured predictive corrections and their operational
interpretations.  The first two rows specify a corrected conditional predictive directly;
the mixture construction aggregates prespecified correction paths, and the tolerance
construction deliberately changes the reference and the null hypothesis.}
\label{tab:shift-correction-map}
\setlength{\tabcolsep}{4pt}
\begin{tabular}{p{0.16\linewidth}p{0.28\linewidth}p{0.26\linewidth}p{0.22\linewidth}}
\toprule
Setting or construction & Structural premise & Inferential object & Operational question \\
\midrule
Label shift &
$P_t^Y\ne P_s^Y$, with $P_t^{X\mid Y}=P_s^{X\mid Y}$ &
Label tilt $h(x,y)=w(y)$, inducing $\phD{y}{x}$ &
Deploy $\phDobj$ or retain $\pzeroDobj$? \\
\addlinespace
Concept drift &
$P_t^{Y\mid X}\ne P_s^{Y\mid X}$ &
Tilt encoding a mean, variance, subgroup, or other response correction &
Apply the structured correction or retain the source predictive? \\
\addlinespace
Mixture over corrections &
A correction family is prespecified, but its index is uncertain &
Weighted mixture $M_t^{\rm mix}=\int M_t(\theta)\,d\Pi(\theta)$ of full wealth paths &
Has the prespecified family accumulated global evidence against the source? \\
\addlinespace
Beyond-tolerance comparison &
A tolerated region and an actionable design point are prespecified &
Ratio $p_{\rm alarm}(y\mid x,\Dtr)/p_{\rm tol}(y\mid x,\Dtr)$ &
Is there sufficient evidence to act beyond the tolerated region? \\
\bottomrule
\end{tabular}
\end{table}

\subsection{Label-Shift Correction}
\label{sec:label-shift}

Under label shift, the conditional input distribution given the label is stable,
\[
  P_t^{X\mid Y}=P_s^{X\mid Y},
\]
while the label marginal changes, $P_t^Y\ne P_s^Y$.  Let $w(y)\ge0$ be a prespecified label weight with a finite, positive normalizer under
$\pzeroD{\cdot}{x}$ at every relevant input.  The corresponding tilt and corrected predictive are
\begin{equation}
  h(x,y)=w(y),
  \qquad
  \phD{y}{x}
  =
  \frac{w(y)\pzeroD{y}{x}}
       {\E_{Y\sim\pzeroD{\cdot}{x}}[w(Y)]}.
  \label{eq:label-shift-correction}
\end{equation}
Multiplying $w$ by a positive constant leaves $\phDobj$ unchanged because that constant
cancels in the normalizer.  The one-step e-value is
\begin{equation}
  e_i
  =
  \frac{w(Y_i)}
       {\E_{Y\sim\pzeroD{\cdot}{X_i}}[w(Y)]}.
  \label{eq:label-shift-evalue}
\end{equation}

If the source predictive equals the source conditional law $P_s^{Y\mid X}$, the target
satisfies exact label shift, and $P_t^Y\ll P_s^Y$, choosing
\[
  w(y)=\frac{dP_t^Y}{dP_s^Y}(y)
\]
recovers the target conditional law $P_t^{Y\mid X}$ through Bayes' rule.  If
$\pzeroDobj$ is instead a fitted or posterior predictive approximation to
$P_s^{Y\mid X}$, the same weighting still defines a valid prespecified predictive
correction, but it need not equal the exact target conditional distribution.  The e-process
assesses the induced corrected predictive itself; it does not require the label-shift model
to be exactly correct.  This specializes the general framework to anytime-valid confirmation
of a prespecified label-shift correction \citep{Choi2026testing}.

\paragraph{Deploy-or-retain decision.}
An external study, a known intervention, historical information, or a planned change in the
target population may suggest $w(y)$ before target outcomes are observed.  The operational
choice is whether to retain $\pzeroDobj$ or deploy the induced $\phDobj$.  This choice is
especially relevant when target labels are expensive, delayed, or revealed sequentially
\citep{LiptonZ2018icml,AlexandariAM2020icml,GargS2020neurips}.  An upper-boundary crossing
provides anytime-valid relative evidence for deploying $\phDobj$ over $\pzeroDobj$.  Failure
to cross is inconclusive: the correction may not be predictively preferable, or the observed
labels may simply be insufficiently informative.  By \cref{sec:growth}, positive expected
log-growth requires only that the induced corrected predictive be closer to the actual target
predictive than the source predictive is in conditional KL divergence.  The proposed label
weights therefore need not coincide with the exact target label ratio to accumulate positive
evidence.

\subsection{Concept-Drift Corrections}
\label{sec:concept-drift-correction}

Here concept drift refers to a change in the conditional response distribution,
\[
  P_t^{Y\mid X}\ne P_s^{Y\mid X}.
\]
A tilt $h(x,y)$ can encode a prespecified modification of this conditional distribution.
The setting is most useful when an intervention, protocol change, new deployment site, or
engineering analysis suggests a particular form of change before monitoring begins
\citep{QinSJ2012arc,KellyCJ2019bmcm,SubbaswamyA2020biostatistics}.  The e-process then asks
whether that proposed correction predicts the target outcomes better than retaining the
source predictive.  It is not a generic detector that searches the observed target stream
for an unknown form of concept drift.

The following Gaussian examples separate two common operational questions: whether to shift
the conditional center and whether to widen or narrow the conditional predictive
uncertainty.  They also make the relative nature of confirmation explicit: a correction may
accumulate positive evidence even when its magnitude is not exactly correct.

\subsubsection{Conditional Mean Correction}
\label{sec:mean-shift}

Suppose the source predictive is
\[
  \pzeroD{y}{x}
  =
  \mathcal N\!\left(
    y\mid \mu_0(x),\sigma_0^2(x)
  \right),
\]
and the proposed correction shifts the conditional mean along a known shape $g(x)$ by a
prespecified coefficient $\delta$:
\[
  \phD{y}{x}
  =
  \mathcal N\!\left(
    y\mid \mu_0(x)+\delta g(x),\sigma_0^2(x)
  \right).
\]
The one-step log e-value is
\begin{equation}
  \log e_i
  =
  \frac{\delta g(X_i)\{Y_i-\mu_0(X_i)\}}{\sigma_0^2(X_i)}
  -
  \frac{\delta^2g^2(X_i)}{2\sigma_0^2(X_i)}.
  \label{eq:mean-shift-e}
\end{equation}
Thus the process compares the proposed mean-corrected predictive with the source predictive
along the prespecified direction $g$.  A constant $g(x)=1$ gives a common additive offset,
whereas a nonconstant $g$ permits the correction to vary across subgroups, doses,
instruments, or other scientifically meaningful input characteristics.

\paragraph{Deciding whether to apply a directional offset.}
Instrument recalibration, a bridging experiment, or simulator-to-reality analysis may
suggest the offset $\delta g(x)$ before new outcomes arrive
\citep{WorkmanJJ2018as,KennedyMC2001jrsssb}.  The operational choice is whether to retain
$\mu_0(x)$ or deploy $\mu_0(x)+\delta g(x)$.  Sequential evidence is useful when calibration
outcomes arrive one at a time or when data collection may stop as soon as the proposed
adjustment is sufficiently supported.

\paragraph{Magnitude mismatch.}
Suppose the actual target predictive is Gaussian with conditional mean
$\mu_0(x)+\delta^*g(x)$ and variance $\sigma_0^2(x)$, whereas the proposed correction uses
$\delta$.  The conditional drift is
\begin{equation}
  \Gamma_\delta(x)
  =
  \frac{g^2(x)}{\sigma_0^2(x)}
  \left(
    \delta\delta^*-\frac{\delta^2}{2}
  \right)
  =
  \frac{g^2(x)}{2\sigma_0^2(x)}
  \delta(2\delta^*-\delta).
  \label{eq:mean-mismatch-drift}
\end{equation}
At an informative input, $g(x)\ne0$, positive drift is therefore equivalent to
\[
  \delta(2\delta^*-\delta)>0.
\]
For the common case $\delta^*>0$ with a proposed positive correction, this reduces to
\[
  0<\delta<2\delta^*.
\]
A correction with $0<\delta<\delta^*$ underestimates the true shift but still improves on the
source predictive.  A correction with $\delta^*<\delta<2\delta^*$ overestimates the shift but
remains closer to the target mean than the source mean does.  When $\delta>2\delta^*$, the
proposed offset overshoots so severely that it is worse in expected log score than applying
no correction.  The drift is maximized at $\delta=\delta^*$.

The factor
\[
  \frac{g^2(x)}{\sigma_0^2(x)}
\]
is the local information scale for this comparison.  Inputs at which the proposed mean
change is large relative to the predictive variance accumulate evidence more rapidly.  This
connects the structured correction directly to the adaptive-design result studied in
\cref{sec:exp-adaptive}.

\subsubsection{Conditional Variance Correction}
\label{sec:variance-shift}

Under the same Gaussian source predictive, suppose the conditional mean is retained while
the variance is multiplied by a prespecified factor $c>0$:
\[
  \phD{y}{x}
  =
  \mathcal N\!\left(
    y\mid \mu_0(x),c\sigma_0^2(x)
  \right).
\]
Then
\begin{equation}
  \log e_i
  =
  -\frac12\log c
  +
  \frac12\left(1-\frac1c\right)
  \frac{\{Y_i-\mu_0(X_i)\}^2}{\sigma_0^2(X_i)}.
  \label{eq:variance-shift-e}
\end{equation}
For $c>1$, large standardized residuals favor variance inflation.  For $0<c<1$, small
standardized residuals favor variance contraction.

\paragraph{Deciding whether predictive uncertainty should be widened or narrowed.}
A change in assay protocol, laboratory batch, sensor precision, or operating conditions may
leave the conditional mean approximately stable while changing response variability
\citep{JohnsonWE2007biostatistics,LeekJT2010nrg}.  A prespecified factor $c$ then represents
an operational proposal to widen or narrow the predictive distribution.  Relative
confirmation may support revised predictive intervals, quality-control limits, or downstream
risk thresholds while controlling false confirmation under the source predictive null.

\paragraph{Magnitude mismatch and mechanism ambiguity.}
Suppose the actual target predictive is Gaussian with the same conditional mean and variance
$c^*\sigma_0^2(x)$.  The conditional drift of a proposed factor $c$ is
\begin{equation}
  \Gamma_c(x)
  =
  -\frac12\log c
  +
  \frac12\left(1-\frac1c\right)c^*.
  \label{eq:variance-mismatch-drift}
\end{equation}
For $c\ne1$, define
\[
  \rho(c)
  \defeq
  \frac{c\log c}{c-1}.
\]
Then
\begin{equation}
  \Gamma_c(x)>0
  \quad\Longleftrightarrow\quad
  \begin{cases}
    c^*>\rho(c), & c>1,\\[2pt]
    c^*<\rho(c), & 0<c<1.
  \end{cases}
  \label{eq:variance-mismatch-condition}
\end{equation}
Moreover,
\[
  1<\rho(c)<c \quad\text{when }c>1,
  \qquad
  c<\rho(c)<1 \quad\text{when }0<c<1.
\]
Thus a variance-inflation proposal may overstate the actual inflation and still outpredict the
source, and a variance-contraction proposal may similarly overstate the contraction while
remaining predictively preferable.  Correct specification, $c=c^*$, maximizes expected
log-growth over $c>0$, but exact specification is not required for positive drift.

The interpretation is nevertheless predictive rather than mechanistic.  The e-value in
\eqref{eq:variance-shift-e} is driven by squared residuals, which can be enlarged by a mean
shift, heavy tails, outliers, or other misspecification as well as by a genuine variance
increase.  A crossing therefore favors the variance-corrected predictive over the source
predictive; it does not establish variance change as the unique cause.  The cross-family
calculations in \cref{sec:cross-family} quantify this limitation.

\subsection{General Exponential-Family Predictive Tilts}
\label{sec:exp-tilt}

The preceding examples are instances of a common exponential-tilt construction.  Let
$\phi(x,y)\in\R^d$ be a prespecified vector of interpretable features and let
$\eta\in\R^d$ be a prespecified correction coefficient.  Define
\begin{equation}
  h_\eta(x,y)
  =
  \exp\{\eta^\top\phi(x,y)\},
  \qquad
  Z_\eta(x)
  =
  \E_{Y\sim\pzeroD{\cdot}{x}}
  \left[
    \exp\{\eta^\top\phi(x,Y)\}
  \right],
  \qquad
  \psi_x(\eta)=\log Z_\eta(x).
  \label{eq:exp-family-tilt}
\end{equation}
Whenever $Z_\eta(x)$ is finite and positive, the corrected predictive is
\begin{equation}
  p_\eta(y\mid x,\Dtr)
  =
  \exp\{\eta^\top\phi(x,y)-\psi_x(\eta)\}
  \pzeroD{y}{x},
  \label{eq:exp-family-predictive}
\end{equation}
and the one-step log e-value is
\begin{equation}
  \log e_i
  =
  \eta^\top\phi(X_i,Y_i)-\psi_{X_i}(\eta).
  \label{eq:exp-family-log-e}
\end{equation}
The feature vector $\phi$ determines which aspects of the predictive distribution are
modified, whereas $\eta$ determines the proposed direction and magnitude in that feature
space.  In the primary confirmatory interpretation, both are fixed before target outcomes
are observed.  Predictable updates are valid under \cref{sec:predictable}, but then the
result concerns an adaptive betting strategy rather than one fixed correction.

\paragraph{Label-shift correction.}
For categorical $Y\in\{1,\ldots,K\}$, take
\[
  \phi_k(x,y)=\mathbf 1\{y=k\},
  \qquad k=1,\ldots,K,
\]
and set $\eta_k=\log w_k$ for positive class weights $w_k$.  Then
\[
  h_\eta(x,y)=w_y,
  \qquad
  p_\eta(y\mid x,\Dtr)
  =
  \frac{w_y\pzeroD{y}{x}}
       {\sum_{k=1}^K w_k\pzeroD{k}{x}}.
\]
This recovers \cref{eq:label-shift-correction}.  Adding the same constant to every
$\eta_k$, equivalently multiplying every $w_k$ by the same positive factor, leaves the
corrected predictive unchanged; only relative class weights are identifiable.

\paragraph{Conditional mean correction.}
For the Gaussian source predictive, take
\[
  \phi_{\rm mean}(x,y)
  =
  \frac{g(x)\{y-\mu_0(x)\}}{\sigma_0^2(x)}
\]
and $\eta=\delta$.  Then
\[
  Z_\delta(x)
  =
  \exp\!\left\{
    \frac{\delta^2g^2(x)}{2\sigma_0^2(x)}
  \right\},
\]
and normalization gives
\[
  p_\delta(y\mid x,\Dtr)
  =
  \mathcal N\!\left(
    y\mid \mu_0(x)+\delta g(x),\sigma_0^2(x)
  \right).
\]
Thus the Gaussian mean correction is an exponential tilt in a variance-scaled residual.

\paragraph{Conditional variance correction.}
Under the same source predictive, take
\[
  \phi_{\rm var}(x,y)
  =
  \frac{\{y-\mu_0(x)\}^2}{2\sigma_0^2(x)},
  \qquad
  \eta=1-\frac1c.
\]
Then
\[
  h_c(x,y)
  =
  \exp\!\left\{
    \left(1-\frac1c\right)
    \frac{\{y-\mu_0(x)\}^2}{2\sigma_0^2(x)}
  \right\},
  \qquad
  Z_c(x)=\sqrt c,
\]
and the normalized predictive is
\[
  p_c(y\mid x,\Dtr)
  =
  \mathcal N\!\left(
    y\mid \mu_0(x),c\sigma_0^2(x)
  \right).
\]
Thus the variance correction is an exponential tilt in the squared standardized residual.

\paragraph{Subgroup-specific and combined corrections.}
Interactions between response features and prespecified input indicators produce localized
corrections.  For example, if $\mathcal A_1,\ldots,\mathcal A_J$ are prespecified subgroups,
features
\[
  \phi_j(x,y)
  =
  \mathbf 1\{x\in\mathcal A_j\}
  \frac{y-\mu_0(x)}{\sigma_0^2(x)}
\]
with coefficients $\eta_j$ encode subgroup-specific mean offsets.  Interactions between
class and subgroup indicators similarly encode subgroup-specific label corrections.  A
feature vector containing both linear and quadratic residual terms can encode a joint
mean-and-variance correction, and dose, treatment, instrument, or batch variables can enter
through prespecified interactions.

\subsection{Prespecified Mixtures over Correction Uncertainty}
\label{sec:mixtures}

A practitioner may know the broad form of a correction while remaining uncertain about its
magnitude, direction, or mechanism index.  Let $\{h_\theta:\theta\in\Theta\}$ be a
prespecified family, let $\Pi$ be a probability measure on $\Theta$ fixed before monitoring,
and define
\[
  M_t(\theta)=\prod_{i=1}^t e_i(\theta).
\]

\begin{proposition}[Prespecified mixture and correction panel]
\label{prop:correction-panel}
Assume that every $h_\theta$ satisfies \eqref{eq:normalizer} and that
$(\theta,\omega)\mapsto M_t(\theta)(\omega)$ is jointly measurable with respect
to $\mathcal B(\Theta)\otimes\cF_t$. By \cref{thm:eprocess}, for every
$P\in\mathcal P_0^{\mathrm{pred}}$ and every $\theta$, the process
$(M_t(\theta))_{t\ge0}$ is a nonnegative $P$-martingale with
$M_0(\theta)=1$. Then
\begin{equation}
  M_t^{\mathrm{mix}}
  =
  \int M_t(\theta)\,d\Pi(\theta)
  \label{eq:mixture-eprocess}
\end{equation}
is a nonnegative martingale under every
$P\in\mathcal P_0^{\mathrm{pred}}$, with $M_0^{\mathrm{mix}}=1$, and hence is an
e-process for the source predictive null class.  For a finite panel
$\{h^{(1)},\ldots,h^{(K)}\}$ with prespecified weights $\pi_k\ge0$ satisfying
$\sum_k\pi_k=1$,
\[
  M_t^{\mathrm{panel}}
  =
  \sum_{k=1}^K\pi_k M_t^{(k)}
\]
therefore provides an anytime-valid family-level test at level $\alpha$.
\end{proposition}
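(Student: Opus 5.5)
The argument is a conditional Fubini (Tonelli) exchange. I fix $P\in\mathcal P_0^{\mathrm{pred}}$ and use \cref{thm:eprocess}: for every $\theta$, $M_t(\theta)\ge0$, $M_0(\theta)=1$, and $\E_P[M_t(\theta)\mid\cF_{t-1}]=M_{t-1}(\theta)$. The plan has three steps.

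\textbf{Step 1: adaptedness and integrability.} Nonnegativity of $M_t^{\mathrm{mix}}$ is immediate. Joint measurability of $(\theta,\omega)\mapsto M_t(\theta)(\omega)$ with respect to $\mathcal B(\Theta)\otimes\cF_t$ and Tonelli's theorem make $\omega\mapsto\int M_t(\theta)(\omega)\,d\Pi(\theta)$ an $\cF_t$-measurable $[0,\infty]$-valued variable. Since $\Pi$ is a probability measure, $M_0^{\mathrm{mix}}=\int 1\,d\Pi=1$. Tonelli also gives
\[
  \E_P[M_t^{\mathrm{mix}}]=\int\E_P[M_t(\theta)]\,d\Pi(\theta)=1 .
\]
Hence $M_t^{\mathrm{mix}}$ is integrable and finite almost surely.

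\textbf{Step 2: the martingale property (the main obstacle).} I need to show $\E_P[M_t^{\mathrm{mix}}\mid\cF_{t-1}]=M_{t-1}^{\mathrm{mix}}$. This requires interchanging the conditional expectation with the $\Pi$-integral, and the conditional expectations $\E_P[M_t(\theta)\mid\cF_{t-1}]$ are only defined up to $\theta$-dependent null sets. I will avoid choosing jointly measurable versions by testing against events. Fix $A\in\cF_{t-1}$. The integrand $(\theta,\omega)\mapsto M_t(\theta)(\omega)\mathbf 1_A(\omega)$ is nonnegative and jointly measurable, so Tonelli and the per-$\theta$ martingale property give
\[
  \E_P[M_t^{\mathrm{mix}}\mathbf 1_A]
  =\int\E_P[M_t(\theta)\mathbf 1_A]\,d\Pi(\theta)
  =\int\E_P[M_{t-1}(\theta)\mathbf 1_A]\,d\Pi(\theta).
\]
The middle equality uses that $\E_P[M_t(\theta)\mathbf 1_A]=\E_P[M_{t-1}(\theta)\mathbf 1_A]$ for each fixed $\theta$. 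Applying Tonelli once more, the right-hand side equals $\E_P[M_{t-1}^{\mathrm{mix}}\mathbf 1_A]$. Joint measurability with respect to $\mathcal B(\Theta)\otimes\cF_{t-1}\subseteq\mathcal B(\Theta)\otimes\cF_t$ holds for the time-$(t-1)$ process by hypothesis. Because $M_{t-1}^{\mathrm{mix}}$ is $\cF_{t-1}$-measurable and integrable, equality of these integrals for every $A\in\cF_{t-1}$ is exactly the defining property of the conditional expectation. This gives the martingale identity.

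\textbf{Step 3: e-process and panel.} Since $P\in\mathcal P_0^{\mathrm{pred}}$ was arbitrary, $M_t^{\mathrm{mix}}$ is a nonnegative martingale with initial value one under every null distribution. By the remark in \cref{sec:setup} (optional stopping at $\tau\wedge t$ followed by Fatou's lemma), it is an e-process for the class. Ville's inequality then gives
\[
  \sup_P\Prob_P\{\sup_t M_t^{\mathrm{mix}}>1/\alpha\}\le\alpha .
\]
The finite panel is the special case $\Theta=\{1,\ldots,K\}$ with the discrete $\sigma$-algebra and $\Pi(\{k\})=\pi_k$. In that case joint measurability is automatic and the integral is the finite sum $\sum_k\pi_kM_t^{(k)}$, so the anytime-valid level-$\alpha$ family-level test follows directly.
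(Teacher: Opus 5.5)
Your proposal is correct and follows the paper's argument. Both use a Tonelli interchange of the $\Pi$-integral with expectation, the per-$\theta$ martingale property from \cref{thm:eprocess}, Ville's inequality, and the finite panel as the discrete special case. The one difference is that the paper invokes ``conditional Tonelli'' directly, while you justify that step by testing against events $A\in\cF_{t-1}$ and applying unconditional Tonelli twice. This avoids having to choose jointly measurable versions of $\E_P[M_t(\theta)\mid\cF_{t-1}]$. It also makes explicit the adaptedness and integrability ($\E_P[M_t^{\mathrm{mix}}]=1$) that the paper leaves implicit.
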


\begin{proof}[Proof sketch]
For any $P\in\mathcal P_0^{\mathrm{pred}}$, conditional Tonelli's theorem and the component
martingale property give
\[
  \E_P[M_t^{\mathrm{mix}}\mid\cF_{t-1}]
  =
  \int \E_P[M_t(\theta)\mid\cF_{t-1}]\,d\Pi(\theta)
  =
  \int M_{t-1}(\theta)\,d\Pi(\theta)
  =
  M_{t-1}^{\mathrm{mix}}.
\]
See \cref{app:proof-mixture} for the full argument.
\end{proof}

The mixture in \eqref{eq:mixture-eprocess} averages complete wealth paths:
\[
  \int \prod_{i=1}^t e_i(\theta)\,d\Pi(\theta).
\]
It is generally different from the product of pointwise mixtures
\[
  \prod_{i=1}^t\int e_i(\theta)\,d\Pi(\theta).
\]
The first construction corresponds to assigning initial wealth across persistent correction
indices and retaining those indices through time.  It is the relevant object when the
uncertainty concerns which one of a prespecified set of corrections may be useful.

Each component is individually an anytime-valid test, but inspecting many components and
reporting whichever one crosses or attains the largest wealth does not control the resulting
familywise or post-selection claim at level $\alpha$.  A crossing of $M_t^{\mathrm{mix}}$
supports one global statement: the prespecified weighted correction family has accumulated
evidence against the source predictive null.  It does not identify a unique $\theta$, confirm
every component, or license an unadjusted claim about the data-selected best component.
Simultaneous or selected componentwise claims require an explicit error allocation or another
prespecified rule.

\paragraph{Correction uncertainty.}
When prior knowledge identifies the direction of a correction but leaves its
magnitude uncertain, a prespecified mixture can distribute evidence across a
set of plausible corrections without committing to a single one before
monitoring.  Choosing a particular correction for subsequent deployment,
however, is a separate post-confirmation selection or decision problem unless
the selection rule is itself prespecified.

\subsection{Beyond-Tolerance Confirmation}
\label{sec:beyond-tolerance-correction}

In many applications, the relevant question is not whether the target predictive differs at
all from the source, but whether the departure is large enough to justify action.  This is a
different inferential problem from the target-misspecification robustness analysis in
\cref{sec:misspecification-control}.  There the original null remains the source predictive and one asks
where its crossing guarantee happens to persist.  Here the practitioner deliberately defines
a new null representing an acceptable region of change.

Let $p_{\rm tol}(y\mid x,\Dtr)$ denote the predictive distribution at the largest acceptable
shift, and let $p_{\rm alarm}(y\mid x,\Dtr)$ denote a prespecified actionable design point
beyond that boundary.  Define
\begin{equation}
  e_i^{\rm tol}
  =
  \frac{p_{\rm alarm}(Y_i\mid X_i,\Dtr)}
       {p_{\rm tol}(Y_i\mid X_i,\Dtr)},
  \qquad
  M_t^{\rm tol}
  =
  \prod_{i=1}^t e_i^{\rm tol}.
  \label{eq:tolerance-e}
\end{equation}
The evidence is now anchored at the tolerated boundary rather than at the uncorrected source
predictive.  An upper crossing favors the actionable predictive over the tolerated-boundary
predictive on the observed target stream.

\paragraph{Acting only on practically meaningful change.}
Small deviations may be scientifically real but too small to justify recalibration, process
interruption, clinical review, or another costly intervention.  A tolerance policy therefore
specifies, before monitoring, both an acceptable region and an actionable design point.  The
question becomes ``Is there sufficient evidence to act beyond tolerance?'' rather than
``Has any change occurred?'' \citep{PodkopaevA2022iclr}.  A crossing remains a relative
predictive statement: it does not estimate the exact target parameter, prove that the target
has reached the nominal alarm design point, or identify a unique mechanism.

Treating \eqref{eq:tolerance-e} merely as a likelihood ratio with $p_{\rm tol}$ as reference
would control false alarms only at that single boundary distribution.  A genuine tolerance
policy should control false alarms throughout the entire acceptable region.  A regular
one-parameter exponential-tilt family provides this stronger composite-null guarantee.

\bigskip
\begin{proposition}[False-alarm control over a composite tolerated region]
\label{prop:tolerance}
Let $\phi:\cX\times\cY\to\R$ and
\[
  p_\eta(y\mid x,\Dtr)
  =
  \frac{\exp\{\eta\phi(x,y)\}\pzeroD{y}{x}}{Z_\eta(x)},
  \qquad
  \psi_x(\eta)=\log Z_\eta(x),
\]
define a one-parameter tilted family.  Assume that there is an open interval $\mathcal I$,
common to all relevant inputs, on which every $\psi_x$ is finite.  Fix
$\eta_{\rm tol}\in\mathcal I$ and
$\eta_{\rm alarm}=\eta_{\rm tol}+\Delta\in\mathcal I$ with $\Delta>0$, and set
\[
  e_i^{\rm tol}
  =
  \frac{p_{\eta_{\rm alarm}}(Y_i\mid X_i,\Dtr)}
       {p_{\eta_{\rm tol}}(Y_i\mid X_i,\Dtr)}
  =
  \exp\{\Delta\phi(X_i,Y_i)\}
  \frac{Z_{\eta_{\rm tol}}(X_i)}{Z_{\eta_{\rm alarm}}(X_i)}.
\]
Suppose that
\[
  Y_i\mid\cG_i
  \sim
  p_{\eta_i}(\cdot\mid X_i,\Dtr),
\]
where $\eta_i$ is $\cG_i$-measurable, takes values in $\mathcal I$, and satisfies
$\eta_i\le\eta_{\rm tol}$ almost surely for every $i$.  Then
\[
  \E[e_i^{\rm tol}\mid\cG_i]\le1.
\]
Let $\mathcal P_0^{\rm tol}$ denote the class of all data-stream distributions induced by
predictable sequences $(\eta_i)$ satisfying $\eta_i\le\eta_{\rm tol}$ almost surely for
every $i$, together with any admissible input process.  Consequently,
$(M_t^{\rm tol})_{t\ge0}$ is a nonnegative supermartingale under every
$P\in\mathcal P_0^{\rm tol}$ and
\[
  \sup_{P\in\mathcal P_0^{\rm tol}}
  \Prob_P\!\left\{
    \sup_{t\ge0}M_t^{\rm tol}>\frac1\alpha
  \right\}
  \le\alpha.
\]
Thus the deliberately specified composite null is the collection of target streams whose
conditional natural parameter never exceeds the tolerated boundary.
\end{proposition}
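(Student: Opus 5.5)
The plan is to compute the conditional mean of $e_i^{\rm tol}$ in closed form under $p_{\eta_i}$, and to show it is at most one by convexity of the log-partition function $\psi_x$. The supermartingale property and the Ville bound then follow exactly as in \cref{prop:misspecification-control}.

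\textbf{Conditional mean.} Fix $i$ and condition on $\cG_i$, writing $x=X_i$ and $\eta=\eta_i$; both are $\cG_i$-measurable. Under $Y_i\mid\cG_i\sim p_{\eta}(\cdot\mid x,\Dtr)$,
\[
  \E[e_i^{\rm tol}\mid\cG_i]
  =\frac{Z_{\eta_{\rm tol}}(x)}{Z_{\eta_{\rm alarm}}(x)}
   \int \exp\{\Delta\phi(x,y)\}\frac{\exp\{\eta\phi(x,y)\}}{Z_\eta(x)}\,\pzeroD{y}{x}\,dy
  =\exp\bigl\{\psi_x(\eta+\Delta)-\psi_x(\eta)-\psi_x(\eta_{\rm tol}+\Delta)+\psi_x(\eta_{\rm tol})\bigr\}.
\]
This needs $\eta+\Delta\in\mathcal I$. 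That holds because $\eta\le\eta_{\rm tol}$ gives $\eta+\Delta\le\eta_{\rm alarm}\in\mathcal I$, and $\eta+\Delta>\eta\in\mathcal I$, so $\eta+\Delta$ lies between two points of the interval $\mathcal I$. Hence the integral is finite, and Tonelli (with a nonnegative integrand) justifies the computation as a version of the conditional expectation. Measurability in $(x,\eta)$ comes from joint measurability of $\phi$ and of the normalizers.

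\textbf{Convexity step.} It remains to show $\psi_x(\eta+\Delta)-\psi_x(\eta)\le\psi_x(\eta_{\rm tol}+\Delta)-\psi_x(\eta_{\rm tol})$ for $\eta\le\eta_{\rm tol}$. By H\"older's inequality, $\psi_x$ is convex on $\mathcal I$ as a log-moment-generating function. For a convex function, the increment $\eta\mapsto\psi_x(\eta+\Delta)-\psi_x(\eta)$ over a fixed step $\Delta>0$ is nondecreasing in $\eta$; this is the three-slope / chord inequality, applied within $\mathcal I$ where everything is finite. The exponent above is therefore $\le0$, giving $\E[e_i^{\rm tol}\mid\cG_i]\le1$. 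This step is the only real content. I expect the main obstacle to be bookkeeping rather than substance: confirming that all four points $\eta,\eta+\Delta,\eta_{\rm tol},\eta_{\rm alarm}$ lie in $\mathcal I$, and using the secant-monotonicity form of convexity so that no differentiability is required.

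\textbf{Conclusion.} Iterated conditioning as in the proof of \cref{thm:eprocess} gives $\E[M_t^{\rm tol}\mid\cF_{t-1}]=M_{t-1}^{\rm tol}\,\E[\E[e_t^{\rm tol}\mid\cG_t]\mid\cF_{t-1}]\le M_{t-1}^{\rm tol}$. So $M_t^{\rm tol}$ is a nonnegative supermartingale with $M_0^{\rm tol}=1$ under every $P\in\mathcal P_0^{\rm tol}$, for any input mechanism. Ville's inequality then bounds $\Prob_P\{\sup_t M_t^{\rm tol}>1/\alpha\}$ by $\alpha$. Since the bound holds for each such $P$, it holds for the supremum over $\mathcal P_0^{\rm tol}$.
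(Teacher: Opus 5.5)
Your proposal is correct and follows the paper's proof essentially step for step. Both arguments use the closed-form conditional mean $\exp\{\psi_x(\eta+\Delta)-\psi_x(\eta)-\psi_x(\eta_{\rm tol}+\Delta)+\psi_x(\eta_{\rm tol})\}$, the interval argument placing $\eta+\Delta$ in $\mathcal I$, monotonicity of fixed-length increments of the convex $\psi_x$, and then sequential composition with Ville's inequality. Your extra remarks (H\"older for convexity, the secant-slope form that avoids differentiability, Tonelli for the conditional expectation) only make explicit details the paper leaves implicit.
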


\begin{proof}[Proof sketch]
Conditional on $X_i=x$ and under $p_{\eta_i}$,
\[
  \E[e_i^{\rm tol}\mid\cG_i]
  =
  \exp\!\left\{
    \psi_x(\eta_i+\Delta)-\psi_x(\eta_i)
    -\psi_x(\eta_{\rm tol}+\Delta)+\psi_x(\eta_{\rm tol})
  \right\}.
\]
For a convex function, an increment of fixed length $\Delta$ is nondecreasing in its starting
point.  Since $\eta_i\le\eta_{\rm tol}$, the exponent is nonpositive.  See
\cref{app:proof-tolerance} for the full argument.
\end{proof}

\begin{remark}[Where evidence begins to favor action]
\label{rem:tolerance-midpoint}
Assume in addition that $\psi_x$ is differentiable.  Under $p_\eta$, the conditional drift at
input $x$ is
\begin{equation}
  \E[\log e_i^{\rm tol}\mid X_i=x]
  =
  \Delta\psi_x'(\eta)
  -
  \{\psi_x(\eta_{\rm tol}+\Delta)-\psi_x(\eta_{\rm tol})\}.
  \label{eq:tolerance-drift}
\end{equation}
The mean value theorem gives at least one
$\eta_{\rm mid}(x)\in(\eta_{\rm tol},\eta_{\rm alarm})$ satisfying
\[
  \psi_x'\{\eta_{\rm mid}(x)\}
  =
  \frac{\psi_x(\eta_{\rm tol}+\Delta)-\psi_x(\eta_{\rm tol})}{\Delta}.
\]
If $\psi_x$ is strictly convex, this point is unique and the drift is positive exactly when
$\eta>\eta_{\rm mid}(x)$.  Hence the parameter line has three operational regions:
$\eta\le\eta_{\rm tol}$ is the tolerated region with false-alarm control;
$\eta_{\rm tol}<\eta\le\eta_{\rm mid}(x)$ lies outside tolerance but still favors the
tolerated-boundary predictive in expected log score; and $\eta>\eta_{\rm mid}(x)$ gives
positive local evidence growth toward an alarm.  Thus the procedure need not wait until the
true parameter reaches $\eta_{\rm alarm}$, but not every departure just beyond tolerance has
positive drift.

When $\psi_x$ is a nonconstant quadratic function, as at informative inputs in the Gaussian
mean-shift family of \cref{sec:mean-shift},
\[
  \eta_{\rm mid}(x)
  =
  \frac{\eta_{\rm tol}+\eta_{\rm alarm}}{2}.
\]
The indifference region is therefore a structural consequence of comparing two separated
predictive design points, not a peculiarity of one numerical example.
\end{remark}

The Gaussian mean-shift case is studied in \cref{sec:exp-tolerance}.
\Cref{prop:tolerance} is deliberately one-dimensional.  For a vector parameter
$\eta\in\R^d$ and a fixed alarm direction $\Delta$, convexity gives an analogous ordering
along the ray $\{\eta_{\rm tol}-s\Delta:s\ge0\}$, but
$\eta\mapsto\psi_x(\eta+\Delta)-\psi_x(\eta)$ need not define a monotone half-space over all
of $\R^d$.

\subsection{Cross-Family Drift Calculus}
\label{sec:cross-family}

The drift formula \eqref{eq:gamma-x} can be evaluated under target distributions
that do not belong to the structural family used to construct the e-process.
Together with the protected-half-space analysis in
\cref{sec:protected-halfspace}, these calculations clarify
a central interpretive limitation: a crossing supports the chosen corrected predictive
relative to its reference, but the statistic used by that correction may also respond to a
different physical mechanism.  The following results organize the stress tests in
\cref{sec:exp-cross-family}.

\begin{remark}[Convex tilts and mean-preserving spreads]
\label{rem:convex}
Several corrections above use tilts that are convex in $y$.  The mean correction uses the
exponential of a linear function, and a variance-inflation correction uses the exponential of
a positive quadratic.  If $q(\cdot\mid x)$ is a mean-preserving spread of
$\pzeroD{\cdot}{x}$, then convex ordering gives
\[
  \E_{Y\sim q(\cdot\mid x)}[h(x,Y)]
  \ge
  \E_{Y\sim\pzeroD{\cdot}{x}}[h(x,Y)],
\]
with strict inequality for a nontrivial spread when the tilt is strictly convex and the
expectations are finite; the left-hand side may also be infinite.  The moment condition
\eqref{eq:moment-condition} can therefore fail under dispersion or tail inflation, so the
source-null e-process need not retain false-confirmation control under that misspecified
target, regardless of whether the misspecification belongs to the same
structural family as the proposed correction.
\end{remark}

\subsubsection{Variance Correction under an Arbitrary Target}
\label{sec:cross-family-variance}

Let
\[
  m_2(x)
  =
  \E_q[\{Y-\mu_0(x)\}^2\mid X=x]
\]
be the target conditional second moment about the source mean.  For the variance correction
in \eqref{eq:variance-shift-e},
\begin{equation}
  \Gamma_c(x)
  =
  -\frac12\log c
  +
  \frac12\left(1-\frac1c\right)
  \frac{m_2(x)}{\sigma_0^2(x)},
  \qquad
  \Gamma_c(x)>0
  \iff
  \begin{cases}
    \displaystyle
    \frac{m_2(x)}{\sigma_0^2(x)}
    >
    \frac{c\log c}{c-1}, & c>1,\\[0.9em]
    \displaystyle
    \frac{m_2(x)}{\sigma_0^2(x)}
    <
    \frac{c\log c}{c-1}, & 0<c<1.
  \end{cases}
  \label{eq:var-drift-general}
\end{equation}
For $c>1$, any mechanism that increases the second moment about the source mean beyond the
threshold produces positive drift.  In particular, a pure conditional mean shift
$\mu_q(x)=\mu_0(x)+\delta^*g(x)$ with unchanged target conditional variance gives
\[
  m_2(x)
  =
  \sigma_0^2(x)+\delta^{*2}g^2(x),
\]
so the variance-inflation process has positive drift whenever
\begin{equation}
  \frac{\delta^{*2}g^2(x)}{\sigma_0^2(x)}
  >
  \frac{c\log c}{c-1}-1.
  \label{eq:cross-family-threshold}
\end{equation}
For $c=1.8$, the right-hand side is approximately $0.3225$.  Thus a variance-process crossing
can be driven by a mean shift even when the conditional variance has not changed.

Conversely, a heavier-tailed target with the same second moment as the source leaves the
long-run drift negative for $c>1$, but this does not restore an anytime-valid crossing bound.
Rare large residuals may still produce an early boundary crossing when the moment condition
of \cref{prop:misspecification-control} fails.  A variance-process crossing should therefore
be interpreted as evidence for the variance-corrected predictive, not as identification of
variance inflation; whether deployment is scientifically appropriate may require checking
plausible mean-shift and tail-change explanations.

\subsubsection{Mean Correction under an Arbitrary Target}
\label{sec:cross-family-mean}

For the mean correction in \eqref{eq:mean-shift-e}, let $\mu_q(x)=\E_q[Y\mid X=x]$.  Then
\begin{equation}
  \Gamma_\delta(x)
  =
  \frac{\delta g(x)\{\mu_q(x)-\mu_0(x)\}}{\sigma_0^2(x)}
  -
  \frac{\delta^2g^2(x)}{2\sigma_0^2(x)}.
  \label{eq:mean-drift-general}
\end{equation}
The drift depends on $q$ only through its conditional mean.  Hence a mean-preserving target
change gives nonpositive drift and gives strictly negative drift whenever
$\delta g(x)\ne0$.  This drift calculation does not imply anytime-valid protection against
all mean-preserving changes.  As formalized in \cref{rem:convex}, the one-step
factor is an exponential of a linear residual and is therefore convex; a mean-preserving spread can make its conditional mean exceed one even
though its expected log is negative.  Dispersion or tail changes can consequently inflate
the maximal crossing probability without improving the long-run log-growth rate.

\paragraph{Practical diagnostic implication.}
Cross-family calculations are stress tests for interpretation, not alternative confirmatory
guarantees.  A variance-process crossing may be generated by a mean shift, and a mean-process
crossing may be made more frequent by dispersion or tail changes despite negative long-run
drift.  When several mechanisms are scientifically plausible, their prespecified wealth
paths may be inspected diagnostically, but a formal family-level claim should use the
mixture construction of \cref{prop:correction-panel} or an explicit error allocation.  When
exact normalization and mutual absolute continuity hold, the lower boundary of
\cref{sec:refutation} can also refute a proposed correction relative to the source predictive.

\section{Synthetic Experiments}
\label{sec:experiments}

The experiments use an oracle Gaussian conditional so that the e-process is
isolated from estimation error and can be checked against exact analytic
predictions.  They verify the main interpretations of the construction: the
process is anytime-valid under the conditional predictive null, its growth
matches the drift calculus of \cref{sec:growth,sec:cross-family}, it can be
anchored at an operational tolerance boundary, adaptive input selection can
accelerate evidence accumulation, and the observed failure modes under target
misspecification occur only where the moment condition of
\cref{prop:misspecification-control} fails.  The goal is not to benchmark
distribution-shift estimation.

\subsection{Common setup and reproducibility protocol}
\label{sec:exp-setup}

The source predictive is the oracle Gaussian conditional
\[
  \pzeroD{y}{x}=\mathcal N\!\left( y \mid \mu_0(x),\sigma^2 \right),
  \qquad
  \mu_0(x)=\sin(1.5x)+0.3x,
  \qquad
  \sigma=0.8.
\]
Unless otherwise stated, source inputs satisfy $X\sim\mathcal N(0,1)$.  We use an oracle conditional distribution to isolate the e-process behavior from estimation error.  Unless otherwise stated, all tests use $\alpha=0.05$, horizon $T=200$, and $5000$ Monte Carlo replications.  Confirmation occurs when $\log M_t>\log(1/\alpha)$.

For a proposed conditional mean correction
$p_\delta(y\mid x,\Dtr)=\mathcal N\!\left( y \mid \mu_0(x)+\delta g(x),\sigma^2 \right)$, we use
\[
  g(x)=1+0.5\tanh(x),
  \qquad
  \log e_i
  =
  \frac{\delta g(X_i)\{Y_i-\mu_0(X_i)\}}{\sigma^2}
  -\frac{\delta^2 g^2(X_i)}{2\sigma^2}.
\]
For a proposed variance correction
$p_c(y\mid x,\Dtr)=\mathcal N\!\left( y \mid \mu_0(x),c\sigma^2 \right)$, we use
\[
  \log e_i
  =
  -\frac12\log c
  +\frac12\left(1-\frac1c\right)
  \frac{\{Y_i-\mu_0(X_i)\}^2}{\sigma^2}.
\]

\paragraph{Seeding protocol.}
A root \texttt{SeedSequence(20260707)} is spawned into one child per experiment family; within each family, a single input matrix and a single standardized-residual matrix are drawn once and shared across all conditions of that family (residuals are rescaled per condition).  Two conditions that are mathematically identical---for instance, the variance-correction null and the $c_{\rm mis}=1$ row of the miscalibration sweep---therefore produce \emph{identical} numbers by construction, rather than approximately equal numbers from independent streams.

\paragraph{Analytic cross-checks.}
Under this setup $\E[g^2(X)]\approx1.0986$ for $X\sim\mathcal N(0,1)$.  Every mean final log-wealth in \cref{tab:baseline,tab:miscal,tab:crossfam} agrees with the corresponding fixed-correction analytic drift prediction $\overline\Gamma\cdot T$ from \cref{sec:growth,sec:cross-family} to within Monte Carlo error; for example, the correctly specified mean correction has $\overline\Gamma=\delta^2\E[g^2]/(2\sigma^2)=0.1738$ and observed mean $\log M_T=34.65\approx0.1738\times200$, and the correctly specified variance correction has $\overline\Gamma=0.1061$ and observed $21.22=0.1061\times200$.  The rows of \cref{tab:robust} require separate checks because the mixture, predictable plug-in, and adaptive-design strategies do not share one fixed drift.  For the five-component uniform mixture under the power condition, the leading finite-mixture approximation $\log M_T^{\rm mix}\approx\log(1/5)+\max_{\theta}\overline\Gamma_\theta T$ gives $33.15$ nats, within $0.01$ nats of the observed $33.14$.

\begin{table}[t]
\centering
\caption{Baseline synthetic experiments ($T=200$, $\alpha=0.05$, $5000$ replications).  Confirmation rate is the fraction of replications in which $\log M_t$ crosses $\log(1/\alpha)$ by $T$.  With $5000$ replications, the binomial Monte Carlo standard error of any reported rate is at most $0.0071$.  Median stopping time is reported among confirmed replications; for null conditions this conditions on the rare false confirmations.}
\label{tab:baseline}
\medskip
\begin{tabular}{llrrr}
\toprule
Experiment & Method or condition & Confirm. & Median $\tau$ & Mean $\log M_T$\\
\midrule
Mean shift & correct $\delta=0.45$ & 1.000 & 16 & 34.65\\
Mean shift & underspecified $\delta=0.225$ & 1.000 & 23 & 26.01\\
Mean shift & overspecified $\delta=0.9$ & 0.818 & 16 & $-0.15$\\
Mean-shift null & test $\delta=0.45$ & 0.034 & 15 & $-34.81$\\
Variance shift & correct $c=1.8$ & 0.997 & 26 & 21.22\\
Variance shift & underspecified $c=1.4$ & 0.999 & 33 & 17.78\\
Variance shift & overspecified $c=2.4$ & 0.972 & 25 & 17.46\\
Variance-shift null & test $c=1.8$ & 0.027 & 20 & $-14.33$\\
\bottomrule
\end{tabular}
\end{table}

\begin{figure}[t]
\centering
\includegraphics[width=\linewidth]{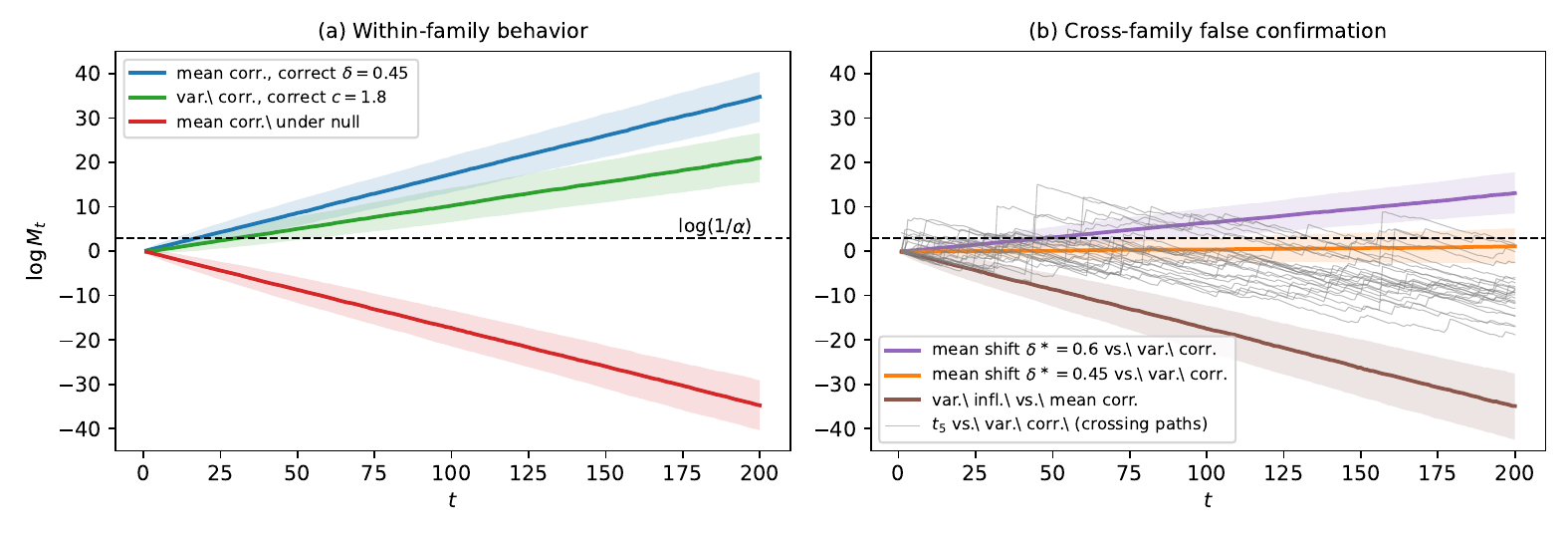}
\caption{Log-wealth trajectories (median and interquartile band over $5000$ replications; dashed line at $\log(1/\alpha)$).  (a) Within-family behavior: correctly specified mean and variance corrections grow linearly at the predicted drifts, while the null process drifts down.  (b) Cross-family false confirmation: a pure mean shift drives the variance-correction e-process upward at the drift predicted by \eqref{eq:var-drift-general}; a mean-preserving variance inflation drives the mean-correction e-process \emph{down} in median, yet individual paths can cross because the supermartingale condition fails; and under a heavy-tailed target (gray sample paths, $t_5$ with matched variance) crossings occur by single-observation jumps despite negative drift.}
\label{fig:paths}
\end{figure}

\subsection{Label-shift sanity check}

For Gaussian $\pzeroDobj$, the label tilt $h(y)=\exp(\beta y)$ is exactly equivalent to a global mean correction with $g\equiv1$ and $\delta=\beta\sigma^2$ (\cref{app:gaussian-tilt}).  With $\beta=0.55$ (so $\delta=0.352$), the maximum absolute difference between the cumulative label-tilt log-wealth paths and the corresponding mean-correction log-wealth paths, over $1000$ paths of length $200$, is below $2\times10^{-14}$.  This check uses $1000$ rather than the $5000$ paths of the other experiments because the comparison is a deterministic algebraic identity rather than a Monte Carlo estimate: the two implementations agree pathwise, so the number of paths affects only the number of opportunities to detect a coding discrepancy, not the precision of an estimate.  Thus the general predictive-correction implementation recovers the label-shift construction in this special case, up to floating-point accumulation.

\subsection{Conditional mean shift}
\label{sec:exp-mean-shift}

We generate
\[
  Y\mid X=x\sim
  \mathcal N\!\left( \mu_0(x)+0.45 g(x),\sigma^2 \right).
\]
The correctly specified mean correction confirms in all replications with median stopping time $16$ (\cref{fig:paths}a).  An underspecified correction $\delta=0.225$ remains powerful but slower, confirming in all replications with median stopping time $23$.  An overspecified correction $\delta=0.9$ has exactly zero log-drift under this target---by \eqref{eq:mean-drift-general}, $\Gamma_\delta\propto\delta^*\delta-\delta^2/2$ vanishes at $\delta=2\delta^*$---and correspondingly confirms in probability $0.818$ with mean final log wealth $-0.15$.  This illustrates the sensitivity of direct confirmation to correction magnitude.  Under the predictive null $(\delta^*=0)$, the same $\delta=0.45$ e-process confirms in probability $0.034$.

\paragraph{Checking the finite-horizon bound.}
This condition also calibrates \cref{cor:finite-horizon}.  For the correctly
specified correction the increments $\log e_i$ are i.i.d.\ with
$\overline\Gamma=\delta^2\E[g^2]/(2\sigma^2)=0.1738$ and, since
$\log e_i=\delta^2g^2(X_i)/(2\sigma^2)+\delta g(X_i)\varepsilon_i/\sigma$ with
$\varepsilon_i\sim\mathcal N(0,1)$ independent of $X_i$,
\[
  \Var(\log e_i)
  =
  \left\{\frac{\delta^2}{2\sigma^2}\right\}^2\Var\{g^2(X)\}
  +
  \frac{\delta^2}{\sigma^2}\E[g^2(X)]
  =
  0.3576 ,
\]
using $\E[g^2]=1.0986$ and $\E[g^4]=1.6073$.  With $b=\log 20=2.996$, the
variance-only bound \eqref{eq:finite-horizon-power} gives
$\Prob(\tau^*>t)\le0.686$, $0.356$, and $0.066$ at $t=30$, $50$, and $200$.
These are valid but very loose against an observed median stopping time of $16$
and a confirmation rate of $1.000$ by $t=200$, which is the expected behavior of
a variance-only bound whose decay is only of order
$\Var(\log e_i)/\{t\overline\Gamma^2\}$.  A certified sub-Gaussian proxy is also
available.  Since $g(x)\in(0.5,1.5)\subseteq[0.5,1.5]$, Hoeffding's lemma controls
the bounded $g^2(X_i)$ term. The Gaussian term is conditionally sub-Gaussian
with a variance proxy bounded uniformly in $X_i$. Consequently, if $A_i$
denotes the centered bounded term and $B_i$ the centered Gaussian term, then
the tower property gives
\[
  \E\exp\{\lambda(A_i+B_i)\}
  =
  \E\!\left[
    e^{\lambda A_i}
    \E\!\left(e^{\lambda B_i}\mid X_i\right)
  \right]
  \le
  \exp\!\left(\frac{\lambda^2s_B^2}{2}\right)
  \E e^{\lambda A_i}
  \le
  \exp\!\left(\frac{\lambda^2(s_A^2+s_B^2)}{2}\right).
\]
Thus the two proxies add in this particular dependent decomposition, yielding
\[
  s^2
  =
  \frac14\left\{
    \frac{\delta^2}{2\sigma^2}
    (1.5^2-0.5^2)
  \right\}^2
  +
  \frac{\delta^2}{\sigma^2}(1.5)^2
  =0.7369.
\]
Substitution into \eqref{eq:finite-horizon-subgaussian} gives the bounds
$0.895$, $0.644$, and $0.033$ at $t=30$, $50$, and $200$.  The certified proxy
is deliberately conservative and is therefore looser than Cantelli's bound at
the two shorter horizons, but its exponential decay becomes sharper by
$t=200$.

\subsection{Beyond-tolerance confirmation}
\label{sec:exp-tolerance}

In practice the actionable question may not be whether \emph{any} shift is present.  A deployment policy may tolerate shifts up to a boundary $\delta_{\rm tol}$ and ask for anytime-valid evidence only when the correction appears to exceed that tolerance.  In the one-sided Gaussian mean-shift family below, the correct likelihood ratio is not the actionable correction against the unshifted source, since that would also react to acceptable changes.  Instead, for an actionable level $\delta_{\rm alarm}>\delta_{\rm tol}$, we compare the actionable predictive to the tolerated-boundary predictive:
\[
  e_i^{\rm tol}
  =
  \frac{p_{\delta_{\rm alarm}}(Y_i\mid X_i,\Dtr)}
       {p_{\delta_{\rm tol}}(Y_i\mid X_i,\Dtr)}.
\]
For the Gaussian mean-correction family this gives
\[
  \log e_i^{\rm tol}
  =
  \frac{(\delta_{\rm alarm}-\delta_{\rm tol})g(X_i)\{Y_i-\mu_0(X_i)\}}{\sigma^2}
  -
  \frac{(\delta_{\rm alarm}^2-\delta_{\rm tol}^2)g^2(X_i)}{2\sigma^2}.
\]
This family is the exponential-family tilt of \cref{prop:tolerance} with
the standardized-residual feature
$\phi(x,y)=g(x)\{y-\mu_0(x)\}/\sigma^2$ of \cref{sec:exp-tilt} and
$\eta=\delta$, so that $\eta_{\rm tol}=\delta_{\rm tol}$ and
$\eta_{\rm alarm}=\delta_{\rm alarm}$ with no rescaling. Within the Gaussian
mean-shift family with unchanged conditional variance $\sigma^2$, the level is
controlled over the whole tolerated parameter set and not merely at its
boundary. Here the moment can be written in closed form: under a Gaussian
target with true mean-shift parameter $\delta^*$ and conditional variance
$\sigma^2$,
\begin{equation}
  \E_{\delta^*}\!\left[e_i^{\rm tol}\mid X_i=x,\cF_{i-1}\right]
  =\exp\!\left\{\frac{g^2(x)\,(\delta_{\rm alarm}-\delta_{\rm tol})(\delta^*-\delta_{\rm tol})}{\sigma^2}\right\},
  \label{eq:tolerance-moment}
\end{equation}
which is at most one for every $x$ exactly when
$\delta^*\le\delta_{\rm tol}$ (here $g(x)>0$ for every $x$). Therefore,
within this Gaussian family, $\prod_i e_i^{\rm tol}$ is a valid e-process
uniformly over the composite tolerated parameter regime
$\{\delta^*\le\delta_{\rm tol}\}$. The composite null is composite in the
mean-shift parameter only; the guarantee does not automatically extend to
target distributions outside this family, such as targets with an additional
variance or tail change. The test is intentionally not a sharp detector of
every $\delta^*>\delta_{\rm tol}$: because $\psi_x$ is quadratic here, \cref{rem:tolerance-midpoint} places the drift sign change exactly at the midpoint $(\delta_{\rm tol}+\delta_{\rm alarm})/2$, giving a practical indifference region between acceptable and clearly actionable shifts.

We set $\delta_{\rm tol}=0.25$ and $\delta_{\rm alarm}=0.55$, with the same source predictive, $T=200$, $\alpha=0.05$, and $5000$ replications as above.  \Cref{tab:tolerance-direct} shows that confirmation remains below $\alpha$ at and below the tolerance boundary, while becoming frequent once the true correction is clearly beyond tolerance.

\begin{table}[t]
\centering
\caption{Beyond-tolerance confirmation for the direct model-based e-process.  The tested likelihood ratio is $p_{\delta_{\rm alarm}}(\cdot\mid\cdot,\Dtr)/p_{\delta_{\rm tol}}(\cdot\mid\cdot,\Dtr)$ with $\delta_{\rm tol}=0.25$ and $\delta_{\rm alarm}=0.55$.  Median stopping time is reported among confirmed paths and is unstable when confirmation is extremely rare.}
\label{tab:tolerance-direct}
\medskip
\begin{tabular}{rrrr}
\toprule
True $\delta^*$ & Confirmation & Median $\tau$ & Mean $\log M_T$\\
\midrule
0.00 & $<0.001$ & 12 & $-41.02$\\
0.15 & 0.004 & 17.5 & $-25.87$\\
0.25 & 0.039 & 32 & $-15.42$\\
0.35 & 0.271 & 52 & $-5.25$\\
0.45 & 0.830 & 54 & 5.23\\
0.55 & 0.996 & 33 & 15.41\\
0.70 & 1.000 & 19 & 30.96\\
\bottomrule
\end{tabular}
\end{table}

This experiment gives the direct predictive-correction framework a policy interpretation.  The practitioner can prespecify a tolerance boundary, choose an actionable alternative beyond it, and obtain an anytime-valid alarm for evidence favoring the actionable correction over the tolerated one.  The price is the usual likelihood-ratio geometry: there is an indifference region between the two design points, and larger separation between $\delta_{\rm tol}$ and $\delta_{\rm alarm}$ gives a more conservative alarm near the boundary.

\subsection{Conditional variance shift and source miscalibration}
\label{sec:exp-variance}

We generate
\[
  Y\mid X=x\sim
  \mathcal N\!\left( \mu_0(x),c_{\rm mis}\sigma^2 \right),
\]
and run the variance-correction e-process with model variance $\sigma^2$.  The correctly specified correction ($c=c_{\rm mis}=1.8$) confirms in probability $0.997$ with median stopping time $26$.  Magnitude-misspecified corrections $c=1.4$ and $c=2.4$ also confirm frequently ($0.999$ and $0.972$) but with smaller average final log wealth ($17.78$ and $17.46$ versus $21.22$), matching the drifts $0.0889$ and $0.0873$ from \eqref{eq:var-drift-general}.  Under the predictive null ($c_{\rm mis}=1$), the $c=1.8$ correction confirms in probability $0.027$.

The same construction doubles as a stress test for source predictive miscalibration: if the deployment distribution equals the data-level source but the fitted predictive variance is too small by the factor $c_{\rm mis}$, the model-based predictive null is false, and the $c=1.8$ e-process confirms at the rates in \cref{tab:miscal}.  Because all rows share the same standardized residuals, the $c_{\rm mis}=1$ row \emph{is} the variance-null row and the $c_{\rm mis}=1.8$ row \emph{is} the correctly specified row of \cref{tab:baseline}.  This does not contradict \cref{thm:eprocess}, since the model-based predictive null is false when $c_{\rm mis}\ne1$.  It highlights the operational limitation of direct predictive-correction confirmation and motivates reference-calibrated variants.

\begin{table}[t]
\centering
\caption{Miscalibration sweep: data generated with variance $c_{\rm mis}\sigma^2$, 
e-process computed with model variance $\sigma^2$ and correction $c=1.8$.  
All rows share the same standardized residual draws.}
\label{tab:miscal}
\medskip
\begin{tabular}{lrrrrr}
\toprule
$c_{\rm mis}$ & 1.0 & 1.25 & 1.5 & 1.8 & 2.2\\
\midrule
Confirmation rate & 0.027 & 0.376 & 0.892 & 0.997 & 1.000\\
Mean $\log M_T$ & $-14.33$ & $-3.22$ & 7.89 & 21.22 & 39.00\\
\bottomrule
\end{tabular}
\end{table}

\subsection{Mixtures over correction magnitude}

For the true mean shift $\delta^*=0.45$, we form a uniform mixture over
\[
  \{-0.45,\,0.225,\,0.45,\,0.675,\,0.9\}.
\]
The mixture confirms in all replications with median stopping time $19$: slower than the oracle single correction (median $16$) but robust to uncertainty about the correction magnitude.  Under the predictive null, the same mixture confirms in probability $0.031$, verifying that the mixture e-process retains Type I control.

\subsection{Predictable plug-in corrections}
\label{sec:exp-predictable}

To exercise \cref{prop:predictable}, we replace the fixed $\delta$ by a predictable ridge estimate
\[
  \hat\delta_i
  =\mathrm{clip}\!\left(
  \frac{\sum_{j<i}g(X_j)\{Y_j-\mu_0(X_j)\}/\sigma^2}
       {1+\sum_{j<i}g^2(X_j)/\sigma^2},\;
  [-1.5,1.5]\right),
\]
computed from strictly past data, and bet with $\log e_i$ evaluated at $\hat\delta_i$.  Under the true shift $\delta^*=0.45$ the plug-in strategy confirms in all replications with median stopping time $25$ and mean final log wealth $32.03$---slower than the oracle ($16$, $34.65$) and comparable to the mixture ($19$, $33.14$), the price of learning the magnitude inside the wealth process.  Under the null it confirms in probability $0.025$, confirming validity.  As \cref{prop:predictable} notes, what is confirmed here is that an adaptive betting strategy found evidence against the source predictive null, not a single prespecified correction.

\subsection{Adaptive input selection}
\label{sec:exp-adaptive}

The conditional drift $\Gamma_h(x)$ of \cref{prop:drift} depends on the input, so an experimenter who controls the inputs can accelerate confirmation.  We implement an $\varepsilon$-greedy bandit ($\varepsilon=0.2$) over the input arms $\{-2,0,2\}$, with realized $\log e_i$ as the reward, so the design depends on past \emph{outcomes} and is genuinely adaptive.  Under the true shift $\delta^*=0.45$, the correctly specified per-arm drifts are $\KL(x)=\delta^2g^2(x)/(2\sigma^2)\in\{0.042,\,0.158,\,0.347\}$; the bandit concentrates on $x=2$ and achieves mean final log wealth $52.19$ versus $34.65$ under i.i.d.\ $\mathcal N(0,1)$ inputs (median stopping time $14$ versus $16$; the modest median gain reflects initialization and exploration overhead, while the $1.5\times$ drift gain compounds over the horizon).  Under the null the same adaptive design confirms in probability $0.031$: validity is unaffected by outcome-dependent input selection, exactly as \cref{thm:eprocess} asserts.

\begin{table}[t]
\centering
\caption{Robustness and adaptivity experiments ($T=200$, $5000$ replications).  Power rows use the true mean shift $\delta^*=0.45$; Type I rows use the predictive null.  The binomial Monte Carlo standard error of each reported confirmation rate is at most $0.0071$.}
\label{tab:robust}
\begin{tabular}{llrrr}
\toprule
Strategy & Condition & Confirm. & Median $\tau$ & Mean $\log M_T$\\
\midrule
Mixture over $\delta$ & power & 1.000 & 19 & 33.14\\
Mixture over $\delta$ & Type I & 0.031 & 18 & $-10.23$\\
Predictable plug-in $\hat\delta_i$ & power & 1.000 & 25 & 32.03\\
Predictable plug-in $\hat\delta_i$ & Type I & 0.025 & 16.5 & $-2.38$\\
Adaptive input selection & power & 0.998 & 14 & 52.19\\
Adaptive input selection & Type I & 0.031 & 14 & $-17.99$\\
\bottomrule
\end{tabular}
\end{table}

\subsection{Cross-family false confirmation}
\label{sec:exp-cross-family}

This experiment quantifies the limits established in \cref{sec:misspecification-control,sec:cross-family}: when the target predictive differs from the source, the moment condition \eqref{eq:moment-condition} is what preserves the same supermartingale proof, and both failure mechanisms of \cref{sec:moment-condition-failure} can occur at practically alarming rates when that condition breaks.  Results are in \cref{tab:crossfam} and \cref{fig:paths}b.

\begin{table}[t]
\centering
\caption{Cross-family confirmation ($T=200$, $5000$ replications).  $\overline\Gamma$ is the analytic per-step drift from \eqref{eq:var-drift-general} or \eqref{eq:mean-drift-general}.  No target in this table satisfies the moment condition \eqref{eq:moment-condition} for the tested correction: in the first, second, and fourth rows the relevant tilt moment is finite but strictly larger than the normalizer, while in the third and fifth rows (the two $t_5$ targets) it is $+\infty$.  In all rows the predictive null is false, so the source-null guarantee of \cref{thm:eprocess} does not apply.  The binomial Monte Carlo standard error of each reported confirmation rate is at most $0.0071$.}
\label{tab:crossfam}
\begin{tabular}{llrrrr}
\toprule
Target distribution & Tested correction & $\overline\Gamma$ & Confirm. & Median $\tau$ & Mean $\log M_T$\\
\midrule
Mean shift $\delta^*=0.45$ & variance $c=1.8$ & $+0.006$ & 0.619 & 55 & 1.16\\
Mean shift $\delta^*=0.6$ & variance $c=1.8$ & $+0.066$ & 0.972 & 36 & 13.16\\
$t_5$, matched variance & variance $c=1.8$ & $-0.072$ & 0.191 & 30 & $-14.24$\\
Variance inflation $1.8$ & mean $\delta=0.45$ & $-0.174$ & 0.141 & 13 & $-35.01$\\
$t_5$, matched variance & mean $\delta=0.45$ & $-0.174$ & 0.037 & 13.5 & $-34.80$\\
\bottomrule
\end{tabular}
\end{table}

\paragraph{Positive drift under the wrong family.}
A pure conditional mean shift with unchanged conditional variance inflates the second moment about $\mu_0$.  Because $\Gamma_c(x)$ in \eqref{eq:var-drift-general} is affine in $m_2(x)$, averaging the pointwise condition \eqref{eq:cross-family-threshold} over the input distribution gives positive average drift once $\delta^{*2}\E[g^2]/\sigma^2>0.3225$.  At $\delta^*=0.45$ the margin is thin ($\overline\Gamma=+0.006$), yet the confirmation rate is already $0.619$; at $\delta^*=0.6$ ($\overline\Gamma=+0.066$) it is $0.972$.  A practitioner who proposed a noise-inflation correction would confirm it with near certainty when the actual change is a response shift with no dispersion change at all.

\paragraph{Negative drift does not protect.}
Under a mean-preserving variance inflation ($c_{\rm mis}=1.8$, mean unchanged), the mean-correction e-process has strongly negative drift ($\overline\Gamma=-0.174$; mean final log wealth $-35.01$), yet it confirms in probability $0.141$, nearly three times the nominal level.  The mechanism is \cref{prop:misspecification-control}: the mean tilt is convex in $y$, so conditionally on the input $\E_q[e_i\mid X_i=x]=\exp\{(c_{\rm mis}-1)\delta^2g^2(x)/(2\sigma^2)\}>1$, breaking the supermartingale condition at every input.  Averaging over the input distribution---taking the expectation of the exponential, not the exponential of the expectation---predicts
\[
  \E_q[e_i]
  =\E_X\!\left[\exp\!\left\{\frac{(c_{\rm mis}-1)\delta^2g^2(X)}{2\sigma^2}\right\}\right]
  =1.1529,
\]
against an empirical mean of $1.154$; the corresponding Jensen lower bound $\exp\{(c_{\rm mis}-1)\delta^2\E[g^2]/(2\sigma^2)\}=1.1492$ is not the right prediction and understates the violation.  Under the heavy-tailed $t_5$ target with matched variance, the variance-correction e-process likewise has negative drift ($-0.072$) but confirms in probability $0.191$; here the mechanism is jumps rather than variance: $72\%$ of the crossings are produced by a \emph{single} observation whose quadratic log e-value exceeds the entire threshold (median crossing increment $4.9$ nats against a threshold of $3.0$).  The only row resembling nominal behavior is the mean correction under the $t_5$ target ($0.037$), and even that is not guaranteed by the present argument: the linear tilt has no moment generating function under a $t$ distribution, so $\E_q[h(x,Y)]=+\infty$, \eqref{eq:moment-condition} fails as badly as it can, and the rate merely happens to be small at this horizon.  This example emphasizes that failure of the conditional e-value moment condition can be severe even when the observed finite-horizon crossing rate happens to be small.

\paragraph{Interpretation.}
Confirmation is Neyman--Pearson evidence for $\phDobj$ against $\pzeroDobj$; it identifies neither the shift family nor the physical mechanism generating the evidence.  Where several prespecified mechanisms are scientifically plausible, the corresponding wealth paths can be inspected diagnostically, but a family-level confirmatory decision should use the panel mixture of \cref{prop:correction-panel} or an explicit error allocation.  The refutation boundary of \cref{sec:refutation} can additionally retire a wrongly proposed predictive correction when the source predictive accumulates sufficient relative evidence against it.
\subsection{Time-uniform Type I and overshoot accounting}
\label{sec:exp-overshoot}

Finally, we verify that the sub-$\alpha$ null confirmation rate is an overshoot effect, not a truncation effect.  Running the $\delta=0.45$ mean-correction e-process under the null for $5000$ replications to horizon $T=5000$ (an independent replication of the null condition), the cumulative confirmation rate is $0.0366$ at $t=200$ and \emph{identical} at $t=1000$ and $t=5000$: with null drift $-0.174$ per step, every crossing observed in this simulation occurs within the first few dozen observations, so the realized rate is not an artifact of stopping at $T=200$.

The equality \eqref{eq:overshoot-equality} applies here, since under $q=\phDobj$ the drift is $\KL=\delta^2\E[g^2]/(2\sigma^2)=0.1738>0$ and hence $\log M_t\to+\infty$ $P_h$-almost surely.  The mean wealth at crossing is $\E[M_{\tau^*}\mid\text{cross}]=27.7$, so the identity predicts $1/27.7=0.0361$, equivalently a product
\[
  \Prob(\text{cross})\cdot\E[M_{\tau^*}\mid\text{cross}]=0.0366\times27.7=1.014,
\]
against the theoretical value $1$.  There are $183$ crossing paths, and the binding uncertainty is the heavy-tailed conditional mean $\E[M_{\tau^*}\mid\text{cross}]$.  A separate $5000$-path null check (seed $12345$, horizon $T=200$) gives a Monte Carlo standard error of $0.95$ for this conditional mean and $0.083$ for the directly checked product $M_{\tau^*}\mathbf 1\{\tau^*<\infty\}$.  Thus the residual $1.4\%$ is well within Monte Carlo error; we do not claim agreement to a fixed number of digits.  For reference, the independent $T=200$ replication of \cref{tab:baseline} gives $0.0342$ for the same condition, so the run-to-run spread in the rate itself is of the same order as the discrepancy above.  The gap between the nominal $\alpha=0.05$ and the realized $\approx0.036$ is therefore accounted for by the discrete overshoot $M_{\tau^*}>1/\alpha$ in this experiment, and would shrink only if the per-step evidence increments were made smaller.

\section{Discussion and Conclusion}
\label{sec:discussion}
\label{sec:conclusion}

We developed an anytime-valid framework for evaluating a prespecified predictive
correction from sequentially observed target outcomes. Conditional on the
realized training data, a nonnegative tilt transforms the source predictive
distribution into a corrected predictive distribution, and the resulting
corrected-to-source predictive ratio yields a conditional e-value. Its running
product forms a nonnegative martingale under the source predictive null, so the
correction can be monitored continuously and evaluated at data-dependent
stopping times without inflating the probability of false confirmation. The
logarithm of this wealth process is the cumulative predictive log-score
advantage of the corrected predictive over the source predictive. A boundary
crossing therefore provides anytime-valid \emph{relative confirmation}: it
supports replacing the source predictive by the proposed correction, but does
not imply that the corrected predictive is the true target predictive or that
the mechanism encoded by the correction uniquely explains the observed shift.

The drift analysis clarifies when such evidence should accumulate. For a target
conditional distribution \(q\),
\[
\Gamma_h(x)
=
\KL\!\left(
q(\cdot\mid x)\,\middle\|\,\pzeroD{\cdot}{x}
\right)
-
\KL\!\left(
q(\cdot\mid x)\,\middle\|\,\phD{\cdot}{x}
\right)
\]
whenever the relevant divergences are finite. Thus, positive drift means that
the corrected predictive is closer to the target than the source predictive in
conditional Kullback--Leibler divergence. Exact specification is sufficient
but not necessary. Under stable sampling, the average drift gives the
asymptotic evidence gained per observation, while the finite-horizon bounds
translate this growth rate into explicit control of delayed confirmation.
Validity can also persist beyond the source predictive null: targets satisfying
\[
\E_{Y\sim q_i(\cdot\mid X_i)}
\bigl[h(X_i,Y)\bigr]
\le
\ZhD{X_i}
\]
keep the evidence process supermartingale-like and therefore preserve the same
time-uniform false-confirmation bound. This protected region is a robustness
property of the directional betting strategy, not an enlargement of the
scientific null. Conversely, negative long-run drift alone does not imply
time-uniform protection, since early variability or rare large jumps may still
produce a boundary crossing.

The same construction accommodates a range of structured predictive
corrections. Label tilts represent label-shift corrections, Gaussian linear and
quadratic tilts yield conditional mean and variance corrections, subgroup
interactions permit localized changes, and exponential-family tilts provide a
general representation for prespecified feature directions. Mixtures allow
uncertainty over a collection of corrections to be incorporated without
choosing one component before monitoring, whereas predictable tilts allow the
betting strategy to adapt to past observations and the current input.
Beyond-tolerance comparisons address a different operational question by
testing whether the shift is large enough to favor an actionable correction
over an entire tolerated region rather than merely detecting any departure
from the source predictive.

Exact normalization provides additional structure. When the source and
corrected predictives are mutually absolutely continuous, the reciprocal
likelihood ratio yields an anytime-valid lower boundary for refuting the
corrected predictive in favor of the source predictive. The two boundaries
correspond to distinct rejection guarantees under different predictive nulls;
neither establishes that one of the two predictives is the true target
distribution. The overshoot identity explains why the realized source-null
crossing probability can be strictly below the nominal level. A certified
upper bound on the normalizer still preserves conservative upper-bound
validity, but generally sacrifices the exact log-score interpretation and the
reciprocal refutation guarantee.

The synthetic experiments support these theoretical conclusions in controlled
settings. Correctly specified mean and variance corrections accumulate evidence
at their predicted rates, moderate mismatch can slow evidence growth without
eliminating it, and severe mismatch can reverse the drift. Adaptive input
selection can accelerate evidence accumulation without compromising
source-null validity, while cross-family experiments illustrate the principal
interpretive limitation: a correction-specific evidence process can respond to
changes generated by a different mechanism. The evidence therefore concerns
predictive advantage relative to the source reference rather than unique
mechanistic identification.

Several limitations remain. The guarantees are conditional on the fitted source
predictive and therefore do not automatically account for source-model
misspecification or uncertainty introduced during model fitting. The reciprocal
refutation guarantee developed here is calibrated under the corrected
predictive null, and we do not characterize a broader misspecification class
for the lower boundary. The correction, mixture weights, tolerance boundary,
and monitoring rule must be prespecified or chosen predictably under the stated
filtration, and useful evidence accumulation requires inputs that are
informative for distinguishing the source and corrected predictives. Pure
covariate shift is outside the present conditional-outcome framework and
requires separate monitoring of the input distribution.

Natural extensions include reference-calibrated or conformal layers that
protect against source-predictive misspecification, experimental-design
procedures that select informative inputs while preserving anytime validity,
family-level methods for principled post-confirmation selection among competing
corrections, and evaluation with fitted predictive models and
application-driven corrections on real-world data. Overall, the framework
provides a direct path from a scientifically motivated predictive correction to
continuously monitored, finite-sample-valid relative evidence.

\bibliographystyle{apalike}
\bibliography{sjc}

\clearpage
\appendix
\section{Proofs of Main Results and Additional Derivations}
\label{app:proofs}

\subsection{Proof of Lemma~\ref{lem:ratio}: Normalized Tilt as a Likelihood Ratio}
\label{app:proof-ratio}
For each fixed $x$, nonnegativity of $h$ and \eqref{eq:normalizer} imply
\[
  \int \phD{y}{x}\,dy
  =\frac{1}{\ZhD{x}}\int h(x,y)\pzeroD{y}{x}\,dy
  =1.
\]
Thus $\phD{\cdot}{x}$ is a probability distribution.  It is absolutely continuous with respect to $\pzeroD{\cdot}{x}$ because its density is obtained by multiplying the source density by the nonnegative factor $h(x,\cdot)/\ZhD{x}$.  The Radon--Nikodym ratio is therefore $h(x,y)/\ZhD{x}$ $\pzeroD{\cdot}{x}$-a.s., and integrating this ratio under $\pzeroD{\cdot}{x}$ gives one.

\subsection{Proof of Proposition~\ref{prop:evalue}: Per-Observation Conditional E-Value}
\label{app:proof-evalue}
Condition on $\cG_i=\sigma(\cF_{i-1},X_i)$.  Under $H_0^{\mathrm{pred}}$, the conditional distribution of $Y_i$ is $\pzeroD{\cdot}{X_i}$, while $X_i$ and $\ZhD{X_i}$ are fixed.  Hence
\[
  \E[e_i\mid\cG_i]
  =\frac{1}{\ZhD{X_i}}\int h(X_i,y)\pzeroD{y}{X_i}\,dy
  =1.
\]
This is exactly the conditional e-value property.

\subsection{Proof of Theorem~\ref{thm:eprocess}: Anytime-Valid Predictive-Correction Confirmation}
\label{app:proof-eprocess}
By \cref{prop:evalue},
\[
  \E[e_t\mid\cF_{t-1}]
  =\E\!\left[\E[e_t\mid\cG_t]\mid\cF_{t-1}\right]
  =1.
\]
Since $M_{t-1}$ is $\cF_{t-1}$-measurable,
\[
  \E[M_t\mid\cF_{t-1}]
  =M_{t-1}\E[e_t\mid\cF_{t-1}]
  =M_{t-1}.
\]
Thus $(M_t)$ is a nonnegative martingale with $M_0=1$.  Ville's inequality gives
\[
  \sup_{P\in\mathcal P_0^{\mathrm{pred}}}
  \Prob_P\!\left(\sup_{t\ge0}M_t>\frac1\alpha\right)\le\alpha,
\]
and the stopping-time statement follows because $\{\tau^*<\infty\}=\{\sup_tM_t>1/\alpha\}$.

\subsection{Additional Details on Safe Numerical Approximation of the Normalizer}
\label{app:normalizer-numerics}
Let $\widetilde Z_i(X_i)$ be positive, $\cG_i$-measurable, and satisfy $\widetilde Z_i(X_i)\ge\ZhD{X_i}$ almost surely.  Then
\[
  \E\!\left[
    \frac{h(X_i,Y_i)}{\widetilde Z_i(X_i)}
    \,\middle|\,
    \cG_i
  \right]
  =\frac{\ZhD{X_i}}{\widetilde Z_i(X_i)}
  \le1,
\]
so sequential composition yields a nonnegative supermartingale.  Relative to exact normalization, the log increment is reduced by
\[
  \log\frac{\widetilde Z_i(X_i)}{\ZhD{X_i}}.
\]

For a random numerical estimate, two conditioning arguments show the problem.  First, once a positive estimate $\widehat Z_i$ is generated before $Y_i$ and included in the pre-outcome information, the conditional mean of the approximate factor is $\ZhD{X_i}/\widehat Z_i$ and exceeds one on every undershoot.  Second, suppose instead that the auxiliary randomness is averaged out, is conditionally independent of $Y_i$ given $\cG_i$, and satisfies $\E[\widehat Z_i\mid\cG_i]=\ZhD{X_i}$.  Then
\[
  \E\!\left[
    \frac{h(X_i,Y_i)}{\widehat Z_i}
    \,\middle|\,
    \cG_i
  \right]
  =\ZhD{X_i}\,
    \E\!\left[\frac1{\widehat Z_i}\,\middle|\,\cG_i\right]
  \ge1
\]
by Jensen's inequality, with strict inequality unless the estimate is exact almost surely.  Thus unbiasedness of $\widehat Z_i$ does not imply validity after inversion.

Finally, replacing $\ZhD{X_i}$ by an upper bound changes the reciprocal factor from $\ZhD{X_i}/h(X_i,Y_i)$ to $\widetilde Z_i(X_i)/h(X_i,Y_i)$.  Under the strict-positivity condition imposed in \cref{sec:two-boundary}, $H_h^{\mathrm{pred}}$ gives the latter conditional mean $\widetilde Z_i(X_i)/\ZhD{X_i}\ge1$, so it is not generally an e-value for refutation.  The lower boundary in \cref{sec:refutation} therefore requires exact normalization.

\subsection{Proof of Proposition~\ref{prop:drift}: Conditional Drift Decomposition}
\label{app:proof-drift}
The identity $\E_q[\log e_i\mid\cG_i]=\Gamma_h(X_i)$ follows from $Y_i\mid\cG_i\sim q(\cdot\mid X_i)$ and the definition of $e_i$.  The hypothesis $\E_q|\log e_i|<\infty$ gives
\[
  \E_q|\Gamma_h(X_i)|
  =\E_q\big|\E_q[\log e_i\mid\cG_i]\big|
  \le\E_q|\log e_i|<\infty
\]
by conditional Jensen.  Hence $\log M_t$, $\sum_{i\le t}\Gamma_h(X_i)$, and $N_t$ are integrable.  When the two KL terms are finite, adding and subtracting $\log q(y\mid x)$ gives
\begin{align*}
  \Gamma_h(x)
  &=\int q(y\mid x)\log\frac{q(y\mid x)}{\pzeroD{y}{x}}\,dy
    -\int q(y\mid x)\log\frac{q(y\mid x)}{\phD{y}{x}}\,dy\\
  &=\KL\{q(\cdot\mid x)\|\pzeroD{\cdot}{x}\}
    -\KL\{q(\cdot\mid x)\|\phD{\cdot}{x}\}.
\end{align*}
The increments $D_i=\log e_i-\Gamma_h(X_i)$ satisfy $\E_q[D_i\mid\cG_i]=0$ and therefore also $\E_q[D_i\mid\cF_{i-1}]=0$, so $N_t=\sum_{i\le t}D_i$ is a martingale under $P_q$.  Under $\sup_i\E[D_i^2\mid\cG_i]\le v$,
\[
  \sum_{i=1}^\infty\frac{\E_q[D_i^2]}{i^2}
  \le v\sum_{i=1}^\infty i^{-2}<\infty.
\]
The martingale strong law, equivalently Chow's theorem followed by Kronecker's lemma, yields $N_t/t\to0$ almost surely.  On the event
\[
  \liminf_{t\to\infty}\frac1t\sum_{i=1}^t\Gamma_h(X_i)>0,
\]
choose $\epsilon>0$ smaller than half this liminf.  Then $N_t/t\ge-\epsilon$ eventually and the average drift is at least $2\epsilon$ eventually, so $\log M_t\ge\epsilon t$ eventually.  Hence $\log M_t\to\infty$ and $\tau^*<\infty$.

\subsection{Proof of Corollary~\ref{cor:ergodic}: Asymptotic Growth Under I.I.D. or Stationary-Ergodic Sampling}
\label{app:proof-ergodic}
In the i.i.d.\ case, the sequential assumptions imply that the pairs $(X_i,Y_i)$ are i.i.d.\ with joint distribution $q^X(dx)q(dy\mid x)$.  In the more general case, stationarity and ergodicity of the pair process are assumed directly.  Since $\log e_i$ is a fixed measurable function of $(X_i,Y_i)$ and $\E_{q^Xq}|\log e_1|<\infty$, the ordinary strong law, respectively Birkhoff's theorem, gives
\[
  \frac1t\log M_t
  =\frac1t\sum_{i=1}^t\log e_i
  \longrightarrow \E_{q^X q^{Y\mid X}}[\log e_1]
  =\E_{q^X}[\Gamma_h(X)]
  \quad\text{a.s.}
\]

\subsection{Proof of Corollary~\ref{cor:finite-horizon}: Finite-Horizon Crossing Bounds}
\label{app:proof-finite-horizon}
Write $\gamma=\overline\Gamma(q;h)>0$, $S_t=\log M_t=\sum_{i=1}^t Z_i$, and $a_t=t\gamma-b>0$.  Since $\{\tau^*>t\}\subseteq\{S_t\le b\}$,
\[
  \Prob_q(\tau^*>t)
  \le
  \Prob_q\{S_t-t\gamma\le-a_t\}.
\]
Independence and $\Var_q(Z_i)\le v$ imply $\Var_q(S_t)\le tv$.  Cantelli's one-sided inequality therefore yields
\[
  \Prob_q(\tau^*>t)
  \le
  \frac{\Var_q(S_t)}{\Var_q(S_t)+a_t^2}
  \le
  \frac{tv}{tv+a_t^2},
\]
which is \eqref{eq:finite-horizon-power}.

If each centered increment is sub-Gaussian with variance proxy $s^2$, independence gives
\[
  \E_q\exp\{\theta(S_t-t\gamma)\}
  \le
  \exp\!\left(\frac{t\theta^2s^2}{2}\right).
\]
Applying the Chernoff bound to $-(S_t-t\gamma)$ and optimizing at $\theta=a_t/(ts^2)$ gives
\[
  \Prob_q(S_t-t\gamma\le-a_t)
  \le
  \exp\!\left(-\frac{a_t^2}{2ts^2}\right),
\]
which is \eqref{eq:finite-horizon-subgaussian}.

Finally, if $|Z_i-\gamma|\le R$ almost surely, the one-sided Bernstein inequality for independent centered increments with total variance at most $tv$ gives
\[
  \Prob_q(S_t-t\gamma\le-a_t)
  \le
  \exp\!\left(
    -\frac{a_t^2}{2tv+\tfrac23Ra_t}
  \right),
\]
which is \eqref{eq:finite-horizon-bernstein}.

\subsection{Interpreting the Finite-Horizon Crossing Bounds}
\label{app:finite-horizon-guide}

This subsection gives an elementary interpretation of
\cref{cor:finite-horizon}.  It does not introduce a new result; its purpose is
to explain what the three bounds say, why the quantity
$t\overline\Gamma(q;h)-b$ appears, and how the bounds should be used.

\paragraph{Accumulated evidence and the confirmation boundary.}
Write
\[
  \gamma=\overline\Gamma(q;h)>0,
  \qquad
  S_t=\log M_t=\sum_{i=1}^t Z_i,
  \qquad
  b=\log(1/\alpha).
\]
The correction is confirmed at
\[
  \tau^*=\inf\{t\ge1:S_t>b\}.
\]
Thus $S_t$ is the accumulated log evidence and $b$ is the amount of log
evidence required for confirmation.  For example, when $\alpha=0.05$,
\[
  b=\log 20\approx3.
\]
Under the assumptions of \cref{cor:finite-horizon},
\[
  \E_q[S_t]=t\gamma.
\]
Ignoring random fluctuation, the accumulated evidence reaches the boundary
when $t\gamma\approx b$.  This gives the first-order crossing-time heuristic
\begin{equation}
  \tau^*
  \approx
  \frac{b}{\gamma}
  =
  \frac{\log(1/\alpha)}{\overline\Gamma(q;h)}.
  \label{eq:finite-horizon-heuristic-guide}
\end{equation}
A larger average log-score advantage $\gamma$ therefore means faster expected
confirmation, while a more stringent level $\alpha$ raises the boundary and
requires more observations.

\paragraph{Why delayed confirmation is a lower-tail event.}
The event $\{\tau^*>t\}$ means that the process has not crossed the boundary
at any time up to $t$.  In particular, its endpoint must satisfy $S_t\le b$.
Consequently,
\begin{align}
  \{\tau^*>t\}
  &\subseteq
  \{S_t\le b\} \notag\\
  &=
  \left\{
    S_t-t\gamma
    \le
    -\{t\gamma-b\}
  \right\}.
  \label{eq:finite-horizon-delay-event}
\end{align}
When $t\gamma>b$, the mean accumulated evidence is already above the
boundary.  Failure to confirm by time $t$ then requires a downward fluctuation
of at least $t\gamma-b$.  The three inequalities in
\cref{cor:finite-horizon} are simply three ways to bound the probability of
this unfavorable fluctuation.

The inclusion in \eqref{eq:finite-horizon-delay-event} is one-way.  A path may
cross before time $t$ and later return below $b$, in which case $S_t\le b$ but
$\tau^*\le t$.  The concentration bounds may therefore be conservative even
before accounting for looseness in the concentration inequality itself.

\paragraph{Cantelli bound.}
If only the one-step variance bound
$\Var_q(Z_i)\le v$ is available, then
$\Var_q(S_t)\le tv$.  Cantelli's one-sided inequality gives
\[
  \Prob_q(\tau^*>t)
  \le
  \frac{tv}{tv+(t\gamma-b)^2}.
\]
The numerator $tv$ measures accumulated noise, while $(t\gamma-b)^2$ is the
squared evidence margin above the boundary.  In schematic form,
\[
  \text{delayed-confirmation probability}
  \ \lesssim\
  \frac{\text{noise}}
       {\text{noise}+\text{squared signal margin}}.
\]
For large $t$, the bound behaves approximately as
\[
  \frac{v}{t\gamma^2},
\]
so it decreases at the polynomial rate $1/t$.  Its advantage is that it
requires only a finite variance bound.

\paragraph{Sub-Gaussian bound.}
If the centered increments have sub-Gaussian variance proxy $s^2$, then
\[
  \Prob_q(\tau^*>t)
  \le
  \exp\left\{
    -\frac{(t\gamma-b)^2}{2ts^2}
  \right\}.
\]
The same squared evidence margin appears in the numerator, but stronger
tail control yields an exponential bound.  For large $t$,
\[
  \exp\left\{
    -\frac{(t\gamma-b)^2}{2ts^2}
  \right\}
  \approx
  \exp\left\{
    -\frac{t\gamma^2}{2s^2}
  \right\}.
\]
Hence the probability of delayed confirmation can decrease exponentially
rather than at the $1/t$ rate.  The practical usefulness of this bound
depends on the quality of the certified proxy $s^2$: a very loose proxy can
make the exponential bound numerically weak at moderate horizons.

\paragraph{Bernstein bound.}
If the centered increments are bounded by $R$ and have variance at most $v$,
then
\[
  \Prob_q(\tau^*>t)
  \le
  \exp\left\{
    -\frac{(t\gamma-b)^2}
    {2tv+\tfrac23R(t\gamma-b)}
  \right\}.
\]
This bound uses both the typical scale of fluctuation, represented by $v$,
and the largest possible fluctuation, represented by $R$.  It can improve on
a range-based sub-Gaussian bound when the variance is much smaller than the
worst-case range would suggest.  The Cantelli and Bernstein bounds need not
be ordered uniformly: Cantelli can be sharper near the nominal crossing
time, while Bernstein can become substantially sharper at longer horizons.

\paragraph{From a delayed-crossing bound to finite-horizon power.}
Each displayed inequality has the form
\[
  \Prob_q(\tau^*>t)\le B_t.
\]
Equivalently,
\begin{equation}
  \Prob_q(\tau^*\le t)\ge1-B_t.
  \label{eq:finite-horizon-power-guide}
\end{equation}
Thus the corollary provides a conservative lower bound on the probability
that the correction has been confirmed by time $t$.  It does not give the
exact distribution of $\tau^*$, and the bounds are informative only after
the expected accumulated evidence exceeds the boundary, that is, after
$t\gamma>b$.

\paragraph{A numerical illustration.}
Take $\alpha=0.05$, so $b=\log20\approx3$, and suppose
\[
  \gamma=0.1,
  \qquad
  v=0.2,
  \qquad
  s^2=0.2.
\]
The heuristic \eqref{eq:finite-horizon-heuristic-guide} gives
\[
  \tau^*\approx\frac{3}{0.1}=30.
\]
At $t=50$, the expected accumulated log evidence is $5$, only about $2$
above the boundary.  The Cantelli bound is
\[
  \Prob_q(\tau^*>50)
  \le
  \frac{50(0.2)}{50(0.2)+2^2}
  =
  \frac{10}{14}
  \approx0.714,
\]
while the sub-Gaussian bound is
\[
  \Prob_q(\tau^*>50)
  \le
  \exp\left\{-\frac{2^2}{2(50)(0.2)}\right\}
  =
  e^{-0.2}
  \approx0.819.
\]
The exponential bound is not automatically sharper at a short horizon,
especially when its variance proxy is conservative.

At $t=200$, the evidence margin is $20-3=17$.  The two bounds become
\[
  \Prob_q(\tau^*>200)
  \le
  \frac{40}{40+17^2}
  \approx0.122
\]
and
\[
  \Prob_q(\tau^*>200)
  \le
  \exp\left\{-\frac{17^2}{2(200)(0.2)}\right\}
  \approx0.027.
\]
The latter implies
\[
  \Prob_q(\tau^*\le200)\ge0.973.
\]
This example illustrates the basic message: positive mean log evidence
determines the approximate crossing time, while concentration controls how
likely random fluctuation is to delay confirmation beyond a chosen horizon.

\paragraph{Summary.}
The logical chain is
\[
  \begin{gathered}
    \text{positive mean log evidence}\\
    \Downarrow\\
    \text{expected evidence reaches the boundary near }b/\gamma\\
    \Downarrow\\
    \text{concentration bounds the probability of a delayed crossing}.
  \end{gathered}
\]
Finite variance yields a broadly applicable polynomial bound, sub-Gaussian
tails yield an exponential bound, and bounded increments together with a
variance bound yield the Bernstein alternative.  The corollary therefore
strengthens the asymptotic statement of eventual confirmation into an
explicit finite-horizon guarantee.

\subsection{Proof of Corollary~\ref{cor:correct}: Correctly Specified Predictive Correction}
\label{app:proof-correct}
If $q(\cdot\mid x)=\phD{\cdot}{x}$, then the KL decomposition in \eqref{eq:gamma-kl} gives
\begin{align*}
  \Gamma_h(x)
  &=\KL\{\phD{\cdot}{x}\|\pzeroD{\cdot}{x}\}
    -\KL\{\phD{\cdot}{x}\|\phD{\cdot}{x}\}\\
  &=\KL\{\phD{\cdot}{x}\|\pzeroD{\cdot}{x}\}\ge0.
\end{align*}
The adaptive-input conclusion follows from \cref{prop:drift}. Under either
sampling regime of \cref{cor:ergodic}, that corollary yields
$t^{-1}\log M_t\to\gamma_h$ almost surely, and $\gamma_h>0$ implies eventual
crossing. Finally, the conditional KL divergence is nonnegative, so
$\gamma_h=0$ if and only if
$\KL\!\left(\phD{\cdot}{X}\,\middle\|\,\pzeroD{\cdot}{X}\right)=0$ for
$q^X$-almost every $X$, which is equivalent to
$\phD{\cdot}{X}=\pzeroD{\cdot}{X}$ there.

\subsection{Proof of Proposition~\ref{prop:predictable}: Predictable Tilts}
\label{app:proof-predictable}
Condition on $\cG_i$.  By assumption, $h_i(X_i,\cdot)$ and $Z_i(X_i)$ are then fixed functions of the yet-unobserved outcome, while $Y_i\sim\pzeroD{\cdot}{X_i}$ under the null.  Therefore
\[
  \E[e_i\mid\cG_i]
  =\frac{1}{Z_i(X_i)}\int h_i(X_i,y)\pzeroD{y}{X_i}\,dy
  =1.
\]
Iterating conditional expectations and multiplying sequentially gives the e-process property exactly as in \cref{app:proof-eprocess}.

\subsection{Proof of Proposition~\ref{prop:misspecification-control}: False-Confirmation Control under Target Misspecification}
\label{app:proof-misspecification-control}
Under $Y_i\mid\cG_i\sim q_i(\cdot\mid X_i)$,
\[
  \E[e_i\mid\cG_i]
  =\frac{\E_{Y\sim q_i(\cdot\mid X_i)}[h(X_i,Y)]}{\ZhD{X_i}}.
\]
If \eqref{eq:moment-condition} holds at $X_i$ almost surely for every $i$, then this conditional expectation is at most one.  Consequently,
\[
  \E[M_t\mid\cF_{t-1}]
  =M_{t-1}\E[e_t\mid\cF_{t-1}]
  \le M_{t-1},
\]
where the inequality follows by conditioning first on $\cG_t$ and then on $\cF_{t-1}$.  Thus $(M_t)$ is a nonnegative supermartingale and Ville's inequality gives the $\alpha$ bound.  If the pointwise condition holds for every $x$, the argument applies to every input process.  Failure of the condition only removes this supermartingale proof; it does not imply a converse, as explained in \cref{sec:moment-condition-failure}.

\subsection{Geometry of the Protected Half-Space}
\label{app:halfspace-details}
Fix $x$ and abbreviate $P_0=\pzeroD{\cdot}{x}$, $Z=\ZhD{x}$, and
$H(Y)=h(x,Y)$.  Suppose $H$ is not $P_0$-almost surely constant.  If
$P_0(H>Z)=0$, then $H\le Z$ almost surely; nonconstancy then forces
$P_0(H<Z)>0$, whence $\E_{P_0}H<Z$, contradicting $\E_{P_0}H=Z$.
Hence the set $A=\{H>Z\}$ has positive $P_0$-probability.  Let $R=P_0(\,\cdot\mid A)$ and $Q_\varepsilon=(1-\varepsilon)P_0+\varepsilon R$.  Then $R\ll P_0$ and $\E_RH>Z$, so
\[
  \E_{Q_\varepsilon}H
  =(1-\varepsilon)Z+\varepsilon\E_RH>Z
\]
for every $\varepsilon>0$.  Thus $Q_\varepsilon\notin\mathcal H_h(x)$.  Its density relative to $P_0$ is
\[
  \frac{dQ_\varepsilon}{dP_0}
  =1-\varepsilon+\varepsilon\frac{\mathbf1_A}{P_0(A)},
\]
which converges uniformly to one as $\varepsilon\downarrow0$.  Hence total variation, Hellinger distance, and $\KL(Q_\varepsilon\|P_0)$ all converge to zero.  This proves the neighborhood claim.

At the corrected predictive,
\[
  \E_{\phD{\cdot}{x}}\!\left[\frac{H}{Z}\right]
  =\int\frac{H}{Z}\frac{H}{Z}\,dP_0
  =\E_{P_0}\!\left[\left(\frac{H}{Z}\right)^2\right]
  =1+\Var_{P_0}\!\left(\frac{H}{Z}\right),
\]
with the final quantity interpreted in the extended sense.  It exceeds one for every nontrivial tilt.  Finally, if $Q\in\mathcal H_h(x)$, then $\E_Q[e]\le1$, and Jensen gives
\[
  \E_Q[\log e]\le\log\E_Q[e]\le0
\]
whenever the logarithmic expectation is well defined.

\subsection{Proof of Proposition~\ref{prop:overshoot}: Overshoot Identity}
\label{app:proof-overshoot}

\paragraph{The inequality.}
$(M_{t\wedge\tau^*})$ is a nonnegative $P_0$-martingale with $\E_{P_0}[M_{t\wedge\tau^*}]=1$.  As $t\to\infty$,
\[
  M_{t\wedge\tau^*}
  \longrightarrow
  M_{\tau^*}\mathbf1\{\tau^*<\infty\}
  +M_\infty\mathbf1\{\tau^*=\infty\}
  \quad\text{a.s.}
\]
The limit $M_\infty$ exists by nonnegative martingale convergence.  Fatou's lemma gives
\[
  \E_{P_0}[M_{\tau^*}\mathbf1\{\tau^*<\infty\}]
  \le1.
\]
Since $M_{\tau^*}>1/\alpha$ on $\{\tau^*<\infty\}$, factor the left side as
\[
  P_0(\tau^*<\infty)\,
  \E_{P_0}[M_{\tau^*}\mid\tau^*<\infty]
\]
to obtain \eqref{eq:overshoot-ineq}.

\paragraph{The exact identity.}
Let $P_0$ and $P_h$ denote the distributions of the data stream under
$H_0^{\mathrm{pred}}$ and $H_h^{\mathrm{pred}}$, with the same conditional
input mechanism under both. By \cref{lem:ratio}, $P_h\ll P_0$ on each
$\cF_t$ even without strict positivity of $h$; the input factors then cancel,
and
\[
  \frac{dP_h|_{\cF_t}}{dP_0|_{\cF_t}}
  =\prod_{i=1}^t
    \frac{\phD{Y_i}{X_i}}{\pzeroD{Y_i}{X_i}}
  =M_t.
\]
For every finite $t$, $\{\tau^*=t\}\in\cF_t$, so
\[
  P_h(\tau^*=t)
  =\E_{P_0}[M_t\mathbf1\{\tau^*=t\}]
  =\E_{P_0}[M_{\tau^*}\mathbf1\{\tau^*=t\}].
\]
Summing over $t\ge1$ and using monotone convergence gives
\[
  \E_{P_0}[M_{\tau^*}\mathbf1\{\tau^*<\infty\}]
  =P_h(\tau^*<\infty),
\]
which is \eqref{eq:overshoot-exact}; factoring the left side gives
\eqref{eq:overshoot-ratio} whenever $P_0(\tau^*<\infty)>0$.  If
$\log M_t\to+\infty$ $P_h$-almost surely, then
$P_h(\tau^*<\infty)=1$.  This also forces
$P_0(\tau^*<\infty)>0$: otherwise $P_0(\tau^*=t)=0$ for every finite $t$, and
the finite-time change-of-measure identity would give
\[
  P_h(\tau^*=t)
  =
  \E_{P_0}[M_t\mathbf1\{\tau^*=t\}]
  =0
  \qquad\text{for every }t,
\]
contradicting $P_h(\tau^*<\infty)=1$.  Therefore
\eqref{eq:overshoot-equality} follows.

\paragraph{On the uniform-integrability route.}
If $M_\infty=0$ almost surely, uniform integrability of $(M_{t\wedge\tau^*})$ is equivalent to $L^1$ convergence to $M_{\tau^*}\mathbf1\{\tau^*<\infty\}$ and hence to preservation of the expectation at the limit.  It is therefore equivalent to the desired equality rather than an independently checkable sufficient condition.  The change-of-measure argument avoids this circularity.

\subsection{Proof of Proposition~\ref{prop:tolerance}: Composite Tolerance Null}
\label{app:proof-tolerance}
Condition on $\cG_i$ and write $x=X_i$ and $\eta=\eta_i$.  Under $Y_i\sim p_\eta(\cdot\mid x,\Dtr)$,
\[
  \E[\exp\{\Delta\phi(x,Y_i)\}\mid\cG_i]
  =\frac{Z_{\eta+\Delta}(x)}{Z_\eta(x)}.
\]
Therefore
\begin{align*}
  \E[e_i^{\rm tol}\mid\cG_i]
  &=
  \frac{Z_{\eta+\Delta}(x)}{Z_\eta(x)}
  \frac{Z_{\eta_{\rm tol}}(x)}
       {Z_{\eta_{\rm tol}+\Delta}(x)}\\
  &=
  \exp\!\left(
  [\psi_x(\eta+\Delta)-\psi_x(\eta)]
  -[\psi_x(\eta_{\rm tol}+\Delta)-\psi_x(\eta_{\rm tol})]
  \right).
\end{align*}
Because $\psi_x$ is convex, the increment map
\[
  \eta\longmapsto\psi_x(\eta+\Delta)-\psi_x(\eta)
\]
is nondecreasing wherever both endpoints lie in the common interval $\mathcal I$.  Here $\eta+\Delta\in\mathcal I$ follows automatically because $\mathcal I$ is an interval containing $\eta$ and $\eta_{\rm tol}+\Delta$, with $\eta\le\eta+\Delta\le\eta_{\rm tol}+\Delta$.  Thus the exponent is nonpositive when $\eta\le\eta_{\rm tol}$, proving $\E[e_i^{\rm tol}\mid\cG_i]\le1$.  Sequential composition gives the supermartingale and Ville bounds.  Strict convexity is needed only for a strict or converse implication.

\subsection{Proof of Proposition~\ref{prop:correction-panel}: Mixture and Correction Panel}
\label{app:proof-mixture}
For each $\theta$, $(M_t(\theta))$ is a nonnegative martingale with $M_0(\theta)=1$ under the null.  Assume that $(\theta,\omega)\mapsto M_t(\theta)(\omega)$ is jointly measurable with respect to $\mathcal B(\Theta)\otimes\cF_t$, which follows, for example, when $(\theta,x,y)\mapsto h_\theta(x,y)$ is jointly measurable.  Conditional Tonelli then gives
\[
  \E\big[M_t^{\mathrm{mix}}\,\big|\,\cF_{t-1}\big]
  =\int\E[M_t(\theta)\mid\cF_{t-1}]\,d\Pi(\theta)
  =\int M_{t-1}(\theta)\,d\Pi(\theta)
  =M_{t-1}^{\mathrm{mix}},
\]
provided $\Pi$ is fixed before testing.  Hence $M_t^{\mathrm{mix}}$ is a nonnegative martingale.

\subsection{Gaussian Exponential-Tilt Derivation for Section~\ref{sec:exp-tilt}}
\label{app:gaussian-tilt}
Let $\pzeroD{y}{x}=\mathcal N\!\left( y \mid \mu_0(x),\sigma_0^2(x) \right)$ and choose $h_\eta(x,y)=\exp\{\eta g(x)y\}$.  The normalizer is
\[
  Z_\eta(x)=\exp\left\{\eta g(x)\mu_0(x)+\frac{1}{2}\eta^2g^2(x)\sigma_0^2(x)\right\}.
\]
The corrected predictive is Gaussian with mean
\[
  \mu_0(x)+\eta g(x)\sigma_0^2(x)
\]
and unchanged variance $\sigma_0^2(x)$.  Thus an exponential tilt in $g(x)y$ is equivalent to a conditional mean correction whose size scales with the source predictive variance; with $g\equiv1$ and $\eta=\beta$ this is the label-shift equivalence used in the sanity check.

The raw feature $g(x)y$ and the standardized-residual feature
$\phi_{\rm mean}(x,y)=g(x)\{y-\mu_0(x)\}/\sigma_0^2(x)$ are related, but the
parameter mapping must respect heteroscedasticity.  Indeed
\[
  \exp\!\left\{
    \frac{\delta g(x)\{y-\mu_0(x)\}}{\sigma_0^2(x)}
  \right\}
  =
  \exp\!\left\{
    -\frac{\delta g(x)\mu_0(x)}{\sigma_0^2(x)}
  \right\}
  \exp\!\left\{
    \frac{\delta g(x)y}{\sigma_0^2(x)}
  \right\},
\]
and the first factor depends only on $x$, so it cancels in the normalization
\eqref{eq:tilted-general}.  Thus the standardized feature is equivalent to a
raw linear tilt with input-dependent coefficient
$\eta(x)=\delta/\sigma_0^2(x)$.  If the source variance is homoscedastic,
$\sigma_0^2(x)\equiv\sigma_0^2$, this reduces to the scalar relation
$\delta=\eta\sigma_0^2$.  With a scalar $\eta$ and heteroscedastic
$\sigma_0^2(x)$, however, the raw feature $g(x)y$ produces the different mean
correction $\eta g(x)\sigma_0^2(x)$.  The standardized form is used in the main
text because its scalar parameter $\eta=\delta$ is directly the additive shift
multiplier in $\mu_0(x)+\delta g(x)$, which is the parametrization used in
\cref{prop:tolerance,rem:tolerance-midpoint}.

\end{document}